\renewcommand{\hat}{\widehat}
\newtheorem{theorem}{Theorem} 
\newtheorem{proposition}{Proposition} 
\theoremstyle{definition}
\newtheorem{remark}{Remark}  
\newtheorem{condition}{Condition}
\begin{document}

\title{Time-varying $\beta$-model for dynamic directed networks}

\author[1]{Yuqing Du}
\author[1]{Lianqiang Qu  \thanks{Correspondence: {\tt qulianq@ccnu.edu.cn}}}
\author[1]{Ting Yan}
\author[2]{Yuan Zhang}

\affil[1]{School of Mathematics and Statistics, Central China Normal University, Wuhan, Hubei, 430079, China}
\affil[2]{Department of Statistics, The Ohio State University, Columbus, Ohio, 43210, USA}



\date{}

\maketitle

\begin{abstract}
\begin{spacing}{1.2}
We extend the well-known $\beta$-model for directed graphs to dynamic network setting, where we observe snapshots of adjacency matrices at different time points.  We propose a kernel-smoothed likelihood approach for estimating $2n$ time-varying parameters in a network with $n$ nodes, from $N$ snapshots.
We establish consistency and asymptotic normality properties of our kernel-smoothed estimators as either $n$ or $N$ diverges.
Our results contrast their counterparts in single-network analyses, where $n\to\infty$ is invariantly required in asymptotic studies.
We conduct comprehensive simulation studies that confirm our theory's prediction and illustrate the performance of our method from various angles.
We apply our method to an email data set and obtain meaningful results.
\end{spacing}

\vskip 5 pt \noindent
\begin{spacing}{1.4}
\textbf{Key words}:   $\beta$-model; Directed networks; Dynamic networks; Kernel smoothing
\end{spacing}

\end{abstract}

\vskip 5pt

\section{Introduction}
\label{section::introduction}
In statistical network analysis, node degrees play a fundamentally important role, both for its meaningfulness in modeling and high computational efficiency in parametric and nonparametric inference \citep{zhang2022edgeworth}.
Among existing literature, a family of degree-driven models, namely, the $\beta$-model and its variants, received extensive attention and research interest \citep{chatterjee2011random,hillar2013maximum,olhede2012degree}.  Compared to other popular network models such as stochastic block model, the $\beta$-model provides a simple yet elegant way to characterize networks with potentially high heterogeneity.
A particularly attractive advantage of the $\beta$-model is its convenient and highly efficient parameter estimation \citep{chen2019analysis}, see also \citet{hillar2013maximum,zhang2021L2}.
An incomplete list of notable results also includes:
MLE existence \citep{ rinaldo2013maximum},
central limit theorems \citep{yan2013central},
directed and bipartite $\beta$-model \citep{yan2016asymptotics, fan2022asymptotic},
weighted edges \citep{hillar2013maximum}, incorporating covariates \citep{graham2017econometric},
regularization \citep{chen2019analysis, stein2021sparse, zhang2021L2} and so on.
But existing literature on the $\beta$-model almost exclusively focused on static networks;
whereas time-varying data appear in many applications, such as dynamic email networks and online social networks \citep{cui-chen-2023}.
Therefore, it is certainly of interest to build a time-varying $\beta$-model for dynamic networks.

In this paper, we fill in the significant blank in the modeling, methodology and theory of the $\beta$-model for addressing dynamic networks.
Our contributions are three-folds.
First, we propose a time-varying directed $\beta$-model
for characterizing the evolution of dynamic bi-degrees in directed dynamic networks by extending the static $\beta$-model.
The model contains $2n$ time-varying parameters, for which each node is attached to an out-degree varying parameter and
an in-degree varying parameter.
Second, we propose a kernel-smoothed likelihood approach to estimate $2n$ unknown parameter functions by
borrowing information from nearby time points.  This is inspired by the classical $\beta$-model estimation methods for static networks and kernel methods in classical nonparametric statistics.
Third, we establish consistency and asymptotic normality of our proposed estimator.
In contrast to the results for static directed networks \citep{yan2016asymptotics}, our analysis reveals that borrowing information from nearby time points can significantly improve estimation accuracy and relax the assumptions for consistency and asymptotic normality. The results hold even when $n$ stays fixed, as long as the number of observed times goes to infinity.
For a sparse static network with zero in-degrees or zero out-degrees, the MLE in the directed $\beta$-model does not exist. Our kernel smoothed
estimator eliminates this case as long as those nodes are connected to other nodes in the full observation period.
Numerical studies and a real data application demonstrate our theoretical findings.

The rest of this paper is organized as follows.  We first set up some notation;
Section \ref{section::our-method} describes our proposed model, devises estimation methods and inference procedures, and establishes theoretical justifications;
in Section \ref{section::simulations}, we conduct comprehensive simulation studies to address the question of selecting the tuning parameter, assess the performance of our method from different aspects, and validate our theoretical prediction;
in Section \ref{section::data-example}, we apply our method to a real-world data set and interpret the results;
we conclude our paper with some discussion in Section \ref{section::discussion}.

\subsection{Notation}
\label{subsec::notation}
\begin{sloppypar}

We define the notation used in this paper.
For a vector $\mathbf{x}=(x_1, x_2,\ldots x_n)^\top$, define $\| \mathbf{x} \|_{\infty} =\max_{1 \leq i \leq n} |x_i|$.
For a matrix $V:=(v_{i,j})$,
define $\|V\|_\infty$ to be
$
    \|V\|_{\infty}
    = \sup_{\mathbf{x}\neq0}
    \| V\mathbf x \|_{\infty}/
    \| \mathbf x \|_{\infty}= \max_{1\le i\le n}\sum_{j=1}^{n}| V_{i,j} |.
$
Define $\|V\|_{\max}:=\max_{1\leq \{i,j\}\leq n}|V_{i,j}|$.
We will need to approximate the inverse of some $V$ matrix of certain structures.  Specifically, following \citet{yan2016asymptotics}, we call a $(2n-1) \times (2n-1)$ matrix $V = (v_{i,j})$ to belong to the matrix class ${\mathcal L}_n(m,M)$, if $V$ satisfies:
$m \le v_{i,j} - \sum_{j=n+1}^{2n-1}v_{i,j} \le M,\quad i=1, \dots , n-1$;
$v_{n,n} = \sum_{j=n+1}^{2n-1}v_{nj}$;
$v_{i,j} = 0,\quad i,j=1, \dots , n, i \neq j$;
$v_{i,j} = 0,\quad i,j = n+1, \dots, 2n-1,i \neq j$;
$m \le v_{i,j} = v_{ji} \le M,\quad i = 1, \dots, n,j = n+1, \dots, 2n-1, j \neq n+i$;
$v_{i,n+i} = v_{n+i,i} = 0,\quad i = 1, \cdots, n-1$;
and
$v_{i,i} = \sum_{k=1}^{n}v_{ki} = \sum_{k=1}^{n} v_{ik},\quad i = n+1, \cdots, 2n-1$.
If $V \in {\mathcal L}_n(m,M)$, then $V$ is symmetric, element-wise non-negative and diagonally dominant, thus invertible.
Moreover, by \citet{yan2016asymptotics}, $V^{-1}$ can be well-approximated by $S=\mathcal{S}(V)$, as follows
\begin{equation}\label{definition-s}
    s_{i,j}=\left\{
    \begin{aligned}
        &\frac{\delta_{i,j}}{v_{i,i}}+ \frac{1}{v_{2n,2n}}, &i&, j = 1, \dots, n,\\
        &-\frac{1}{v_{2n,2n}}, &i&= 1, \dots, n,~j = n+1, \dots, 2n-1,\\
        &-\frac{1}{v_{2n,2n}}, &i&= n+1, \dots, 2n-1,~j = 1, \dots, n,\\
        &\frac{\delta_{i,j}}{v_{i,i}} + \frac{1}{v_{2n,2n}}, &i&, j = n+1, \dots, 2n-1,
    \end{aligned}
    \right.
\end{equation}
where $\delta_{i,j}:=\mathbbm{1}_{[i=j]}$ and
$v_{2n,i} = v_{i,2n} := v_{ii} - \sum_{j=1,j \neq i}^{2n-1} v_{ij}$ for $i = 1, \dots, 2n-1$ and $v_{2n,2n} = \sum_{i=1}^{2n-1} v_{2n,i}$.
\end{sloppypar}
\section{Our method}
\label{section::our-method}
\subsection{Data structure and our model}

We start with presenting the data structure.
We observe network snapshots at $N$ randomly selected time points $T_1,\ldots,T_N$ selected by a probability density function $f_T(t)$ with sample space $[a,b]$ for some constants $a<b$.
All snapshots share a common node set $[n]=\{1,\ldots,n\}$ across different time points.
At each time point $t\in {\cal T}:=\{T_1,\ldots,T_N\}$, we observe a directed binary network, represented by its adjacency matrix $A(t):=\{A_{i,j}(t)\}_{1\leq \{i, j\}\leq n}\in \{0,1\}^{n\times n}$, where $A_{i,j}(t)=1$ if there is a directed edge from $i$ to $j$ at time $t$, and $A_{i,j}(t)=0$ otherwise.
For simplicity, we assume all edges are generated independently from each other.
We assume no self-loops: $A_{i,i}(t)\equiv0$.

To model this data structure, we propose a time-varying $\beta$-model for directed networks.
At any time $t\in {\cal T}$, each node $i$ is associated with two parameters $\alpha_i(t)$ and $\beta_i(t)$, encoding the strengths of the sender's effect and receiver's effect, respectively.  The edge probability from $i$ to $j$ is
\begin{equation}
    \label{def::directed-beta-model}
    P\big(A_{i,j}(t)=1\big)
    =
    \dfrac{e^{\alpha_i(t)+\beta_j(t)}}{1+e^{\alpha_i(t)+\beta_j(t)}}
    \quad
    \textrm{and}
    \quad
    P\big(A_{i,j}(t)=0\big)
    =
    1-P\big(A_{i,j}(t)=1\big).
\end{equation}
Similar to directed and bipartite $\beta$-models for static networks \citep{yan2016asymptotics, fan2022asymptotic},
here, we need to enforce an additional regularity conditions to ensure parameter identifiability.
For simplicity, we set $\beta_n(t)=0$ \citep{yan2016asymptotics}.
This will also guarantee the uniqueness of the solution to our estimation equation set \eqref{eqn::our-estimation-equation-1} and \eqref{eqn::our-estimation-equation-2},
which we shall present in Section \ref{subsec::parameter-estimation}.
Let
\begin{equation*}
\theta(t):=\Big( \alpha_1(t),\ldots,\alpha_n(t),\beta_1,\ldots,\beta_{n-1}(t) \Big)^\top,
\end{equation*}
to denote all free parameters at $t$.

Next, we describe our assumption on the relationship between the model parameters at different time points.
We naturally anticipate that $\theta(t)$ and $\theta(t')$ should be similar for close-by time points $t\approx t'$, unless there exists a change point in between;
while those parameters at distant time points might possibly be very different (or they might not).
For simplicity, we stick to the continuous $\theta(t)$ setting in our method development and analysis.
We will discuss how our method can be slightly tweaked to handle change points at the end of Section \ref{section::discussion}.

\subsection{Parameter estimation}
\label{subsec::parameter-estimation}

Our estimation method is semi-parametric.
The parametric flavor of our approach is reflected by
\eqref{def::directed-beta-model};
whereas the nonparametric flavor lies in that we utilize the smoothness of $\theta(t)$ over $t$ to enhance estimation accuracy.
Given the observed time points ${\cal T}$, the log-likelihood function is
\small{
\begin{align}
    \label{def::data-partial-likelihood}
    {\cal L}_0(\{(t,\theta(t))\}_{t\in {\cal T}})
    = &~
    \log\Bigg[
    \prod_{t\in {\cal T}}
    \prod_{(i,j): 1\leq \{i\neq j\}\leq n}
    \dfrac
        {e^{A_{i,j}(t)\{ \alpha_i(t) + \beta_j(t) \}}}
        {1+e^{\alpha_i(t) + \beta_j(t)}}
    \Bigg]\notag\\
    = &~
    \sum_{t\in {\cal T}}
    \bigg\{
        \sum_{i=1}^n  \Big(
            d_i(t) \alpha_i(t)
            +
            b_i(t) \beta_i(t)
        \Big)
        -
        \sum_{(i,j):1\leq \{i\neq j\}\leq n}
        \log\Big(
            1+e^{\alpha_i(t)+\beta_j(t)}
        \Big)
    \bigg\},
\end{align}}
where $d_i(t)=\sum_{j:1\leq j\leq n, j\neq i}A_{i,j}(t)$ and $b_j(t)=\sum_{i:1\leq i\leq n, i\neq j}A_{i,j}(t)$ denote the out-degree of node $i$ and the in-degree of node $j$, respectively.
Straightly applying a maximum-likelihood method on \eqref{def::data-partial-likelihood}, separately for each time point, would produce an estimation of $\theta(t)$ using only the $A(t)$ at time $t$.  That is, for every $i\in[n]$ and $j\in[n-1]$, define
\begin{align}
    F_{i;0}(t,\theta(t))
    := &~
    d_i(t) -
    \sum_{\substack{j:1\leq j\leq n\\j\neq i}}\dfrac{e^{\alpha_i(t)+\beta_j(t)}}{1+e^{\alpha_i(t)+\beta_j(t)}},
    \quad
    t\in \{T_1,\ldots,T_N\}
    \label{eqn::original-estimation-equation-1}
    \\
    F_{n+j;0}(t,\theta(t))
    := &~
    b_j(t) -
    \sum_{\substack{i:1\leq i\leq n\\i\neq j}}\dfrac{e^{\alpha_i(t)+\beta_j(t)}}{1+e^{\alpha_i(t)+\beta_j(t)}},
    \quad
    t\in \{T_1,\ldots,T_N\}
    \label{eqn::original-estimation-equation-2}
    \\
    \nonumber
    F_0(t, \theta(t)):=&~(F_1(t, \theta(t)), \ldots, F_{2n-1}(t, \theta(t)) )^\top.
\end{align}
This approach estimates $\theta(t)$ by the solution to $F_0=0$.
Then one can directly apply the algorithm and theory of \citet{yan2016asymptotics}.
We call this method ``point-wise estimation'' throughout this paper and will use it as a benchmark for comparison in our simulation studies.

Our goal is to estimate $\theta(t)$ for all $t\in [a,b]$, not just at those observed time points $t\in {\cal T}$.
This could not be achieved by the point-wise estimator \eqref{eqn::original-estimation-equation-1} and \eqref{eqn::original-estimation-equation-2}.
Our main idea to estimate $\theta(t)$ for all $t\in[a,b]$ is to borrow information from $T_\ell$'s close to $t$ -- in fact, this will also improve the estimation at time points $t\in {\cal T}$.
Bearing this in mind, we generalize \eqref{eqn::original-estimation-equation-1} and \eqref{eqn::original-estimation-equation-2}, replacing the observed out- and in-degrees $d_i(t)$ and $b_i(t)$ at time $t$ by their kernel-smoothed versions incorporating information from nearby time points.
More precisely speaking, define
\begin{align}
    F_i(t,\theta(t))
    := &~
    \sum_{\ell\in [N]} K_h(t-T_\ell)
    \Bigg\{
        d_i(T_\ell)
        -
        \sum_{\substack{j:1\leq j\leq n\\j\neq i}}\dfrac{e^{\alpha_i(t)+\beta_j(t)}}{1+e^{\alpha_i(t)+\beta_j(t)}}
    \Bigg\}
    ,
    \quad
    t\in [a,b],
    1\leq i\leq n
    \label{eqn::our-estimation-equation-1}
    \\
    F_{n+j}(t,\theta(t))
    := &~
    \sum_{\ell\in [N]} K_h(t-T_\ell)
    \Bigg\{
        b_j(T_\ell)
        -
        \sum_{\substack{i:1\leq i\leq n\\i\neq j}}\dfrac{e^{\alpha_i(t)+\beta_j(t)}}{1+e^{\alpha_i(t)+\beta_j(t)}}
    \Bigg\}
    ,
    \quad
    t\in [a,b],
    1\leq j\leq n-1
    \label{eqn::our-estimation-equation-2}
\end{align}
where the kernel function
$
    K_h(u) := (1/h) \cdot K\big( u/h \big)
$
is even, supported on $[-1,1]$ and satisfies $\int_{\mathbb{R}}K(u)du =1$.

Our estimator, denoted by $\hat\theta(t)$, is the solution to the estimation equation set
$
    F(t,\theta(t)) = 0.
$
For a fluent narration, we relegate the choice of the bandwidth $h$ to Section \ref{section::simulation-1} and focus on solving \eqref{eqn::our-estimation-equation-1} and \eqref{eqn::our-estimation-equation-2}, for which we shall employ Newton's method.
The Jacobian matrix of $F(t,\theta(t))/(Nn)$, denoted by $V(t,\theta(t))$, is
\begin{align*}
    V(t,\theta(t))
    := &
    \frac{1}{N}\sum_{\ell\in [N]} K_h(t-T_\ell)
    V_0(t,\theta(t)),
\end{align*}
where $V_0(t,\theta(t)) = V_0(t,(\alpha(t),\beta(t)))$ is further defined as follows
\begin{align*}
    \{V_0(t,\theta(t))\}_{i, n+j}
    = &
    \{V_0(t,\theta(t))\}_{n+i, j}\notag
    :=
    \frac{1}{n}\dfrac{e^{\alpha_i(t)+\beta_j(t)}}{\big\{1+e^{\alpha_i(t)+\beta_j(t)}\big\}^2},
    & i\in [n],~ j\in [n-1],~ i\neq j,
    \\
    \{V_0(t,\theta(t))\}_{i, i}
    := &
    \sum_{\substack{i:1\leq i\leq n\\j\neq i}}
    \{V_0(t,\theta(t))\}_{i, n+j},
    & i\in [n],
    \\
    \{V_0(t,\theta(t))\}_{n+j, n+j}
    := &
    \sum_{\substack{j:1\leq j\leq n-1\\i\neq j}}
    \{V_0(t,\theta(t))\}_{i, n+j},
    & j\in [n-1],
\end{align*}
and $\{V_0(t,\theta(t))\}_{i, j}=0$ otherwise.
Since $V(t,\theta(t))$ is element-wise non-negative and strictly diagonally dominant, it is positive definite.  Therefore, when the solution to the estimation equations \eqref{eqn::our-estimation-equation-1} and \eqref{eqn::our-estimation-equation-2} exists, it must be unique and can be found by a gradient descent or Newton's method \citep{MAL-050}.
We will address the existence of $\hat\theta(t)$ and formally characterize its accuracy in Theorem \ref{theorem-1}.

\subsection{Theoretical properties of the estimator}
Recall that our estimator $\hat\theta(t)$ is obtained by solving \eqref{eqn::our-estimation-equation-1} and \eqref{eqn::our-estimation-equation-2}.
In this section, we establish three aspects of theoretical guarantee for $\hat\theta(t)$:
existence, error rate and asymptotic normality.
Let $\theta^*(t):=\big( \alpha_1^*(t),\ldots,\alpha_n^*(t), \beta_1^*(t),\ldots,\beta_{n-1}^*(t) \big)^\top$ be the true parameters.
Define
\begin{equation*}
    Q_{Nnh}
    =
    \sup_{t\in[a,b]}
    \max_{i,j:1\leq \{i\neq j\}\leq n}
    \dfrac{(1+e^{\alpha^*_i(t)+\beta^*_j(t)})^2}{e^{\alpha^*_i(t)+\beta^*_j(t)}}.
\end{equation*}
Readers familiar with the $\beta$-model literature may immediately notice the analogy between
our $Q_{Nnh}$ and its counterpart, usually denoted by $b_n$ \citep{fan2022asymptotic}, in the static network setting.
Now we describe some mild regularity conditions that we would need.
\begin{condition}
    \label{condition-1}
    The parameters $\theta^*(t)$ is element-wise twice continuously differentiable.
    Moreover, there exists a constant or diverging deterministic series $C_{Nn}$ such that
    \begin{equation*}
        \max_{i\in[2n-1]}\left|\dfrac{d \theta_i^*(t)}{d t}\right| \leq C_{Nn}
        \quad \textrm{and} \quad
        \max_{i\in[2n-1]}\left|\dfrac{d^2 \theta_i^*(t)}{d t^2}\right| \leq C_{Nn},
        \label{condition::theta-derivative-bound}
    \end{equation*}
    where $C_{Nn}$ satisfies
    \begin{equation*}
        C_{Nn} h \to 0~~\text{as}~~Nn\to \infty.
        \label{condition-3-b}
    \end{equation*}
\end{condition}

\begin{condition}
    \label{condition-2}
    Assume $f(t)>0$ for all $t \in [a,b],$ and is twice continuously differentiable.
\end{condition}
\begin{remark}
Condition \ref{condition-1} allows $\theta^*(t)$ to change increasingly rapidly as $Nn\to\infty$.  This is not surprising since a growing $Nn$ brings an increasing amount of information.
Condition \ref{condition-2} is a very mild condition that we choose observation times from $[a,b]$ in a balanced fashion, so no part of the entire time period $[a,b]$ is ignored.
Let $k_{21}:=\int v^2 K(v)dv<\infty.$
Now we are ready to present the main theorem on the existence and uniform consistency of our estimator.
\end{remark}

\begin{theorem}
    \label{theorem-1}
    Suppose Conditions \ref{condition-1} and \ref{condition-2} hold.
    Assume $h \rightarrow 0, Nh\rightarrow \infty$ and either $n$ remains constant or $n\to\infty$.  If
    \begin{equation}
        Q_{Nnh}=o\Bigg\{\frac{1}{\big(\sqrt{C_1\log (Nnh)/(Nnh)}+C_2C_{Nn}h^2\big)^{1/6}}\Bigg\},
        \label{eqn::condition-Q-theorem-1}
    \end{equation}
    then the estimator $\hat\theta(t)$ exists and satisfies
    \begin{align}
        \sup_{t \in [a,b]}\| \hat \theta(t)-\theta^*(t) \|_{\infty}
        =
        O_p\Bigg\{Q_{Nnh}^3\Bigg(\sqrt{C_1\frac{\log (Nnh)}{Nnh}}+C_2C_{Nn}h^2\Bigg)\Bigg\},
        \label{eqn::theorem-1-result}
    \end{align}
    where $C_1>6$ is a constant and $C_2>k_{21}\sup_{t\in[a,b]}(f'(t)+f(t)/2)$.
\end{theorem}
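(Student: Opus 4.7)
I would prove existence of $\hat\theta(t)$ via a Brouwer fixed-point argument around $\theta^*(t)$, and derive the rate by Taylor-expanding $F(t,\cdot)$ about $\theta^*(t)$ and inverting the Jacobian using the explicit approximation $\mathcal{S}(V)$ from \eqref{definition-s}. The argument splits into four steps.

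\textbf{Step 1 (uniform control of the score at the truth).} I decompose $F_i(t,\theta^*(t)) = F_i^{\rm bias}(t) + F_i^{\rm sto}(t)$, where $F_i^{\rm bias}(t)$ is the expectation conditional on $\{T_\ell\}$ and $F_i^{\rm sto}(t)$ is the mean-zero remainder. For the bias, I substitute $d_i(T_\ell)$ by $\sum_{j\neq i} p_{i,j}(T_\ell)$, Taylor expand $\mu_i(\cdot,\theta^*(\cdot))$ at $t$ to second order, and use the evenness of $K$ together with Conditions \ref{condition-1}--\ref{condition-2}; $\int vK(v)dv=0$ kills the first-order term, and $k_{21}$ together with $\sup_t(f'(t)+f(t)/2)$ deliver the constant $C_2$ attached to $C_{Nn}h^2$. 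For the stochastic part, $d_i(T_\ell) - \mathbb{E}[d_i(T_\ell)\mid T_\ell]$ is a sum of $n-1$ independent centered Bernoullis of variance $O(n)$, so a Bernstein inequality applied to the weighted sum $\sum_\ell K_h(t-T_\ell)\{d_i(T_\ell)-\mathbb{E}[d_i(T_\ell)\mid T_\ell]\}$ yields pointwise fluctuation of order $n\sqrt{\log(Nnh)/(Nnh)}$. Uniformity over $t\in[a,b]$ is obtained by discretizing on a mesh of size $1/(Nnh)$ and using the Lipschitz continuity of $K_h$, while a union bound over the $2n-1$ coordinates only inflates the logarithm; after division by $Nn$ this gives $\sup_t\|F(t,\theta^*(t))/(Nn)\|_\infty = O_p(r_{Nnh})$ with $r_{Nnh}:=\sqrt{C_1\log(Nnh)/(Nnh)}+C_2C_{Nn}h^2$.

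\textbf{Step 2 (Jacobian structure).} Writing $V(t,\theta^*(t)) = \bar K_N(t)\cdot V_0(t,\theta^*(t))$ with $\bar K_N(t):=N^{-1}\sum_\ell K_h(t-T_\ell)$, standard kernel concentration under Condition \ref{condition-2} gives $\bar K_N(t) = f(t)\{1+o_p(1)\}$ uniformly in $t$. By definition of $Q_{Nnh}$, the off-diagonal entries of $V_0$ are sandwiched between $1/(nQ_{Nnh})$ and $1/(4n)$, so $V(t,\theta^*(t))\in\mathcal{L}_n(m,M)$ with $m\asymp f(t)/Q_{Nnh}$ and $M=O(1)$. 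The approximate inverse $S=\mathcal{S}(V)$ from \eqref{definition-s} then satisfies $\|S\|_\infty = O(Q_{Nnh})$, and the residual $\|V^{-1}-S\|_{\max}$ can be bounded in the spirit of Proposition~1 of \citet{yan2016asymptotics}, producing an additional factor of $Q_{Nnh}^2$ that gets absorbed into the final cubic dependence.

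\textbf{Step 3 (existence and rate via fixed point).} A second-order Taylor expansion in $\theta$ gives
\begin{equation*}
    F(t,\theta)/(Nn) = F(t,\theta^*(t))/(Nn) + V(t,\theta^*(t))(\theta-\theta^*(t)) + R(\theta),
\end{equation*}
where $\|R(\theta)\|_\infty\le c\|\theta-\theta^*(t)\|_\infty^2$ because $|p''(\eta)|\le p(\eta)(1-p(\eta))\le 1/4$. I define the Newton map $\Phi(\theta):=\theta-V(t,\theta^*(t))^{-1}F(t,\theta)/(Nn)$ and show, using Steps 1--2, that $\Phi$ maps the $\ell_\infty$-ball $B(\theta^*(t);\rho)$ with $\rho\asymp Q_{Nnh}^3 r_{Nnh}$ into itself; condition \eqref{eqn::condition-Q-theorem-1} is exactly the hypothesis needed for the quadratic term $\|V^{-1}\|_\infty\cdot\|R(\theta)\|_\infty\lesssim Q_{Nnh}\cdot Q_{Nnh}^2 r_{Nnh}^2\cdot Q_{Nnh}^2 = o(\rho)$ so that self-mapping holds. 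Brouwer's theorem then delivers a fixed point $\hat\theta(t)\in B(\theta^*(t);\rho)$; uniqueness follows from positive definiteness of $V(t,\theta(t))$ noted after \eqref{eqn::our-estimation-equation-2}. Solving the resulting equation
\begin{equation*}
    \hat\theta(t)-\theta^*(t) = -V(t,\theta^*(t))^{-1}\{F(t,\theta^*(t))/(Nn)+R(\hat\theta(t))\}
\end{equation*}
and combining the linear piece (contributing $Q_{Nnh}\cdot r_{Nnh}$ through $\|S\|_\infty$ plus an extra $Q_{Nnh}^2$ through the $V^{-1}\!-\!S$ approximation error) with the quadratic piece yields the announced rate \eqref{eqn::theorem-1-result}.

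\textbf{Main obstacle.} The single hardest piece is the uniform Bernstein concentration of the stochastic part of the score over $t\in[a,b]$ and $i\in[2n-1]$, since this is what must produce the sharp logarithmic factor $\log(Nnh)$ and interact cleanly with the $h^2$ bias; the balancing of three $Q_{Nnh}$ factors (one from $\|S\|_\infty$, one from the $V^{-1}\!-\!S$ approximation error, one from plugging the linear rate into the quadratic Taylor remainder) is the other delicate bookkeeping issue, and condition \eqref{eqn::condition-Q-theorem-1} is tailored precisely so that these three compound into $Q_{Nnh}^3$ rather than blowing up further.
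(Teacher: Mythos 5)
Your proposal follows essentially the same route as the paper's proof: a Newton--Kantorovich/fixed-point argument around $\theta^*(t)$, with the score at the truth split into a kernel bias term of order $C_2C_{Nn}h^2$ plus a Bernstein-controlled stochastic term of order $\sqrt{C_1\log(Nnh)/(Nnh)}$, the Jacobian placed in $\mathcal{L}_n(m,M)$ with $m\asymp f(t)/(nQ_{Nnh})$ so that the approximate inverse $\mathcal{S}(V)$ and the $V^{-1}-S$ residual together contribute the factor $Q_{Nnh}^3$, and the self-mapping/contraction condition reproducing \eqref{eqn::condition-Q-theorem-1} (the paper packages this last step as a Newton-iteration lemma in its supplement rather than invoking Brouwer, but the mechanism is the same). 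One minor bookkeeping slip: in your quadratic term the remainder should be $\|R(\theta)\|_\infty\lesssim\|\theta-\theta^*(t)\|_\infty^2\asymp Q_{Nnh}^6 r_{Nnh}^2$ rather than $Q_{Nnh}^2 r_{Nnh}^2\cdot Q_{Nnh}^2$, which is exactly what forces $Q_{Nnh}^6 r_{Nnh}=o(1)$, i.e.\ the stated condition \eqref{eqn::condition-Q-theorem-1}, matching the paper's $\rho r=O(Q_{Nnh}^6 r_{Nnh})=o(1)$ computation.
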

The condition \eqref{eqn::condition-Q-theorem-1} in Theorem \ref{theorem-1} appears stronger than what is needed to send the right hand side of \eqref{eqn::theorem-1-result} to zero.
One may naturally wonder if it can be substantively improved.  The answer is ``not easily".
The precise reason is technical, but a quick explanation is that \eqref{eqn::condition-Q-theorem-1} not only contributes to uniform consistency, but also the existence of the solution $\hat\theta(t)$.
It turns out that the existence of solution demands stronger assumptions on $Q_{Nnh}$ in the analysis, but this point is not explicitly reflected in \eqref{eqn::theorem-1-result}.
The phenomenon that the assumption on $Q_{Nnh}$ is stronger than the apparent need (for the error bound to diminish) is frequently reported in many related works on the $\beta$-model, including \citep{chatterjee2011random,yan2016asymptotics, chen2019analysis, stein2021sparse,zhang2021L2}.

The uniform convergence rate in Theorem \ref{theorem-1} has the familiar bias-variance trade-off form, thus the bandwidth should be carefully selected to properly balance bias and variance.
Theoretically, if we know $C_{Nn}$, then we can optimize the choice of $h$ analytically.
For example, if $C_{Nn}$ is constant, then we can choose $h\asymp (Nn)^{-1/5},$
which ensures that the right hand side of \eqref{eqn::theorem-1-result} is $o_p(1).$
But in practice, it might not always be easy to know $C_{Nn}$.
Therefore, in Section \ref{section::simulations}, we develop a leave-one-out cross-validation procedure to select $h$.
We find it to perform well in our numerical studies.



\begin{remark}
The classical literature on the $\beta$-model typically assumes very dense networks, see the overviews of this issue in \citet{chen2019analysis,zhang2021L2} and the references therein.  In our work, to guarantee consistency, we need
$Q_{Nnh}=o\Big\{1/(\sqrt{C_1\log (Nnh)/(Nnh)}+C_2C_{Nn}h^2)^{1/6}\Big\}$.
This yields the following lower bound on network sparsity
\begin{equation}
    \min_{1\le i < j \le n}P\big(A_{i,j}(t)=1\big) \ge \frac{1}{Q_{Nnh}}
    =
    \omega
    \Bigg(\Big\{\sqrt{C_1\frac{\log (Nnh)}{Nnh}}+C_2C_{Nn}h^2\Big\}^{1/6}
    \Bigg).
    \label{remark::density-1}
\end{equation}
To better decipher \eqref{remark::density-1}, let $N = O(n^\psi)$ and notice that
when $\psi>1/4$, we can set $h=O\{(NnC_{Nn})^{-1/5}\}$ and further simplify \eqref{remark::density-1} into
\begin{equation*}
    \Bigg\{\sqrt{C_1\frac{\log (Nnh)}{Nnh}}+C_2C_{Nn}h^2\Bigg\}^{1/6}
    =
    \frac{\log(Nn)^{1/12}C_{Nn}^{1/30}}{(Nn)^{1/15}}.
    \label{discussion::density-2}
\end{equation*}
Consider a simple case where $C_{Nn}=O(1)$.
Compared to the requirements of $\rho_n\gtrsim \log^{-C}n$ for some $C$ in \citet{yan2013central} and $\rho_n\gtrsim n^{-1/12}$ in \citet{yan2016asymptotics}, we accommodate sparser networks (recall $N\gg n^{1/4}$).
Notice that in this simple illustration, our choice of $h$ is still tailored to minimize the right hand side of \eqref{eqn::theorem-1-result}.
If the network becomes sparser, we would need to choose larger $h$ values -- this is understandable, since if we observe too few edges from network snapshots near one time point, the natural thing to do is to expand $h$ to incorporate data from a wider time window; but this of course may inflate bias.
An interesting future work is to handle very sparse networks, for which purpose, some regularization might be necessary \citep{chen2019analysis, stein2021sparse, zhang2021L2}.
\end{remark}

Next, we establish the asymptotic normality of our estimator under mild conditions.
As a preparation, let us set up some shorthand.
Define
$k_{02}:=\int K^2(v)dv$,
$u_{ij}(t):=\big\{e^{\alpha^*_i(t)+\beta^*_j(t)}\big\}/\big\{1+e^{\alpha^*_i(t)+\beta^*_j(t)}\big\}$
and  $\mu_{ij}(t):=u_{ij}'(t)f'(t)+\frac{1}{2}u_{ij}''(t)f(t)$.
To explicitly express the asymptotic distribution formula, define $\bar S(t,\theta(t))=\mathcal{S}(\bar V(t,\theta(t)))$, where recall $\mathcal{S}(\cdot)$ from \eqref{definition-s} and define $\bar V(t,\theta(t))=f(t)V_0(t,\theta(t))$.
Now we are ready to state the result.

\begin{theorem}
\label{theorem-2}
Suppose Conditions \ref{condition-1} and \ref{condition-2} hold.
Assume $Nnh^5C^2_{Nn}Q^2_{Nnh}=o(1)$,
$h \rightarrow 0, Nh \to \infty,$ $n$ either stays constant or diverges, and
\begin{equation*}
Q_{Nnh}=o\Bigg\{\frac{1}{(Nnh)^{1/18}\big(\sqrt{C_1\log (Nnh)/(Nnh)}+C_2C_{Nn}h^2\big)^2}\Bigg\},
\end{equation*}
where $C_1$ and $C_2$ are defined in Theorem \ref{theorem-1}.  Then
for any fixed $1 \le p \le 2n-1$ and $t\in[a,b],$
$\sqrt{Nnh}\Big[\{\hat{\theta}_1(t) - \theta_1^*(t),
\ldots,\hat{\theta}_{p}(t) - \theta_{p}^*(t)\}^\top-k_{21}h^2\{\mu_1(t),\dots,\mu_p(t)\}^\top\Big]$
converges in distribution to a $p$-dimensional multivariate normal, with mean zero and covariance matrix given by the upper left $p\times p$ block of $k_{02}\bar S(t,\theta^*(t))\bar V(t,\theta^*(t))\bar S(t,\theta^*(t))^\top,$
where
\begin{align*}
\mu_q(t)
=&
\Bigg\{\frac{\sum_{j\in [n],j\neq q}\mu_{qj}(t)/n}{f(t)\big\{V_0(t,\theta^*(t))\big\}_{q,q}}\Bigg\}\{1-\mathbbm{1}_{n+1 \le q \le 2n-1}\}
+\Bigg\{\frac{\sum_{j\in [n],j\neq q-n}\mu_{j,q-n}(t)/n}{f(t)\big\{V_0(t,\theta^*(t))\big\}_{q,q}}\Bigg\}{\mathbbm{1}_{n+1 \le q \le 2n-1}}\\
&+(-1)^{n+1 \le q \le 2n-1}\frac{\sum_{j\in[n-1] }\mu_{jn}(t)/n}{f(t)\big\{V_0(t,\theta^*(t))\big\}_{2n,2n}},
\end{align*}
for all $q=1,\ldots,p$,
$\{V_0(t,\theta^*(t))\}_{2n,2n}
=2\sum_{i\in[n]}\{V_0(t,\theta^*(t))\}_{i,i}-\sum_{i\in[n]}\sum_{j\in[2n-1]}\{V_0(t,\theta^*(t))\}_{i,j}.
$
\end{theorem}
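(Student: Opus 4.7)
The plan is to start from the first-order condition $F(t,\hat\theta(t))/(Nn)=0$ and apply a coordinate-wise mean value expansion to write $F(t,\theta^*(t))/(Nn)=V(t,\tilde\theta(t))\bigl(\hat\theta(t)-\theta^*(t)\bigr)$ for some $\tilde\theta(t)$ on the segment joining $\theta^*(t)$ and $\hat\theta(t)$. Inverting gives $\hat\theta(t)-\theta^*(t)=V(t,\tilde\theta(t))^{-1}F(t,\theta^*(t))/(Nn)$. Theorem \ref{theorem-1} together with the smoothness of the logistic map and a uniform law of large numbers for $N^{-1}\sum_\ell K_h(t-T_\ell)\to f(t)$ lets me replace $V(t,\tilde\theta(t))$ with the population-level Jacobian $\bar V(t,\theta^*(t))=f(t)V_0(t,\theta^*(t))$ at negligible cost under the stated conditions.

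I would then split $F(t,\theta^*(t))/(Nn)$ into a stochastic part
$W_i(t)=(Nn)^{-1}\sum_{\ell\in[N]}K_h(t-T_\ell)\sum_{j\neq i}\{A_{i,j}(T_\ell)-u_{ij}(T_\ell)\}$
and a bias part
$B_i(t)=(Nn)^{-1}\sum_{\ell\in[N]}K_h(t-T_\ell)\sum_{j\neq i}\{u_{ij}(T_\ell)-u_{ij}(t)\}$,
with analogous decompositions for the $F_{n+j}$ block. A second-order Taylor expansion of $u_{ij}(T_\ell)$ around $t$ combined with the kernel moment identities $\mathbb{E}[K_h(t-T_\ell)(T_\ell-t)]=k_{21}h^2 f'(t)(1+o(1))$ and $\mathbb{E}[K_h(t-T_\ell)(T_\ell-t)^2]=k_{21}h^2 f(t)(1+o(1))$ yields $\mathbb{E}B_i(t)=k_{21}h^2\cdot n^{-1}\sum_{j\neq i}\mu_{ij}(t)(1+o(1))$ and the concentration of $B_i$ about its mean is of smaller order than $(Nnh)^{-1/2}$ under $Nnh^5C^2_{Nn}Q^2_{Nnh}=o(1)$.

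For the stochastic part, the summands over $\ell$ are i.i.d.\ random vectors whose components satisfy $\mathrm{Var}(W_i(t))\sim(Nnh)^{-1}k_{02}f(t)\{V_0(t,\theta^*(t))\}_{i,i}$, while $W_i$ and $W_{n+j}$ share the edge $A_{i,j}(T_\ell)$ and so have covariance $(Nnh)^{-1}k_{02}f(t)\{V_0(t,\theta^*(t))\}_{i,n+j}$. Hence $Nnh\cdot\mathrm{Cov}(W(t))\to k_{02}\bar V(t,\theta^*(t))$. Since the theorem only concerns the first $p$ coordinates with $p$ fixed, a fixed-dimensional Lindeberg--Feller CLT applied to the linear combination $c^\top$ of the first $p$ entries of $\bar S(t,\theta^*(t))W(t)$ yields the limiting normal law; the Lindeberg condition follows from boundedness of $K$, of the edge indicators, and from $Nh\to\infty$.

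Putting everything together, $\hat\theta(t)-\theta^*(t)=\bar V(t,\theta^*(t))^{-1}\{W(t)+B(t)\}+(\text{higher order})$, and replacing $\bar V^{-1}$ by $\bar S$ contributes only a negligible error because $\|\bar V^{-1}-\bar S\|_{\max}=O(n^{-2})$ by the approximation theory in \citet{yan2016asymptotics}. Applying $\bar S$ to $\mathbb{E}B(t)$ and simplifying via the block structure of $\bar S$ from \eqref{definition-s} produces exactly the bias $k_{21}h^2\mu_q(t)$ in the theorem, while the limiting covariance emerges as the upper-left $p\times p$ block of $k_{02}\bar S\bar V\bar S^\top$. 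The most delicate step is bounding the quadratic remainder of the mean value expansion uniformly when $n$ diverges: that remainder is quadratic in $\hat\theta-\theta^*$ and multiplied by a third derivative of the log-likelihood scaling with $Q_{Nnh}$, and after left-multiplication by $\bar S$ (whose entries are of order $Q_{Nnh}/n$) it must still be $o_p((Nnh)^{-1/2})$. This tension is precisely what forces the sharpened $Q_{Nnh}$ hypothesis and the additional bandwidth-sparsity trade-off $Nnh^5C^2_{Nn}Q^2_{Nnh}=o(1)$, and it is the main obstacle that the technical work must resolve.
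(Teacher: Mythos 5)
Your proposal follows essentially the same route as the paper's proof: linearize the estimating equation around $\theta^*(t)$, split $F(t,\theta^*(t))$ into a mean (bias) part handled by second-order kernel moment expansions yielding $k_{21}h^2\mu_{ij}(t)$, and a centered part handled by a fixed-dimensional CLT (the paper verifies Lyapunov's condition), then replace $V^{-1}$ by the approximate inverse $\bar S$ and check that the resulting covariance $\bar S\Omega\bar S^\top$ differs from $k_{02}\bar S\bar V\bar S^\top$ only by an $O(hC_{Nn}Q^2_{Nnh})=o(1)$ term. The only organizational difference is that the paper uses an explicit second-order Taylor expansion about $\theta^*(t)$ with a Lagrange remainder $\hat\gamma$ bounded via $|e^x(1-e^x)/(1+e^x)^3|\le 1$, rather than your coordinate-wise mean-value form $V(t,\tilde\theta(t))$; this keeps the Jacobian evaluated at $\theta^*(t)$, hence inside the matrix class $\mathcal{L}_n(m,M)$ whose approximate inverse $\bar S$ is available, which is slightly cleaner but not a genuinely different method.
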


\begin{remark}
We present two quick understandings of the result of Theorem \ref{theorem-2}.
First, this theorem allows $n$ to be fixed or divergent.  If $n\to\infty$, then $p\leq n$ always holds after a certain point.
Second, the bias term $\{\sqrt{Nnh}k_{21}h^2\mu_{p}(t)\}$
in Theorem \ref{theorem-2} is uniformly bounded by $O(\sqrt{Nnh}C_{Nn}Q_{Nnh}h^2)$, thus asymptotically diminishing.

\end{remark}

To put the result of Theorem \ref{theorem-2} to practice, we estimate the asymptotic variance.
Note that $k_{02}$ is known and
\begin{equation*}
 \Sigma(t,\theta(t)):= \bar{S}(t,\theta(t)) \bar{V}(t,\theta(t)) \bar{S}(t,\theta(t))^\top
\end{equation*}
is indeed determined by $\bar{V}(t,\theta(t))$
where recall the definition of $\mathcal{S}(\cdot)$ from Section \ref{subsec::notation}.  Setting
\begin{align}
\hat V(t,\hat\theta(t))
= &
\frac{1}{N}\sum_{\ell \in [N]}K_h(t-T_{\ell})V_0(t,\hat\theta(t))
\label{eqn::estimate-V}
\end{align}
and $S(t,\hat\theta(t))=\mathcal{S}(\hat V(t,\hat\theta(t))),$
the estimated asymptotic variance is then $k_{02}\cdot\hat\Sigma(t,\widehat\theta(t))$, where
\begin{equation*}
\hat\Sigma(t,\hat\theta(t))= S(t,\hat\theta(t))\hat V(t,\hat\theta(t)) S(t,\hat\theta(t))^\top.
\notag
\end{equation*}
Now we confirm that the variance estimator is indeed consistent.

\begin{proposition}
\label{proposition-1}
Suppose that Conditions \ref{condition-1} and \ref{condition-2} hold.
If $Nh \to \infty$ as $N\rightarrow \infty$, then we have
\begin{equation*}
 \|\hat\Sigma(t,\hat\theta(t))-\Sigma(t,\theta^*(t))\|_{\max}=o_p(1).
\end{equation*}
\end{proposition}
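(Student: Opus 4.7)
The plan is to exploit the product structure $\Sigma = \bar S \bar V \bar S^\top$ and establish entrywise convergence factor by factor. The key observation is that every entry of $\mathcal{S}(V)$ from \eqref{definition-s} is a linear combination of reciprocals of diagonal entries of $V$ (and of $v_{2n,2n}$, itself a linear function of the entries of $V$), which renders $\mathcal{S}$ homogeneous of degree $-1$: $\mathcal{S}(cV) = c^{-1}\mathcal{S}(V)$ for any scalar $c>0$. Writing $\hat f_h(t) := N^{-1}\sum_{\ell\in[N]}K_h(t-T_\ell)$ for the Parzen--Rosenblatt kernel density estimator, we have $\hat V(t,\hat\theta(t)) = \hat f_h(t)\,V_0(t,\hat\theta(t))$ and $\bar V(t,\theta^*(t)) = f(t)\,V_0(t,\theta^*(t))$, and homogeneity yields
\begin{equation*}
\hat\Sigma(t,\hat\theta(t)) = \hat f_h(t)^{-1}\,\mathcal{S}(V_0(t,\hat\theta(t)))\,V_0(t,\hat\theta(t))\,\mathcal{S}(V_0(t,\hat\theta(t)))^\top,
\end{equation*}
with an analogous identity for $\Sigma(t,\theta^*(t))$ after replacing $\hat f_h, \hat\theta$ by $f, \theta^*$.

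Next I would establish three primitive convergences. First, classical pointwise consistency of the kernel density estimator gives $\hat f_h(t) - f(t) = o_p(1)$ under $h \to 0$, $Nh \to \infty$, and Condition \ref{condition-2}; combined with $f(t)>0$, this yields $\hat f_h(t)^{-1} \to f(t)^{-1}$ in probability. Second, each nonzero entry of $V_0$ is $n^{-1}$ times a uniformly Lipschitz function of $\alpha_i(t) + \beta_j(t)$, so the consistency of $\hat\theta(t)$ from Theorem \ref{theorem-1} (implicitly needed for $\hat\Sigma$ to be well-defined at the relevant rate) propagates to $\|V_0(t,\hat\theta(t)) - V_0(t,\theta^*(t))\|_{\max}\to 0$ in probability. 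Third, since each diagonal entry $\{V_0(t,\theta^*(t))\}_{i,i}$ is a sum of $O(n)$ terms each bounded below by $(nQ_{Nnh})^{-1}$, it is bounded below by a constant multiple of $Q_{Nnh}^{-1}$, and similarly for $v_{2n,2n}/n$; the Lipschitz bound $|1/x - 1/y| \le |x-y|\max(x^{-2}, y^{-2})$ then gives $\|\mathcal{S}(V_0(\hat\theta)) - \mathcal{S}(V_0(\theta^*))\|_{\max} = o_p(1)$ under the $Q_{Nnh}$ restriction of Theorem \ref{theorem-1}.

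Finally I would combine these convergences through the telescoping identity
\begin{equation*}
\hat S\,\hat V_0\,\hat S^\top - S^*\,V_0^*\,(S^*)^\top = (\hat S - S^*)\,\hat V_0\,\hat S^\top + S^*\,(\hat V_0 - V_0^*)\,\hat S^\top + S^*\,V_0^*\,(\hat S - S^*)^\top
\end{equation*}
together with the scalar difference $(\hat f_h(t)^{-1} - f(t)^{-1})\,S^*\,V_0^*\,(S^*)^\top$. Entry $(i,j)$ of each summand on the right-hand side is a double sum of $O(n^2)$ contributions with mixed scaling ($O(n^{-1})$ for off-diagonal entries of $V_0$, and $O(1)$ for diagonals and $O(n^{-1})$ for off-diagonals of $\mathcal{S}$), so careful bookkeeping shows that each entry is $o_p(1)$ in max norm.

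The principal obstacle is the third convergence and the related telescoping accounting: the Lipschitz constant of $\mathcal{S}$ scales like $Q_{Nnh}^2$, so it cannot be allowed to overwhelm the rate of convergence in the second step. Using the rate $O_p(Q_{Nnh}^3(\sqrt{\log(Nnh)/(Nnh)} + C_{Nn}h^2))$ from Theorem \ref{theorem-1} for the uniform estimator error, the composed perturbation rate in $\mathcal{S}$ is $O_p(Q_{Nnh}^5(\cdot))$; the hypothesis $Q_{Nnh} = o((\sqrt{\log(Nnh)/(Nnh)} + C_{Nn}h^2)^{-1/6})$ in Theorem \ref{theorem-1} then reduces this to $o(1)$. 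Once this scaling check is secured, the remaining matrix algebra is mechanical.
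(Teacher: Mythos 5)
Your proposal follows essentially the same route as the paper's proof: both exploit the degree-$(-1)$ homogeneity of $\mathcal{S}$ to factor the kernel density estimator $\hat f(t)$ out of $S(t,\hat\theta(t))$, then control the resulting product via an entrywise decomposition that tracks the $O(1)$ versus $O(1/n)$ scaling of diagonal and off-diagonal entries, the uniform consistency of the density estimator, and $Q_{Nnh}$-power counting against the rate from Theorem \ref{theorem-1}. Your three-factor telescoping is marginally more explicit than the paper's two-term split about the $S_0(t,\hat\theta(t))$ versus $S_0(t,\theta^*(t))$ perturbation, but the argument is the same in substance.
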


In practice, we first estimate $\hat V(t, \hat\theta(t))$ using \eqref{eqn::estimate-V}; then plug it into the approximate inversion formula $\mathcal{S}(\cdot)$ in Section \ref{subsec::notation} to obtain $S(t,\hat\theta(t))$ and $\hat\Sigma(t,\hat\theta(t))$.

\section{Simulations}
\label{section::simulations}
In this section, we conduct numerical experiments to gain some empirical understandings in several aspects of our method, including:
the impact of different $h$ choices, estimation accuracy, computation speed and the match between numerical result and our theory's prediction.
The code is available on the online Supplementary Material.
Throughout all experiments, we set $[a,b] = [0.1,0.9]$ and $f(t)\sim$~Uniform$(a,b)$.
To evaluate our method's effectiveness in estimating different parameter evolution dynamics, we set up $\alpha_i^*(t)$'s and $\beta_j^*(t)$'s each of four different types, as is shown in Table \ref{tab::simulation-set-up1}.
\begin{table}[h!]
    \centering
    \caption{Simulation parameter set up.}
    {\small\def\temptablewidth{0.994\textwidth}
{\rule{\temptablewidth}{1pt}}
	\begin{adjustbox}{max width=\textwidth}
		\begin{tabular}{ccccc}
        Index & $1\leq i\leq n/4$ & $n/4+1\leq i\leq 2n/4$ & $2n/4+1\leq i\leq 3n/4$ & $3n/4+1\leq i\leq n-1$ \\\hline
        $\alpha_i^*(t)$ &
            $-0.8(3t-0.6)$ & $-3.6(t-0.3)^2/(t-3)$ & $-0.4^t$ & $-3.2(t-0.5)^2/(1+t^2)$ \\\hline
        $\beta_i^*(t)$ &
            $2(t-0.5)^2/(t-2)$ & $-1.6t(t-0.3)^3+(t-0.2)^2+0.2t$ & $-1.8(t+0.6)\sin(0.2\pi t)$ & $-1.6(t-0.2)^2\sin(\pi t)$ \\\hline
    \end{tabular}
    \label{tab::simulation-set-up1}
    \end{adjustbox}}
\end{table}

Set $\alpha_n^*(t)=\alpha_{n-1}^*(t)$ and $\beta^*_n(t)\equiv 0$ for model identifiability.  For narration simplicity, we stick to the kernel function $K(x)=0.75(1-x^2)\mathbbm{1}_{[|x|<1]}$ in all simulations.

\subsection{Simulation 1: Select the tuning parameter $h$}
\label{section::simulation-1}
In this simulation, we consider two settings:
$(n,N)=(40,100)$ and $(160,200)$.
The purpose of this simulation is to compare the optimal $h$'s between these two settings and check if their ratio matches the theoretical suggestion of our Theorem \ref{theorem-1}.
In each setting,
we select the optimal $h$ from the range $[0.05,0.3]$ by a
leave-one-out
cross-validation with loss function
\begin{equation*}
    L_{\rm CV}(h)
    :=
    \sum_{\ell=1}^N\|A(T_\ell) - \hat W_h(T_\ell)\|_F^2,
\end{equation*}
where $A(T_\ell)$ is the observed adjacency matrix at time $t=T_\ell$, and $\hat W_h(T_\ell)$ is the estimated edge probability matrix based on data at times $\{T_{\ell'}: \ell'=1,\ldots,N, \ell'\neq \ell\}$.

\begin{figure}[h!]
\centering
    \includegraphics[width=0.4\textwidth]{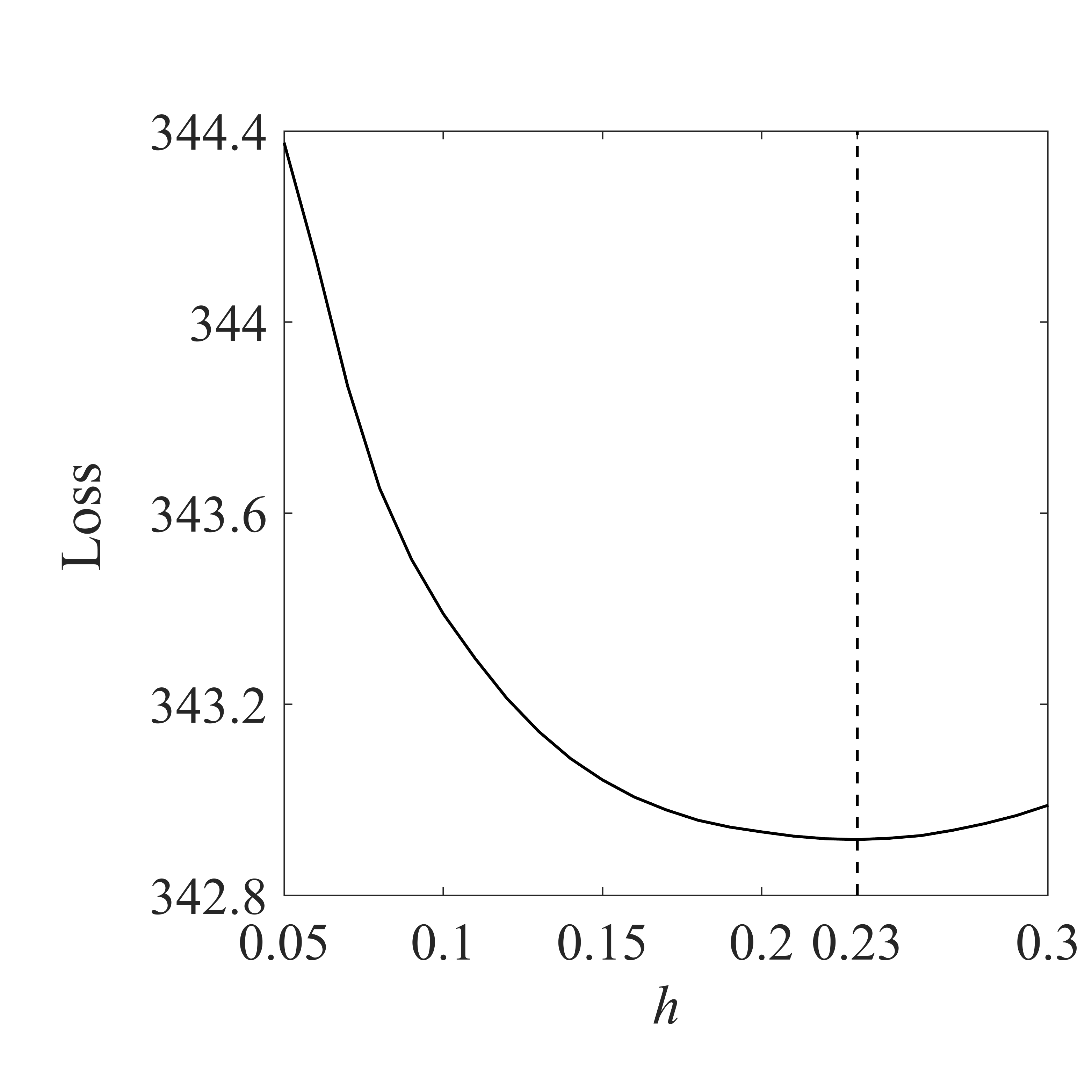}
    \includegraphics[width=0.4\textwidth]{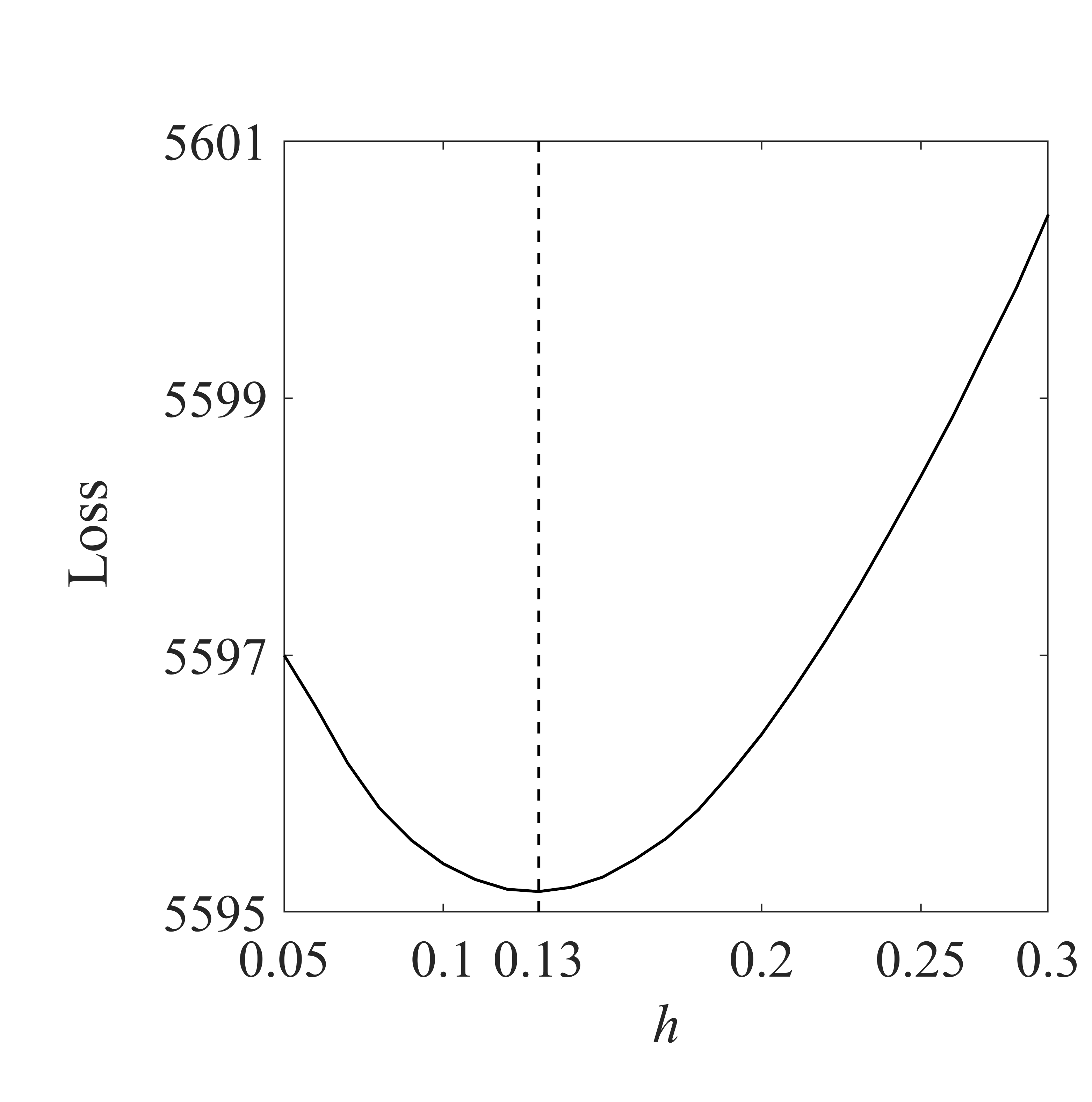}
    \caption{
    Left:
    $n=40,N=100$; Right: $n=160,N=200$.  Both plots show only one experiment.}
    \label{fig::simulation-1}
\end{figure}
We repeated experiments 50 times for both settings.  The average optimal $h_{\rm opt}^{(n,N)}$ selected by cross-validation in the two settings are $\bar h_{\rm opt}^{(40,100)}=0.225$ and $\bar h_{\rm opt}^{(160,200)}=0.140$.  Their ratio is $0.622$, which well-matches our Theorem \ref{theorem-1}'s prediction of $\{(40\times 100)/(160\times 200)\}^{(1/5)}=0.659$.

\subsection{Simulation 2: RMSE diminishing rate and time cost}
\label{section::simulation-2}

In this simulation, we focus on evaluating estimation errors as the network size $n$ and number of time points $N$ vary.  Following the result of Simulation 1 and our theory, we set $h=0.23\big\{ (40\times 100)/(Nn) \big\}^{1/5}$ from this simulation onward.  We vary $n\in\{40,80,160,320\}$ and $N\in\{25,50,100,200\}$.  The estimation error is measured by the following RMSE:
\begin{equation*}
{\rm RMSE}=\sqrt{\frac{1}{N}\sum_{\ell=1}^N(\hat\theta(T_\ell)-\theta^*(T_\ell))^2}
\end{equation*}
We repeat all experiments 100 times and record the average value of RMSE.
The estimation errors and running times are reported in Figure
\ref{fig::simulation-2}.
\begin{figure}[ht!]
    \centering
    \includegraphics[width=0.32\textwidth]{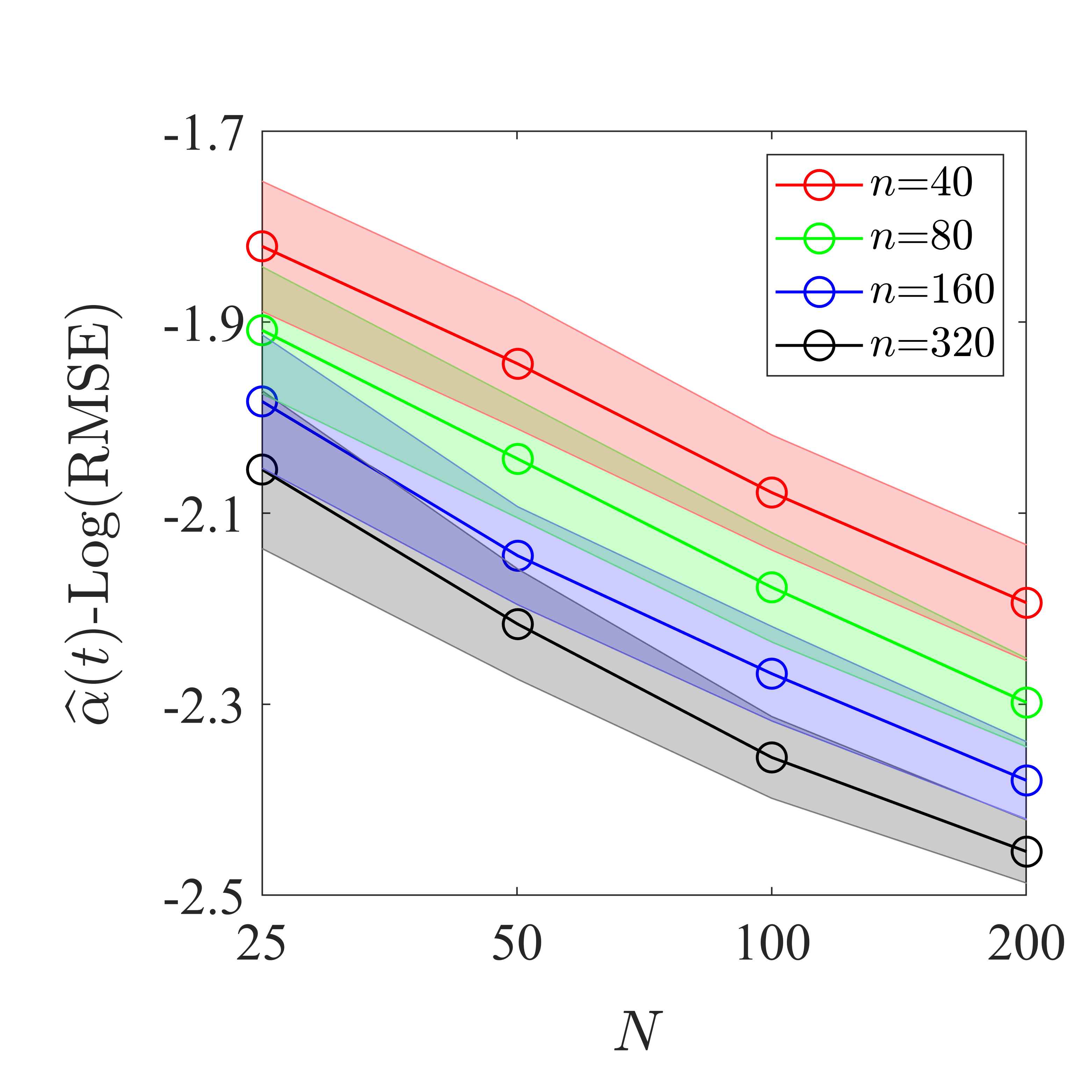}
    \includegraphics[width=0.32\textwidth]{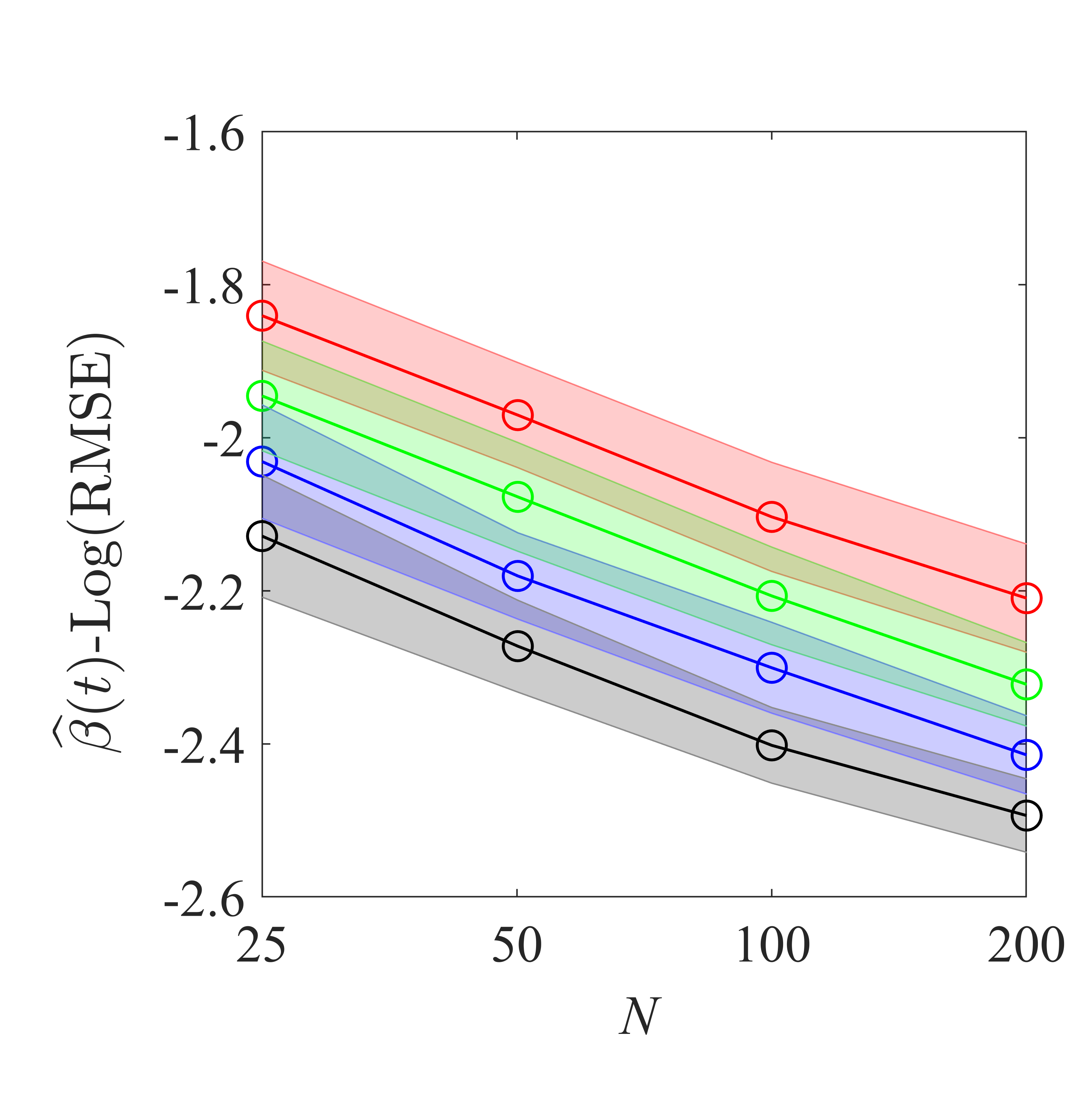}
    \includegraphics[width=0.32\textwidth]{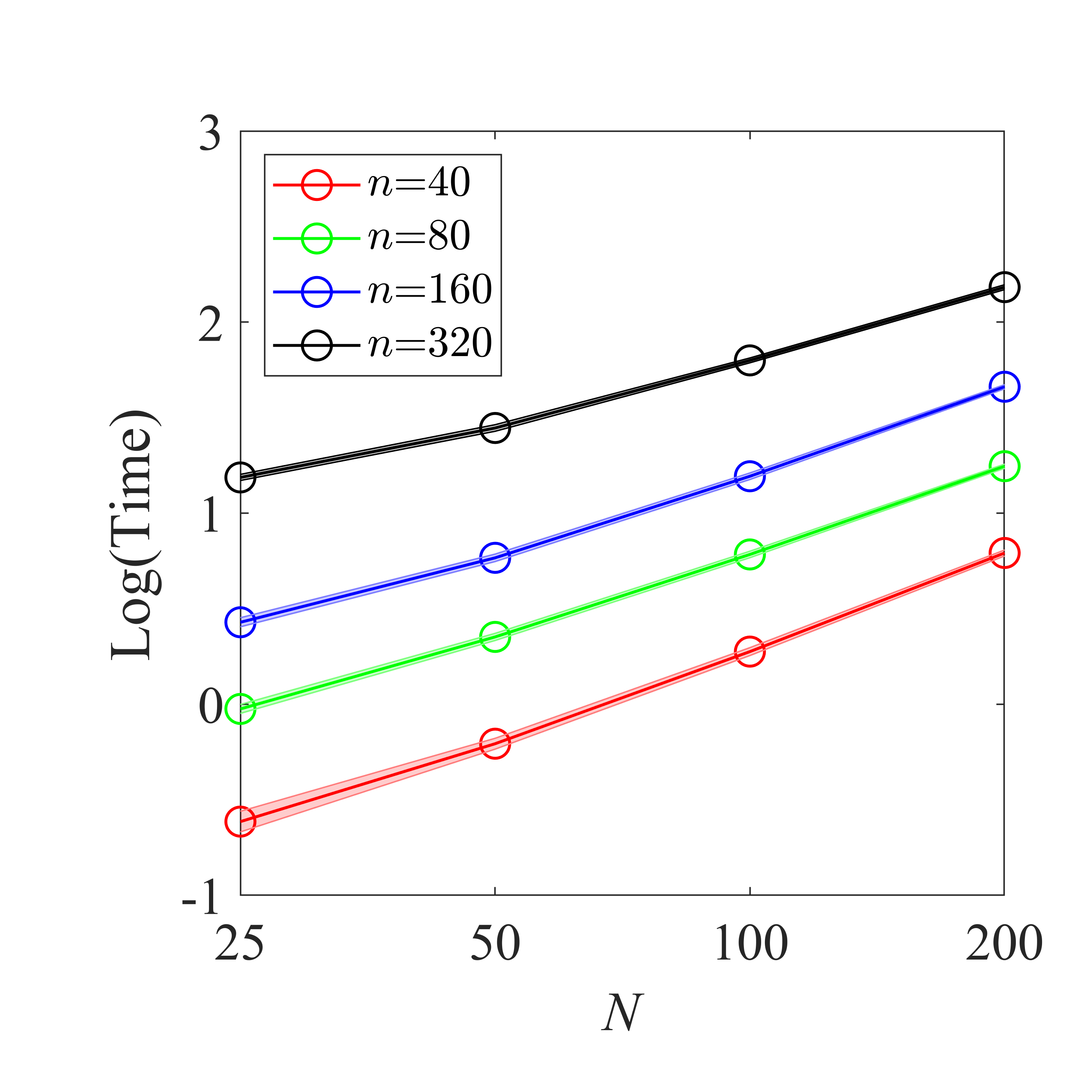}
    \caption{Left and Middle: log estimation errors for $\alpha(t)$ and $\beta(t)$ parameters, respectively.  Right: log time cost.  Shades indicate Monte Carlo standard deviations across repeated experiments.}
    \label{fig::simulation-2}
\end{figure}
From the plots, we see that $\log({\rm RMSE})$ diminishes at the rate roughly $n^{-1/5}$.  Our theory predicts an error rate of $O(Q_{Nnh}^3 \cdot n^{-2/5})$ that agrees with the observed result of this experiment.  Specifically, the time cost seems linear in $N$, which also matches our theoretical prediction; also, the vertical increment in log time cost as $n$ doubles is reasonably close to $\log2 = 0.69$, matching our theoretical understanding that the computation complexity scales quadratically in $n$.

\subsection{Simulation 3: Fitted curves with comparison to point-wise method}
\label{section::simulation-3}
To assess the performance of our method, we plot the true curve $\theta(t)$ and compare it to $\hat\theta(t).$
We compare our method with two alternative approaches:
(1) the point-wise method, which estimates $\theta(t)$ at each $t\in \{T_1,\ldots,T_N\}$;
and (2) the smoothed point-wise method as described in Section \ref{subsec::parameter-estimation},
and then smooth the resulting estimates using a kernel smoothing procedure.
Here, we set $n=160, N=200$ and $h=0.13$ as suggested by the result of Simulation 1.
We repeat the experiment 100 times.
See Table \ref{tab::simulation-set-up1} for the settings of $\alpha^*_i(t)$ and $\beta_{j}^*(t)$.

Figures \ref{fig::simulation-31} and \ref{fig::simulation-32} illustrate the result; and the numerical outputs are summarized in Tables \ref{tab::simulation3:table-3} and \ref{tab::simulation3:table-4}.
Our method shows smaller variance and significantly smaller bias compared to the smoothed point-wise method and a much bigger advantage over the point-wise method.
As aforementioned, our method can borrow data from neighboring time points to improve estimation;
whereas the point-wise method's curve fluctuates across different time points due to a limited $n$ -- this aligns with our theoretical understanding by comparing our Theorem \ref{theorem-1} with classical error rate results at one time point, c.f. \citet{yan2016asymptotics}.

\begin{figure}[ht!]
    \centering
    \includegraphics[width=1\textwidth]{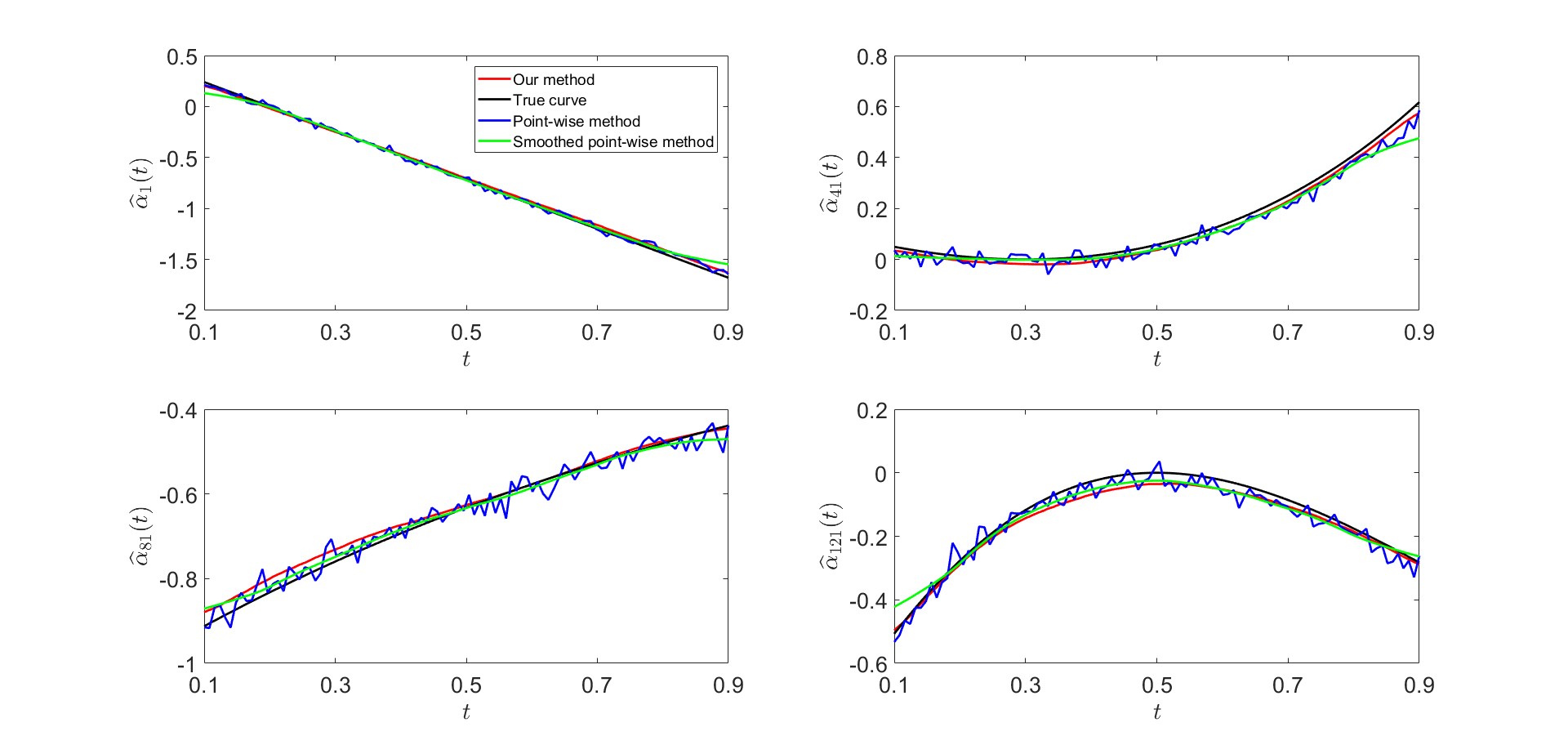}
\includegraphics[width=1\textwidth]{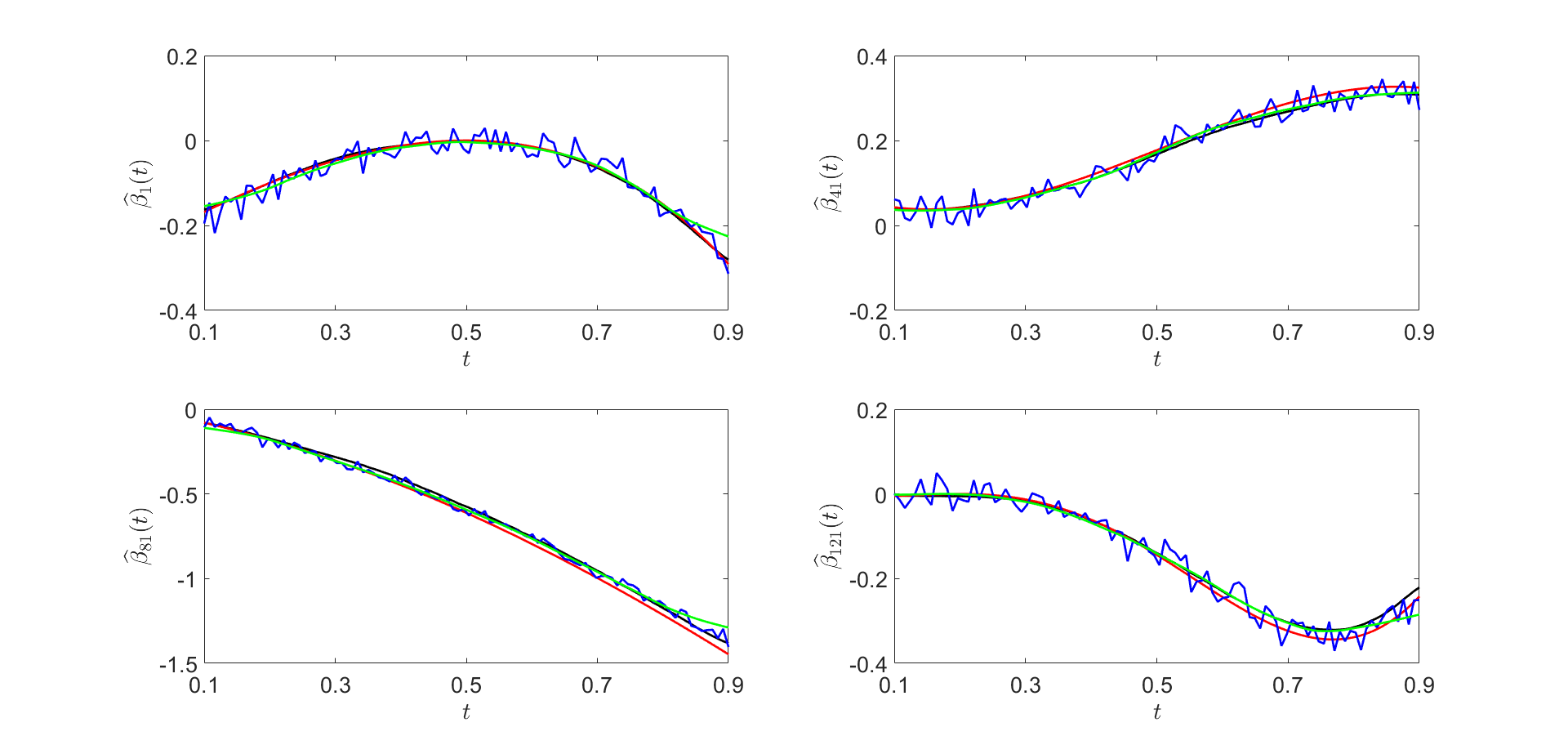}
    \caption{The estimate results of $\alpha_i^*(t)$ and $\beta_{i}^*(t)~(i=1,~ 41,~ 81,~ 121)$ with $n=160,N=200$ and $h=0.13.$}
    \label{fig::simulation-31}
\end{figure}


\begin{figure}[h!]
    \centering
    \includegraphics[width=1\textwidth]{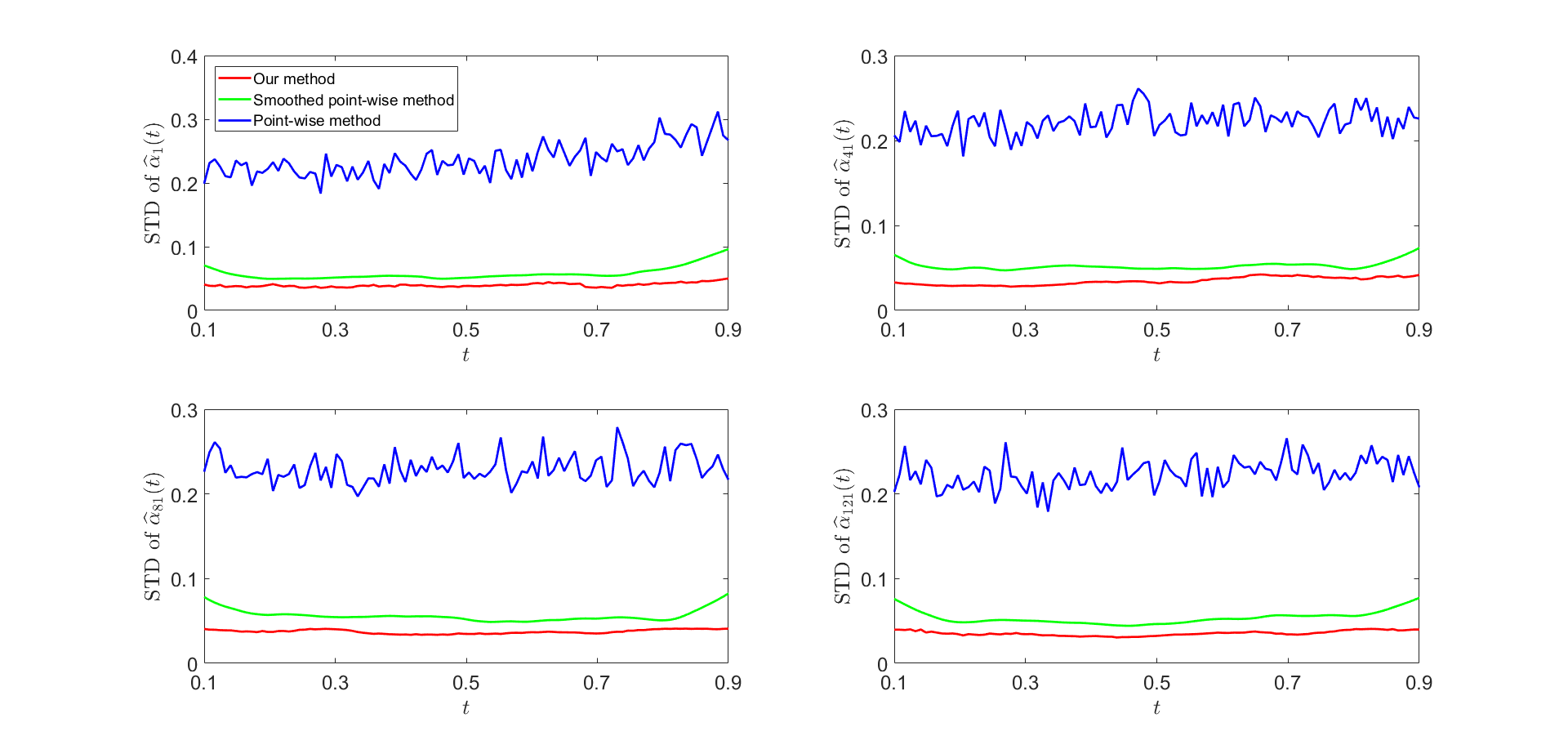}
    \includegraphics[width=1\textwidth]{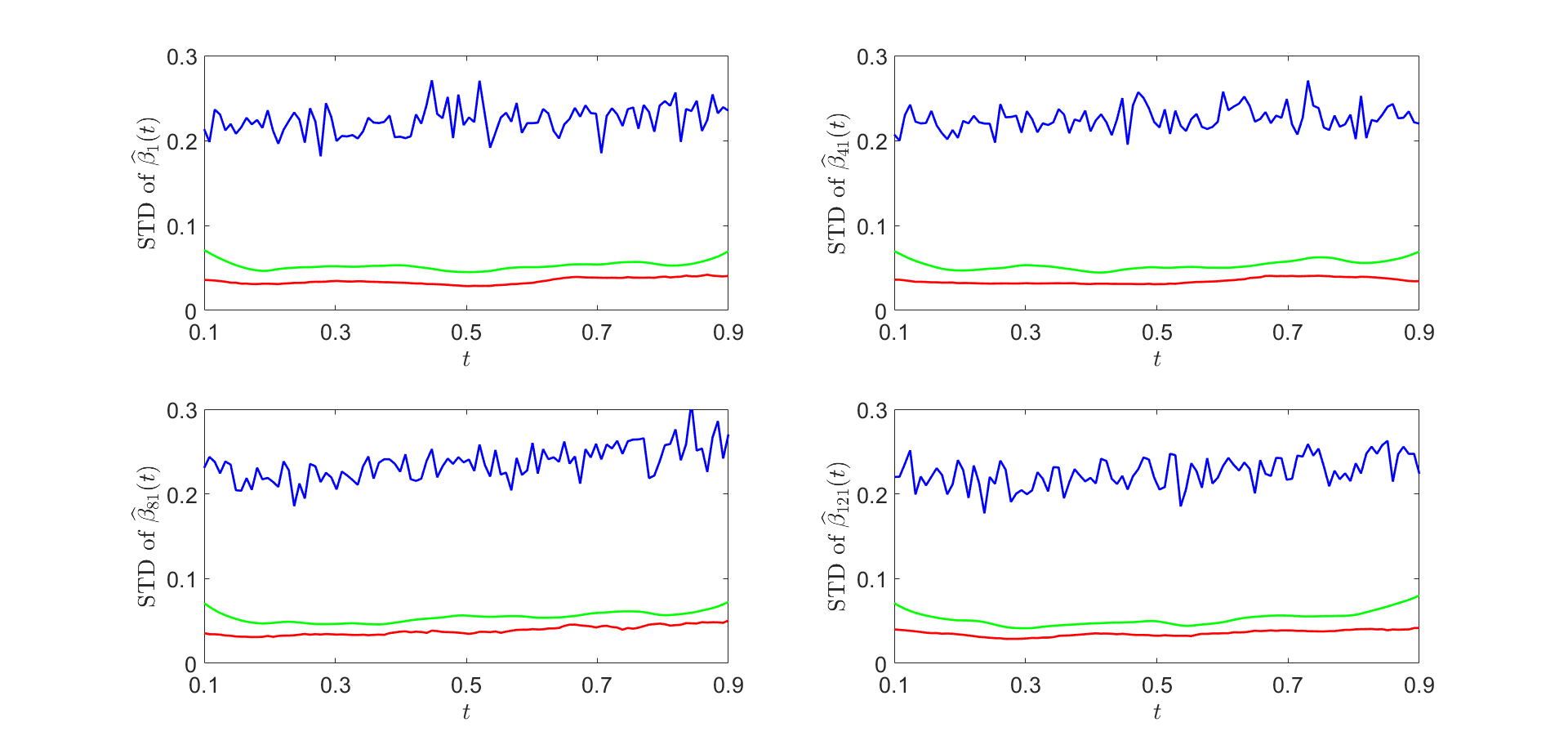}
    \caption{The sample standard deviations (SD)
    of estimators $\widehat\alpha_i(t)$ and $\widehat\beta_{i}(t)~(i=1,~ 41,~ 81,~ 121)$ with $n=160, N=200$ and $h=0.13.$}
    \label{fig::simulation-32}
\end{figure}

Next, we experiment in a sparse network setting.
The settings for parameters are given in Table \ref{tab::simulation-set-up2}. Here we set $n=40, N=100$ and $h=0.23$ as suggested by the result of Simulation 1.
The results are shown in Figure
\ref{fig::simulation-51}.
We see that the estimators of the point-wise method
do not exist at some time points, and this also impacts the smoothed point-wise method because it smooths the result of the point-wise estimation method.
In sharp contrast, our method works stably by incorporating data from neighboring time points at the earliest stage in the estimation.

\begin{figure}[h!]
    \centering
    \includegraphics[width=1\textwidth]{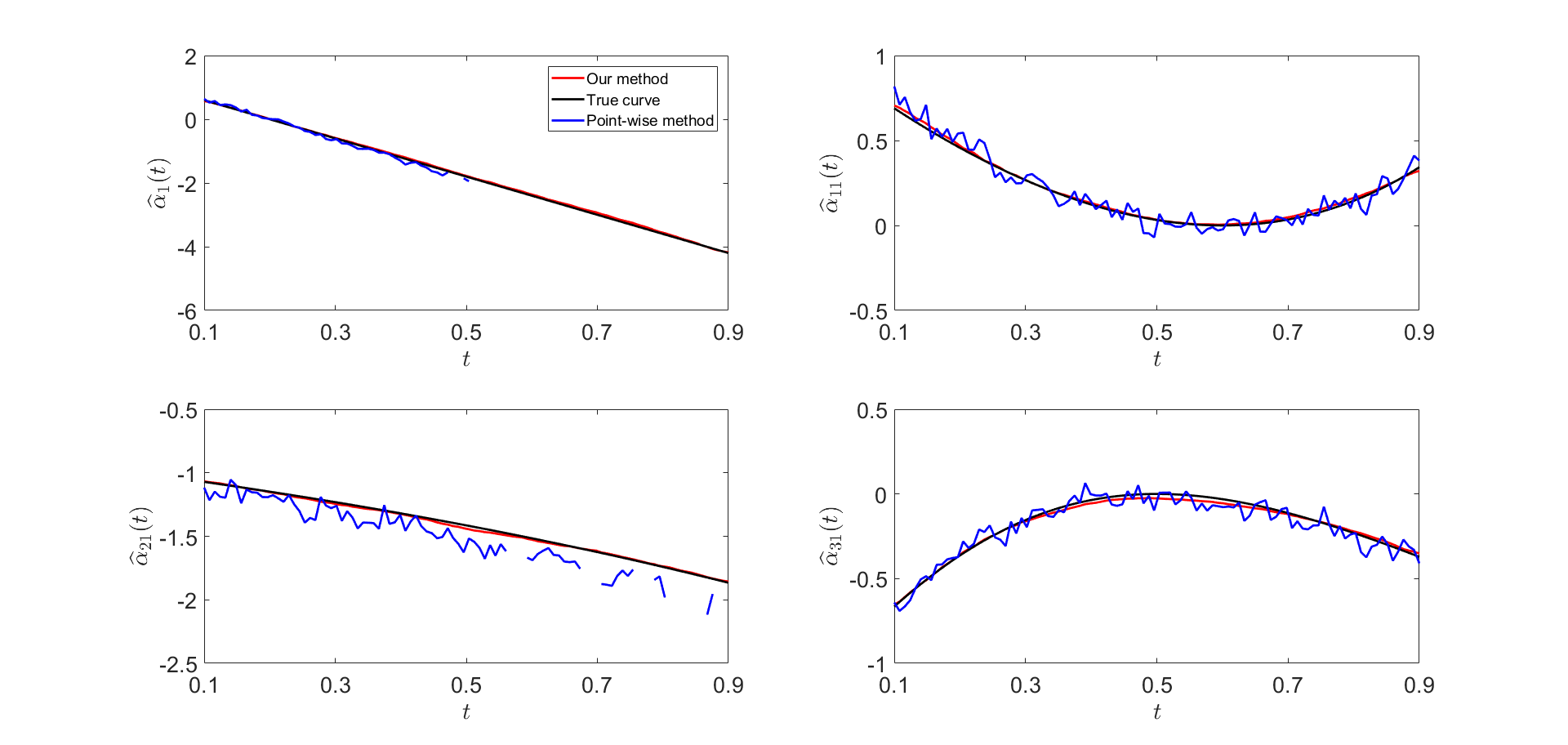}
     \includegraphics[width=1\textwidth]{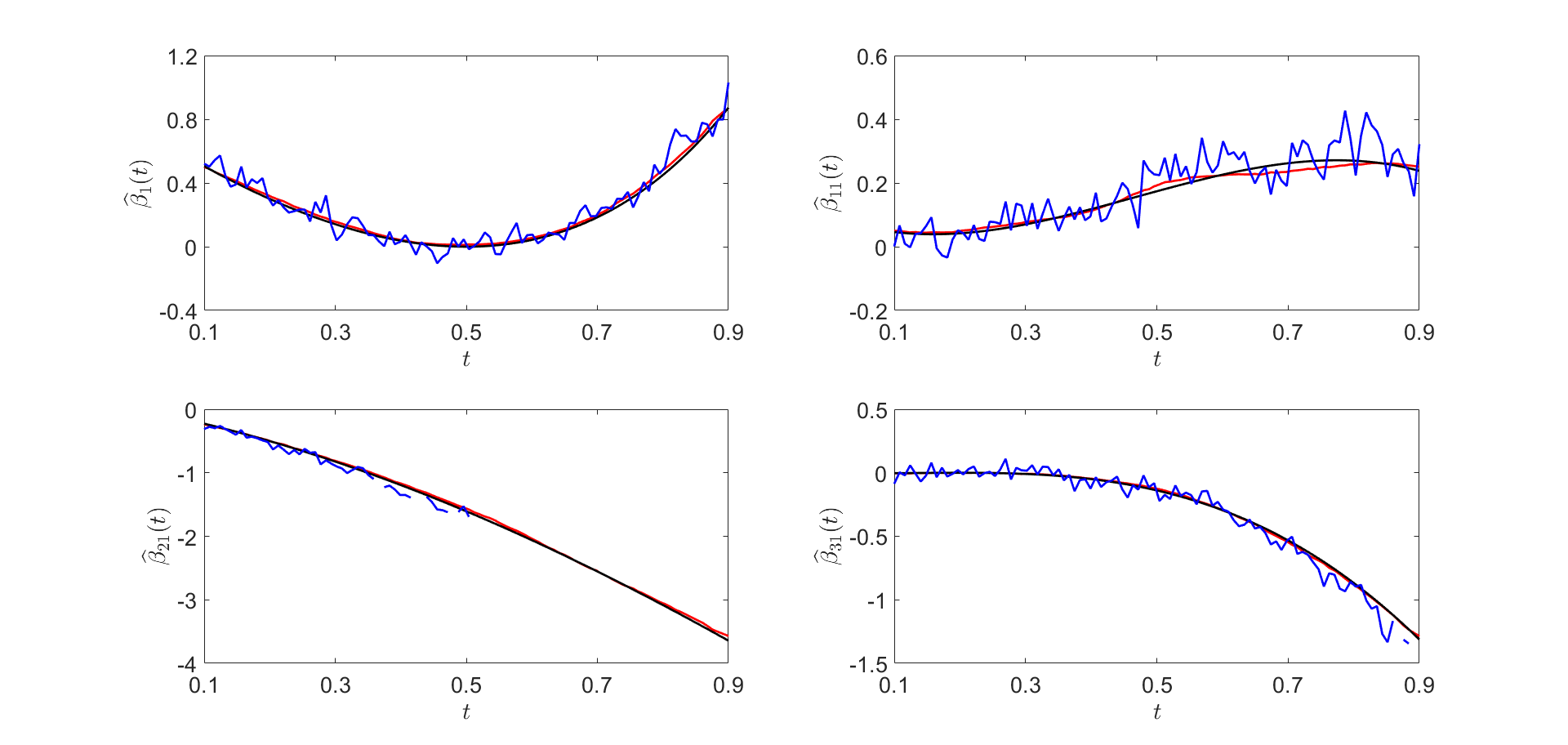}
    \caption{The estimate results of $\alpha_i^*(t)$ and $\beta_i^*(t)~(i=1,~11,~21,~31)$ with $n=40, N=100$ and $h=0.23.$}
    \label{fig::simulation-51}
\end{figure}



\begin{table}[ht!]
\caption{Simulation results of
				the Bias and SD for $\widehat\alpha_i(t)$ when
				$n=160, N=200, h=0.13.$}
{\small\def\temptablewidth{0.995\textwidth}
{\rule{\temptablewidth}{1pt}}
	\begin{adjustbox}{max width=\textwidth}
		\setlength{\tabcolsep}{6.3mm}
		\begin{tabular}{cccccccc}
					&&\multicolumn{2}{c}{Our method} & \multicolumn{2}{c}{Point-wise method} & \multicolumn{2}{c}{Smoothed point-wise method}  \\
					\cline{3-4}   \cline{5-6} \cline{7-8}
t    & $\alpha^*(t)$ & Bias  & SD                     & Bias  & SD                     & Bias  & SD                                   \\
			\hline
			0.2  & $i=1$      & -0.004 & 0.041                 & 0.019  & 0.233                 & -0.019 & 0.050                               \\
			& $i=41$     & -0.008 & 0.029                 & -0.026 & 0.181                 & -0.017 & 0.049                               \\
			& $i=81$     & 0.013  & 0.037                 & 0.016  & 0.204                 & 0.033  & 0.057                               \\
			& $i=121$    & -0.008 & 0.033                 & -0.022 & 0.205                 & -0.011 & 0.048                               \\
			0.4  & $i=1$      & -0.002 & 0.040                 & -0.037 & 0.227                 & 0.012  & 0.054                               \\
			& $i=41$     & -0.011 & 0.034                 & -0.047 & 0.216                 & -0.023 & 0.052                               \\
			& $i=81$     & 0.008  & 0.034                 & -0.016 & 0.214                 & 0.018  & 0.055                               \\
			& $i=121$    & -0.026 & 0.032                 & -0.056 & 0.209                 & -0.041 & 0.047                               \\
			0.6  & $i=1$      & 0.004  & 0.042                 & -0.025 & 0.218                 & 0.024  & 0.056                               \\
			& $i=41$     & -0.019 & 0.038                 & -0.026 & 0.242                 & -0.019 & 0.050                               \\
			& $i=81$     & -0.008 & 0.037                 & -0.018 & 0.238                 & 0.000  & 0.049                               \\
			& $i=121$    & -0.030 & 0.036                 & -0.045 & 0.208                 & -0.029 & 0.053   \\
			\hline
			\end{tabular}
   \label{tab::simulation3:table-3}
    \end{adjustbox}}
\end{table}

\begin{table}[h!]
\caption{Simulation results of
				the Bias and SD for $\widehat\beta_i(t)$ when
				$n=160, N=200, h=0.13.$}
{\small\def\temptablewidth{0.999\textwidth}
{\rule{\temptablewidth}{1pt}}
	\begin{adjustbox}{max width=\textwidth}
		\setlength{\tabcolsep}{6.3mm}
		\begin{tabular}{cccccccc}
					&&\multicolumn{2}{c}{Our method} & \multicolumn{2}{c}{Point-wise method} & \multicolumn{2}{c}{Smoothed point-wise method}  \\
					\cline{3-4}   \cline{5-6} \cline{7-8}
					t    & $\beta^*(t)$ & Bias  & SD                     & Bias  & SD                     & Bias  & SD \\
					\hline
						0.2  & $i=1$      & 0.001  & 0.031                 & 0.018  & 0.212                 & -0.012 & 0.048                               \\
					& $i=41$     & 0.001  & 0.032                 & -0.007 & 0.223                 & -0.003 & 0.047                               \\
					& $i=81$     & 0.008  & 0.031                 & 0.001  & 0.214                 & 0.000  & 0.048                               \\
					& $i=121$    & -0.005 & 0.034                 & -0.015 & 0.228                 & -0.001 & 0.051                               \\
					0.4  & $i=1$      & 0.000  & 0.032                 & 0.031  & 0.203                 & -0.004 & 0.053                               \\
					& $i=41$     & -0.011 & 0.031                 & 0.018  & 0.224                 & -0.011 & 0.045                               \\
					& $i=81$     & 0.037  & 0.038                 & 0.056  & 0.247                 & 0.010  & 0.050                               \\
					& $i=121$    & -0.001 & 0.035                 & -0.003 & 0.212                 & -0.007 & 0.047                               \\
					0.6  & $i=1$      & -0.003 & 0.032                 & -0.018 & 0.220                 & -0.003 & 0.051                               \\
					& $i=41$     & -0.010 & 0.035                 & -0.011 & 0.258                 & -0.002 & 0.050                               \\
					& $i=81$     & 0.040  & 0.040                 & 0.059  & 0.260                 & 0.031  & 0.054                               \\
					& $i=121$    & 0.014  & 0.035                 & -0.001 & 0.233                 & 0.016  & 0.049     \\
					\hline
			\end{tabular}	
    \end{adjustbox}}
    \label{tab::simulation3:table-4}
\end{table}

\begin{table}[h!]
\caption{The second set up of true parameters in Simulation 3.}
{\small\def\temptablewidth{0.994\textwidth}
{\rule{\temptablewidth}{1pt}}
	\begin{adjustbox}{max width=\textwidth}
		\begin{tabular}{ccccc}
        Index & $1\leq i\leq n/4$ & $n/4+1\leq i\leq 2n/4$ & $2n/4+1\leq i\leq 3n/4$ & $3n/4+1\leq i\leq n-1$ \\\hline
        $\alpha_i^*(t)$ &
            $-2(3t-0.6)$ & $-8(t-0.6)^2/(t-3)$ & $-2^t$ & $-4.2(t-0.5)^2/(1+t^2)$ \\\hline
        $\beta_i^*(t)$ &
            $-6(t-0.5)^2/(t-2)$ & $-2(t-0.3)^3+(t-0.2)^2+0.2t$ & $-4(t+0.8)\sin(0.2\pi t)$ & $-5(t-0.2)^2\sin(0.2\pi t)$ \\\hline
    \end{tabular}
    \label{tab::simulation-set-up2}
    \end{adjustbox}}
\end{table}

\subsection{Simulation 4: Asymptotic normality}
\label{section::normality-test}
This simulation focuses on validating our asymptotic normality theory, namely, Theorem \ref{theorem-2}.  For simplicity, we consider one particular setting $(n,N)=(160,200)$. According to the result of Simulation 1, we set $h=0.13$.  We simulate 1000 Monte Carlo repetitions.
\begin{figure}[h!]
    \centering
    \includegraphics[width=0.45\textwidth]{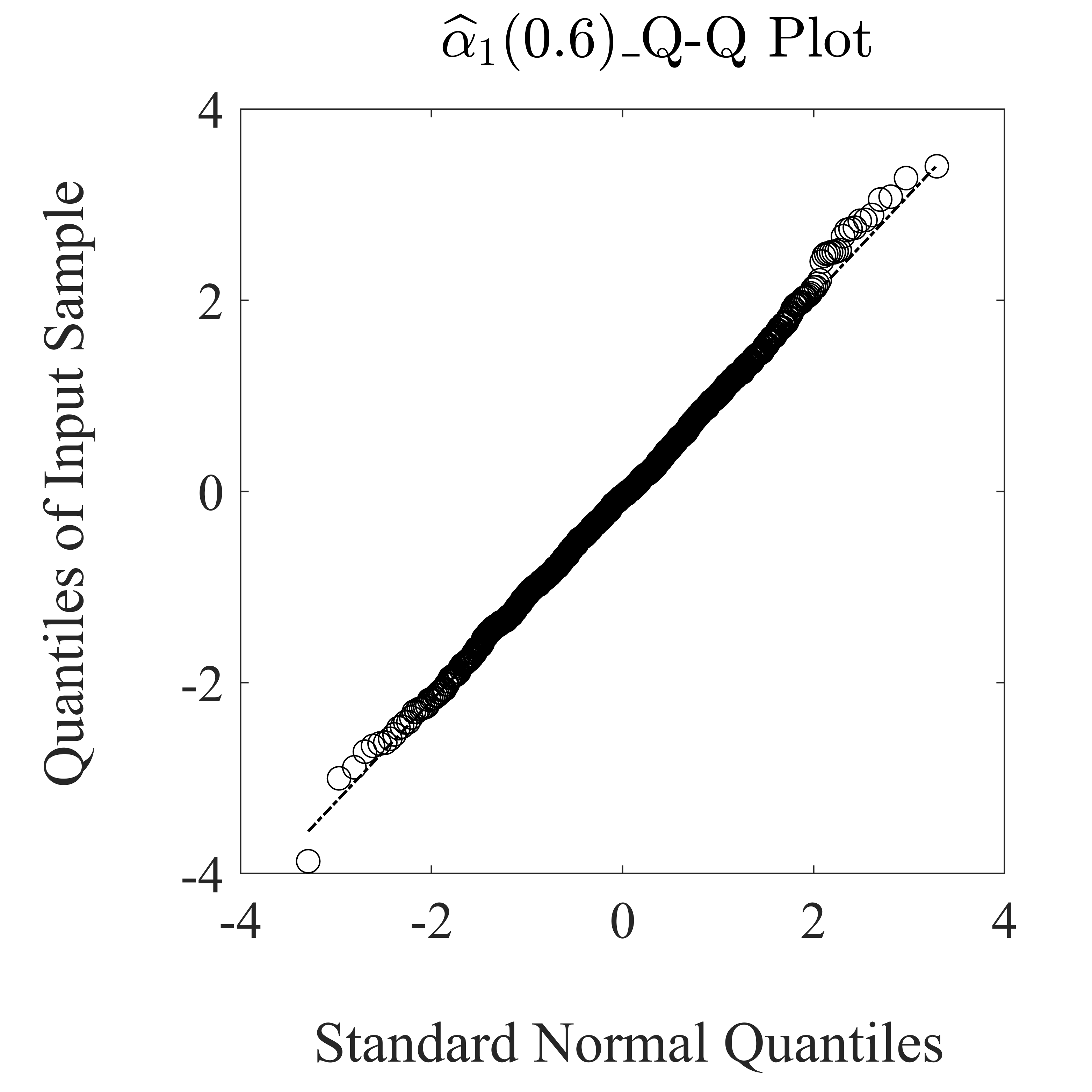}
    \includegraphics[width=0.45\textwidth]{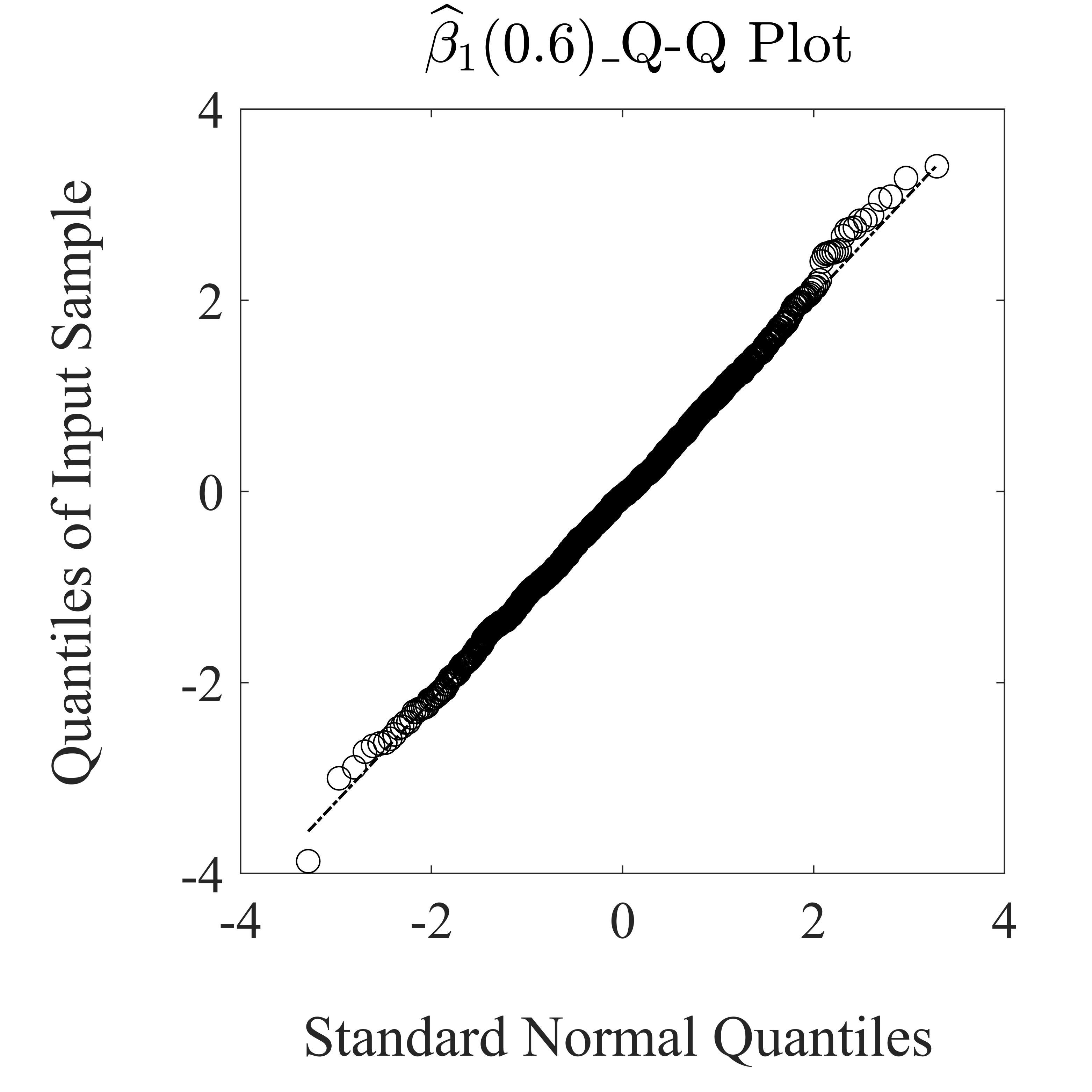}
    \caption{One-dimensional marginal normality.}
    \label{fig::simulation-4-QQ}
\end{figure}
\begin{figure}[h!]
    \centering
    \includegraphics[width=0.45\textwidth]{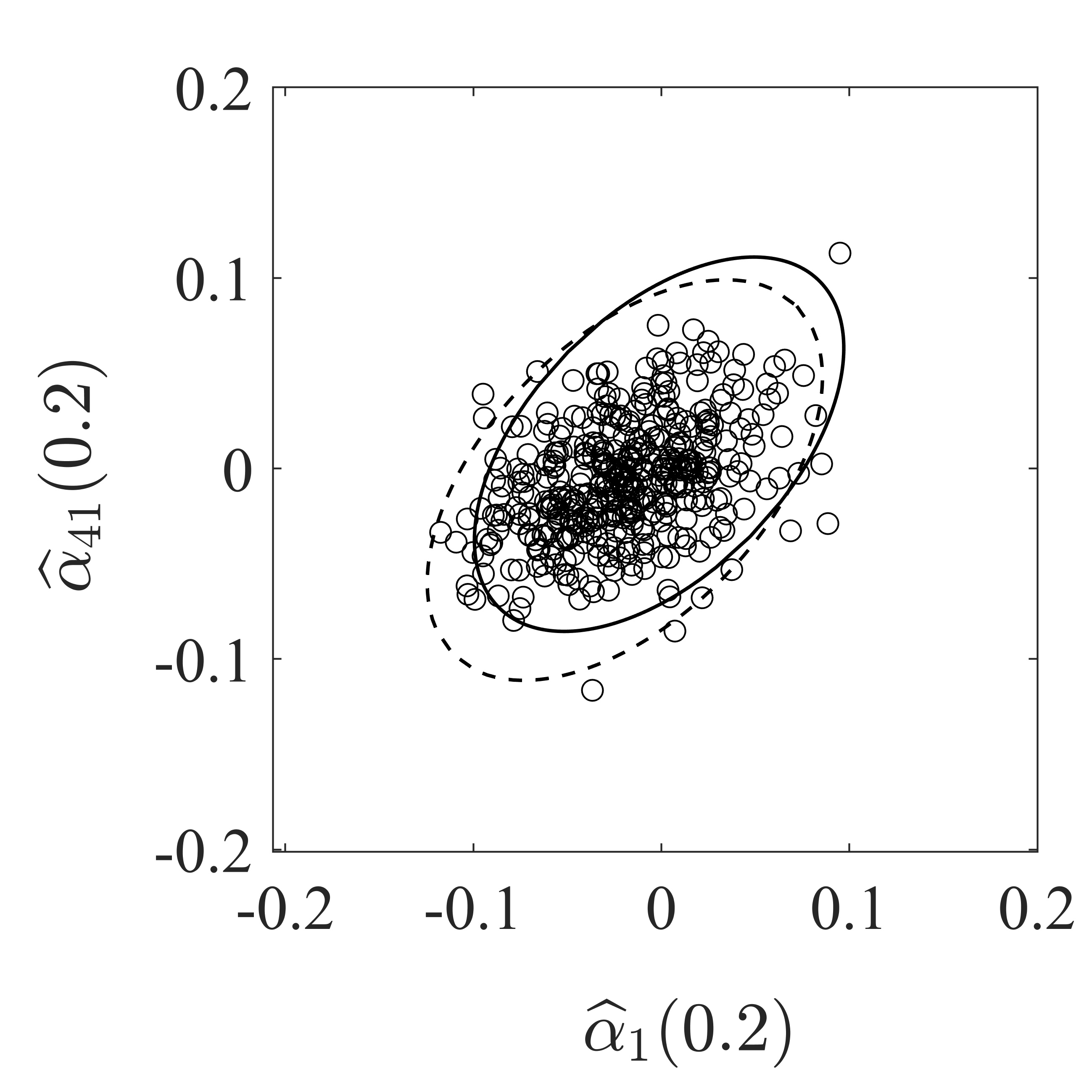}
    \includegraphics[width=0.45\textwidth]{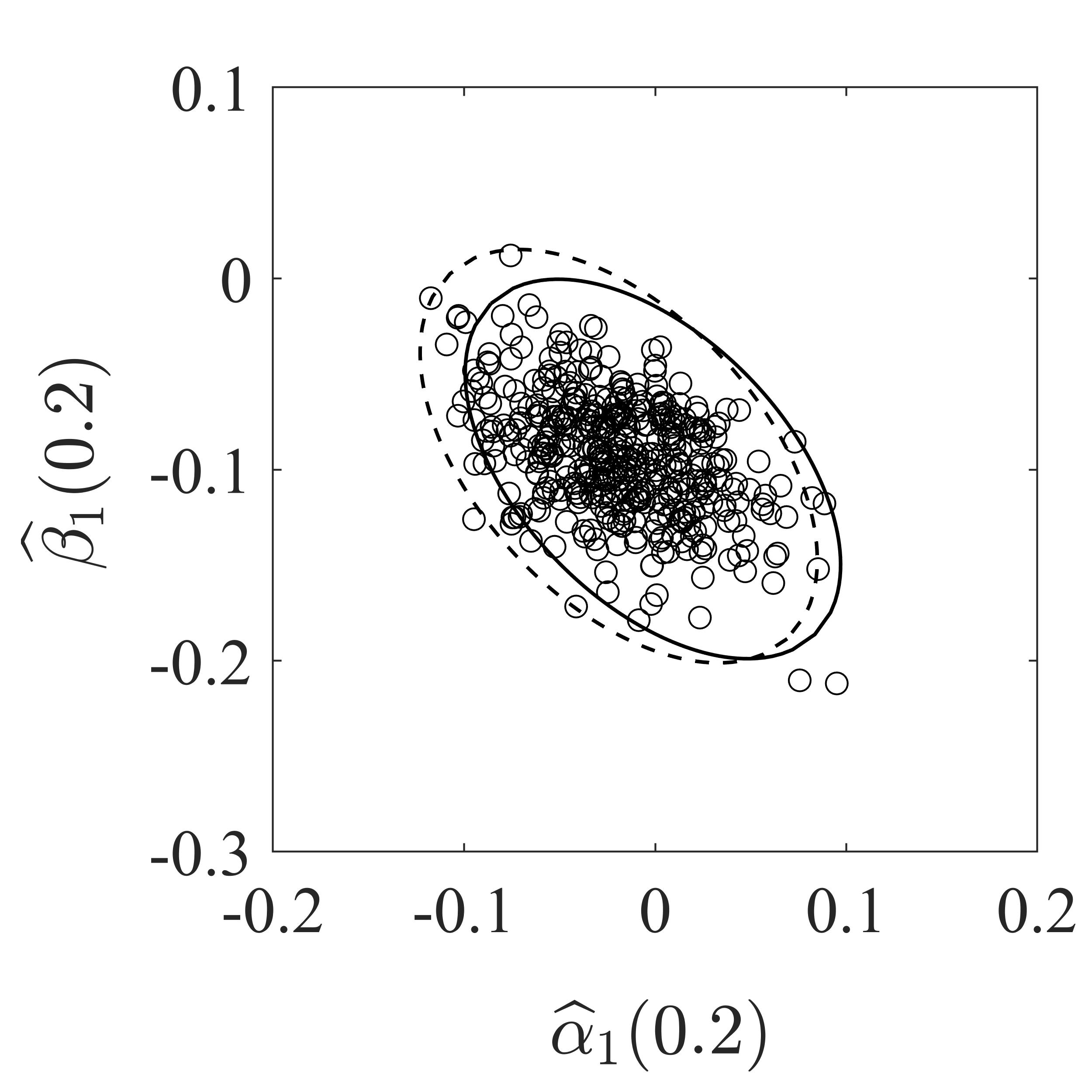}
    \caption{Two-dimensional marginal normality.  Solid ellipse is the contour of $(x-\mu)^\top \Sigma^{-1} (x-\mu)=9$, where $(\mu,\Sigma)$ are the asymptotic center and covariance matrix predicted by Theorem \ref{theorem-2}, respectively.  Dashed ellipse is a similar contour but with $(\mu,\Sigma)$ being the sample mean and covariance matrix, respectively.}
    \label{fig::simulation-4-scatter}
\end{figure}
Figure \ref{fig::simulation-4-QQ} and \ref{fig::simulation-4-scatter}
illustrate the results. The plots suggest that the simulation result very well matches the prediction of our Theorem \ref{theorem-2}.

\section{Data example}
\label{section::data-example}
We analyze an email data set \citet{leskovec2007graph, paranjape2017motifs}, available at \url{http://snap.stanford.edu/data/email-Eu-core-temporal.html}.  It records email exchanges between members of a large European institution.  Each entry $(i,j,t)$ signifies that $i$ emailed $j$ at time $t$, where $t$ is accurate to ``day".  We transcribe the individual entries into network snapshots by aggregating all communications by month -- using a too small time unit will result too sparse network snapshots.  This produces $N=19$ snapshots over $(n_1,n_2,n_3,n_4)=(298, 154, 89, 140)$ nodes in four selected departments in the institute, respectively.  For convenience, we rescale the entire time span to $[0,1]$.
\begin{sloppypar}
We use the cross-validation method in Section \ref{section::simulation-1} to tune the bandwidth $h$ and select
$(h_1,h_2,h_3,h_4)= (0.21,0.15,0.46,0.18)$
for the four departments, respectively.  Using the selected $h$ values, we estimate $\{\hat\alpha_i(t)\}_{i=1}^n$ and $\{\hat\beta_j(t)\}_{j=1}^{n-1}$ (recall $\hat\beta_n(t):=\beta_n(t)\equiv0$).
\end{sloppypar}
However, in real data analysis, we need to be very cautious with the interpretation of estimated $\alpha_i(t)$'s and $\beta_j(t)$'s.  Let us consider a toy example, where $n=2$; over time, node 1 becomes more active as both sender and receiver, whereas node 2 grows more inactive.  One might think this leads to increasing $\alpha_1(t)$ and $\beta_1(t)$ and decreasing $\alpha_2(t)$ and $\beta_2(t)$.  However, due to identifiability concerns, we follow the tradition in directed $\beta$-model analysis and set $\beta_2(t)\equiv 0$.  The increasingly inactive behavior of node 2 as a receiver will ``bounce'' to all other parameters $\alpha_1(t),\alpha_2(t)$ and $\beta_1(t)$.  Consequently, we cannot even be sure that $\alpha_1(t)$ and $\alpha_2(t)$ will be increasing.

Therefore, we should not merely look at the trends in each individual $\alpha_i(t)$ and $\beta_j(t)$ marginally.  Fortunately, the above-mentioned identifiability issue will not affect $\alpha_i(t)-\alpha_j(t)$ or $\beta_i(t)-\beta_j(t)$ for any $i\neq j$.  In view of this, after obtaining the estimation $\hat\alpha_i(t)$'s, we compute a pairwise distance matrix ${\cal D}_\alpha$ as follows
\begin{equation}
\label{equation-12}
    \Big({\cal D}_\alpha\Big)_{i,j}^2
    :=
    \int_a^b
    \big|
        \hat\alpha_i(t)
        -
        \hat\alpha_j(t)
    \big|^2 dt
\end{equation}
Then we perform a K-means clustering using ${\cal D}_\alpha$ and obtain an $\alpha$-cluster label for each node.
We clarify that the clustering here refers to the classification of nodes based on how their $\alpha_i(t)$ curve changes over time and has a very different meaning than the concept of ``varying community memberships'' in dynamic stochastic block models.
Also, performing a multi-dimensional scaling (MDS) on ${\cal D}_\alpha$, projecting into $\mathbb{R}$, produces a one-dimensional latent space position for each node.  Do the same using $\hat\beta_j(t)$'s, we obtain a $\beta$-cluster label and $\beta$-MDS position coordinate on each node.
A natural question is how to select $K$, the number of clusters, for each department's two clustering structures ($\alpha$- and $\beta$-clustering).
For this, we compute the ratio between the average between-cluster distance over the average within-cluster distance, for different choices of $K$.  This suggests that we select $K=4$ for $\alpha$- and $\beta$-clustering for department 1; and $K=5$ for clustering for departments 2--4
when $N=19.$ And we select $K=4$ for $\alpha$- and $\beta$-clustering for department 1 and department 4; and $K=5$ for clustering for departments 2--3 when $N=38.$
And select $K=5$ for $\alpha$- and $\beta$-clustering for department 2; and $K=4$ for clustering for other departments when $N=57.$
All details of this step are reported in Figure \ref{fig::data-example-tune-K}.
\begin{figure}[h!]
    \centering
    \includegraphics[width=0.24\textwidth]{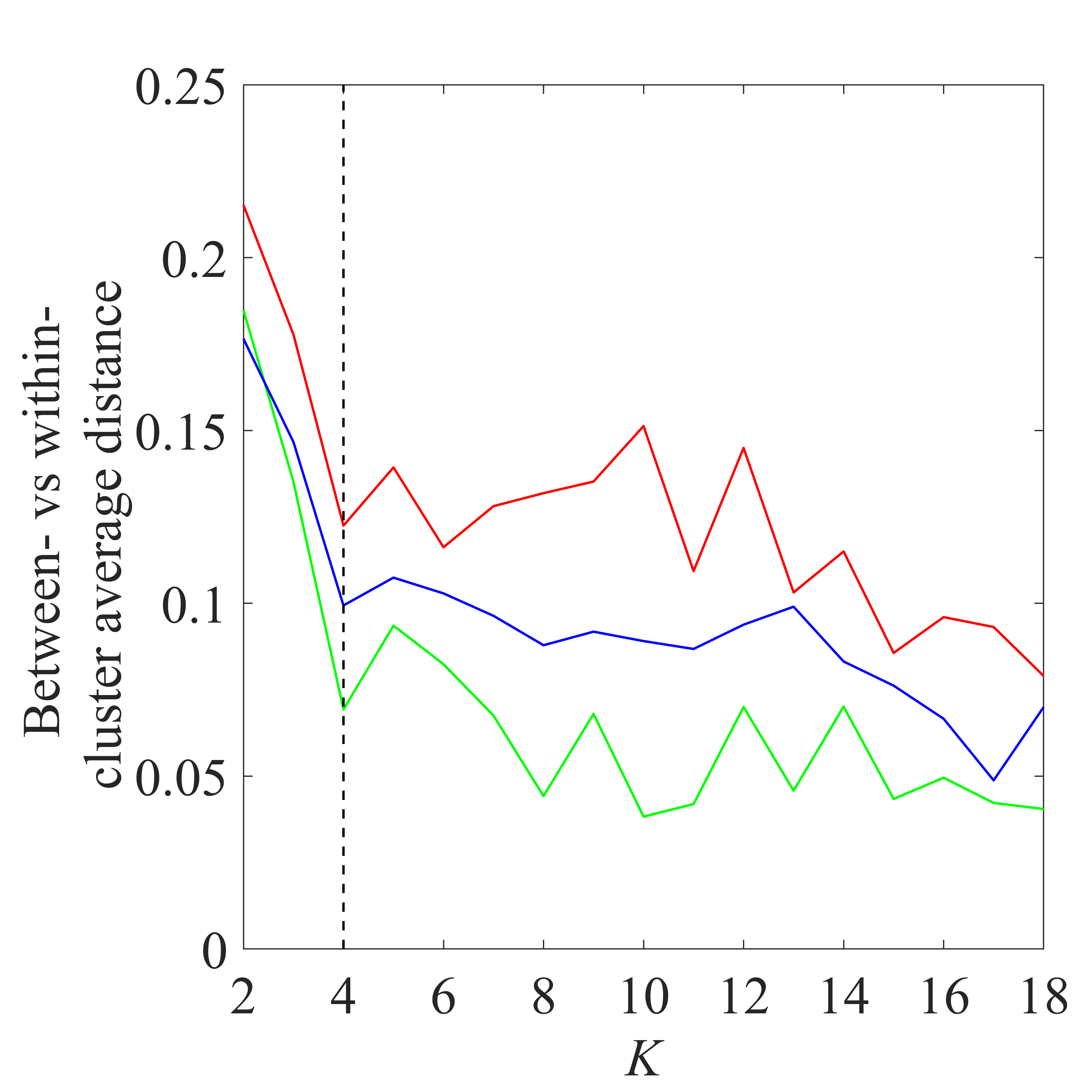}
    \includegraphics[width=0.24\textwidth]{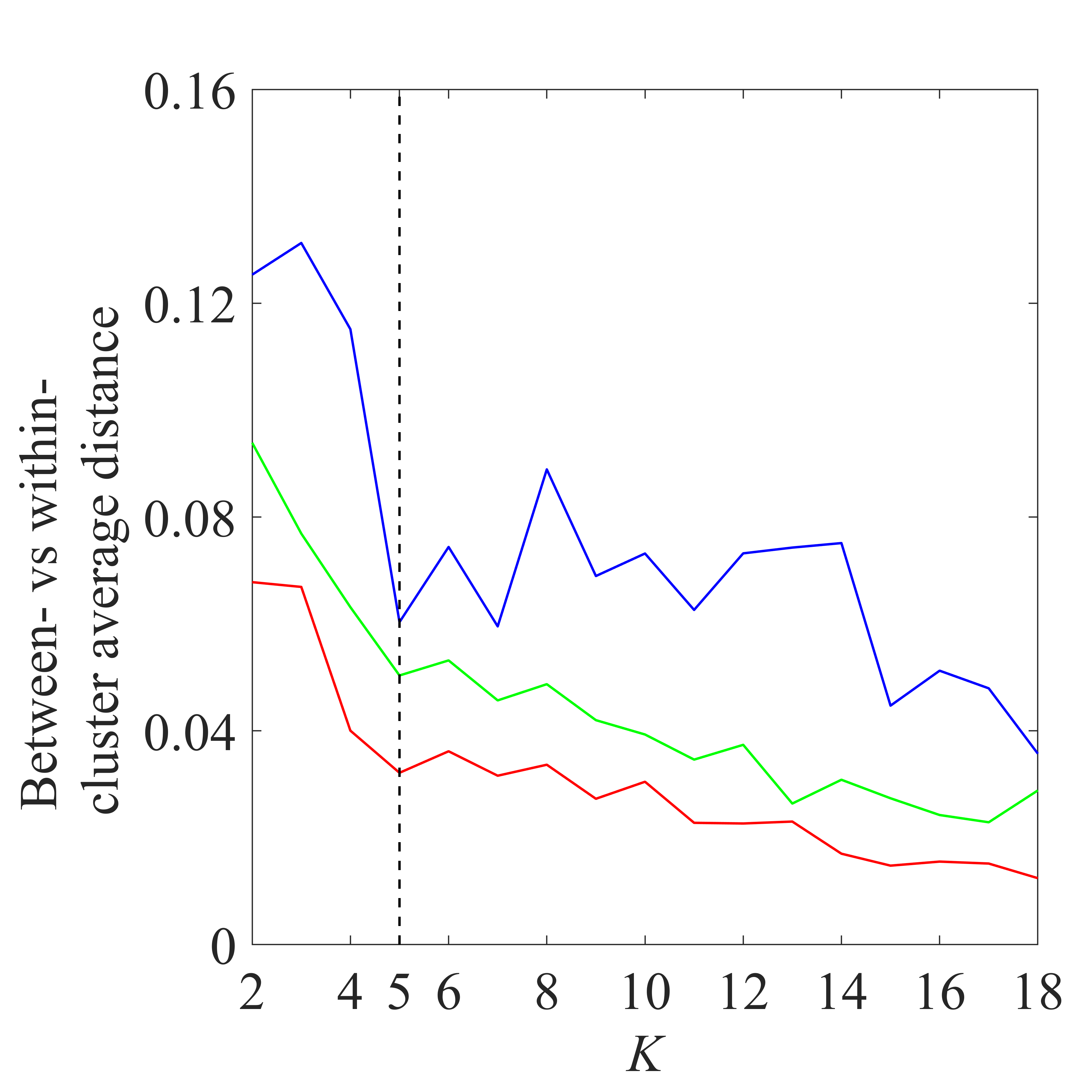}
    \includegraphics[width=0.24\textwidth]{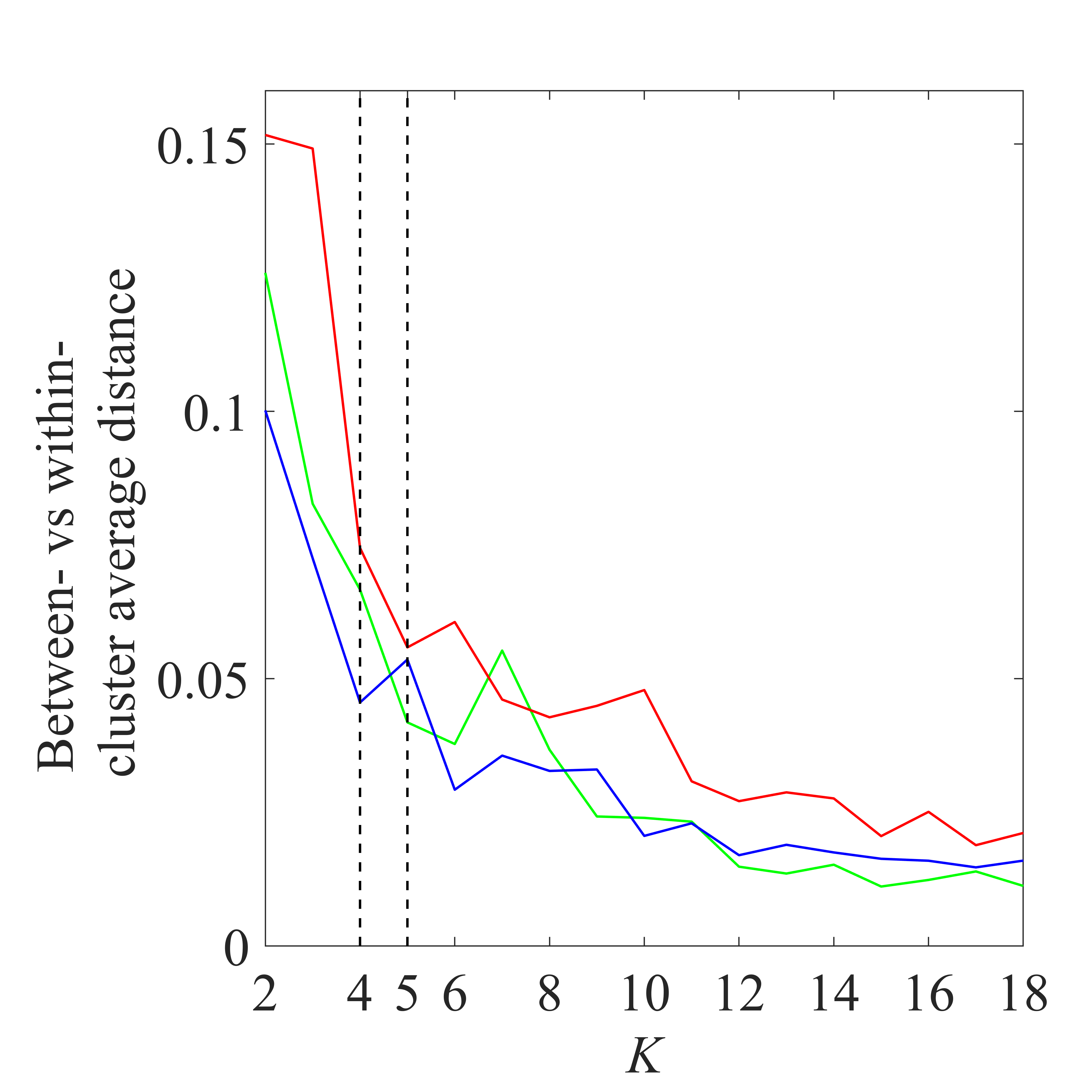}
    \includegraphics[width=0.24\textwidth]{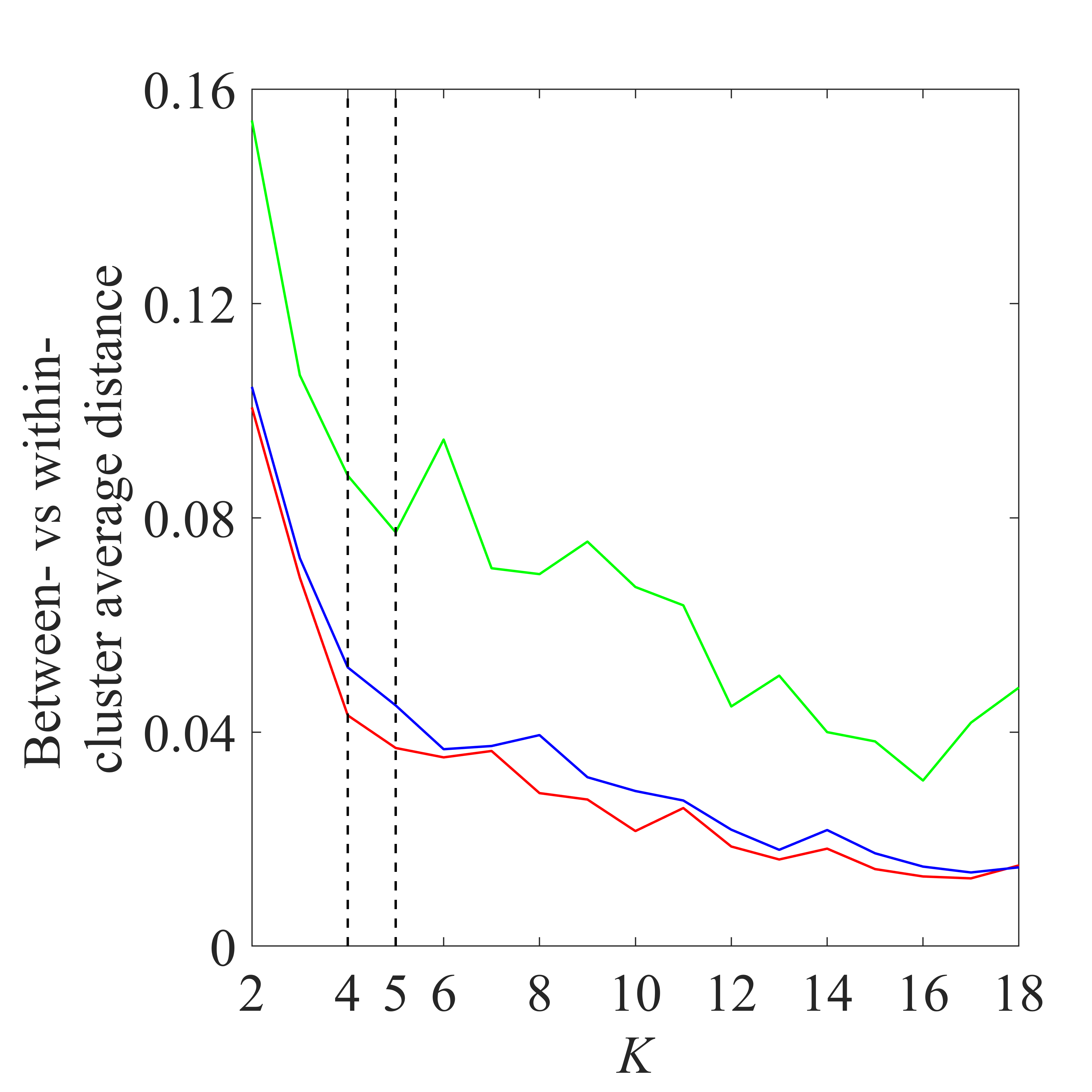}\\
    \includegraphics[width=0.24\textwidth]{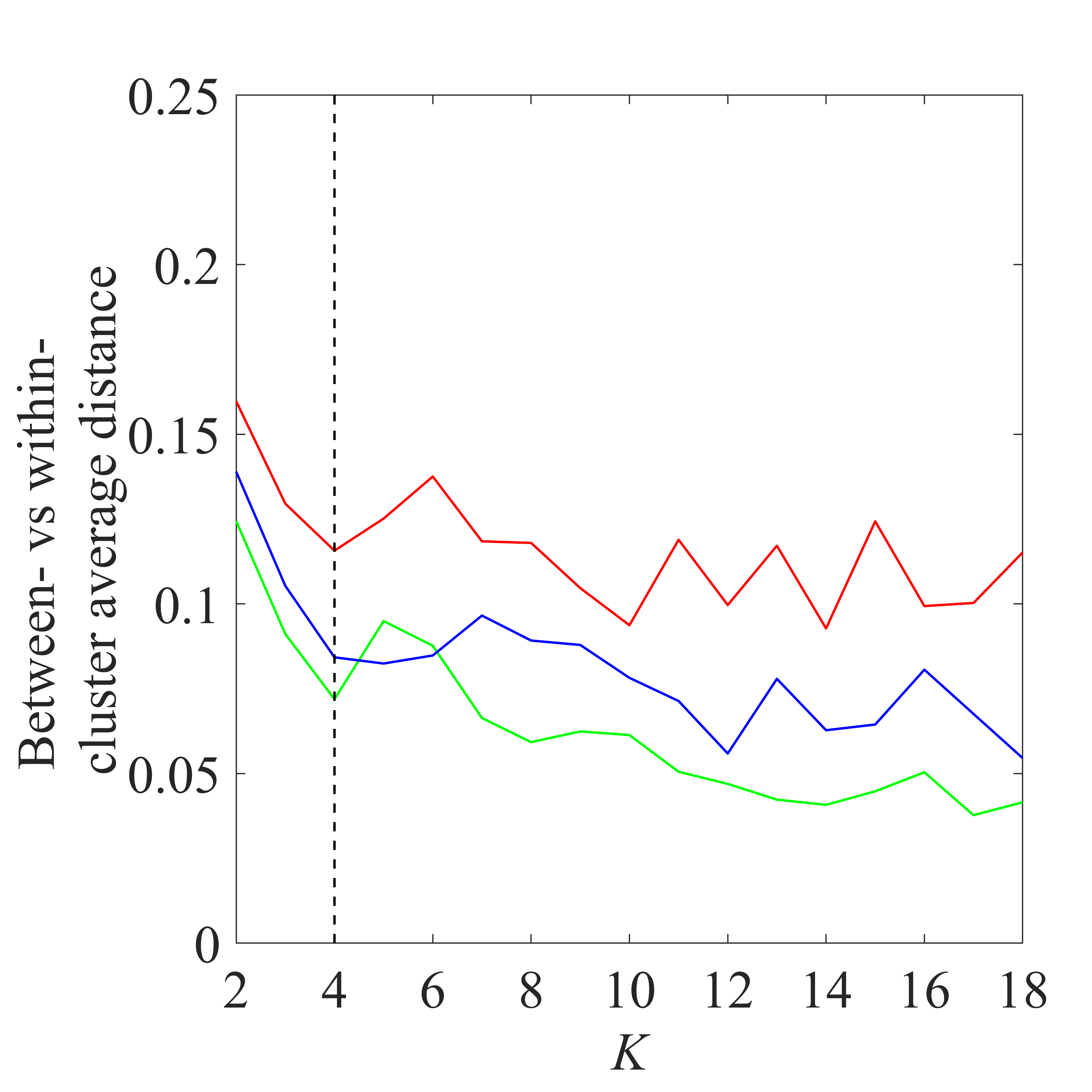}
    \includegraphics[width=0.24\textwidth]{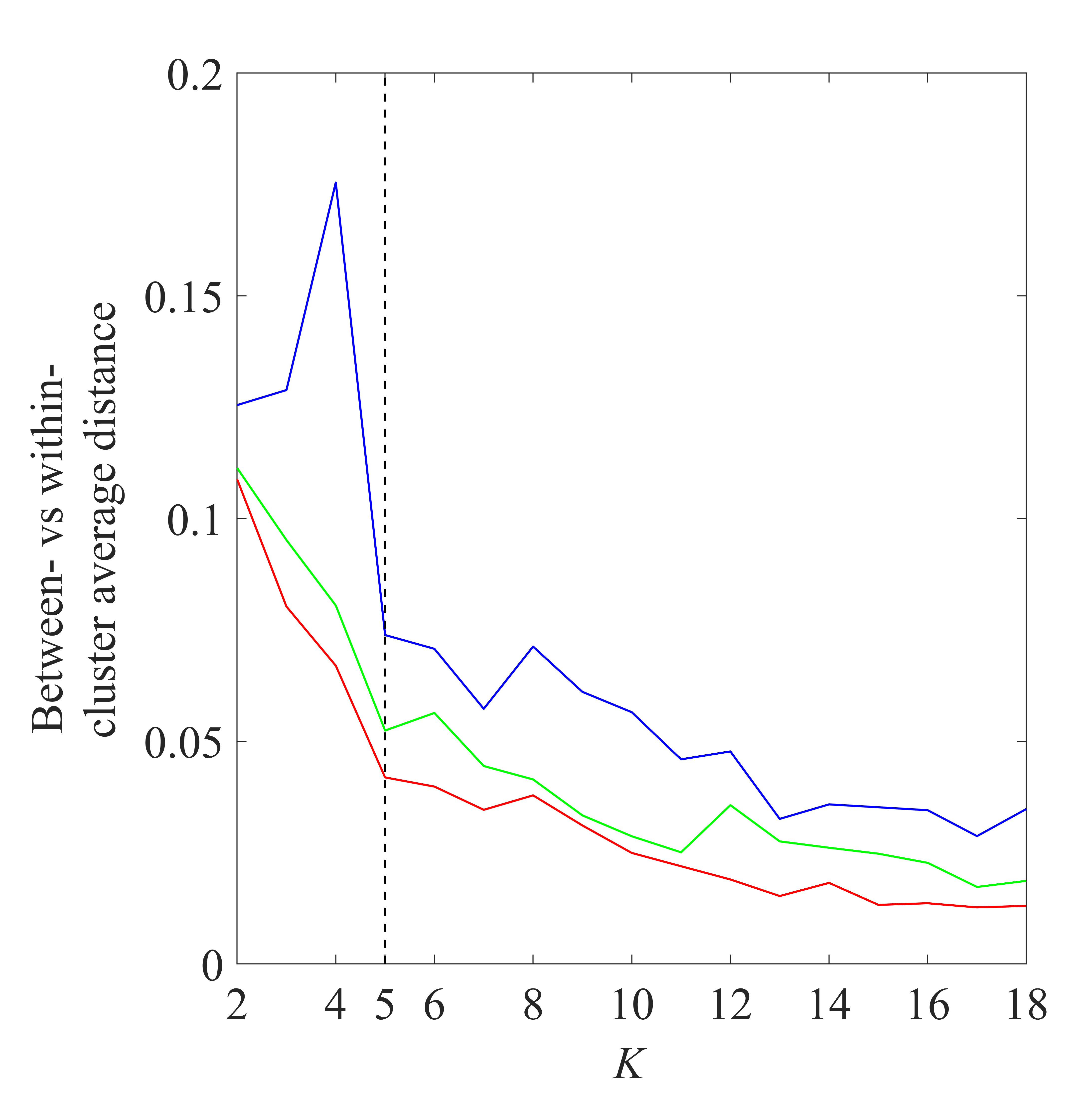}
    \includegraphics[width=0.24\textwidth]{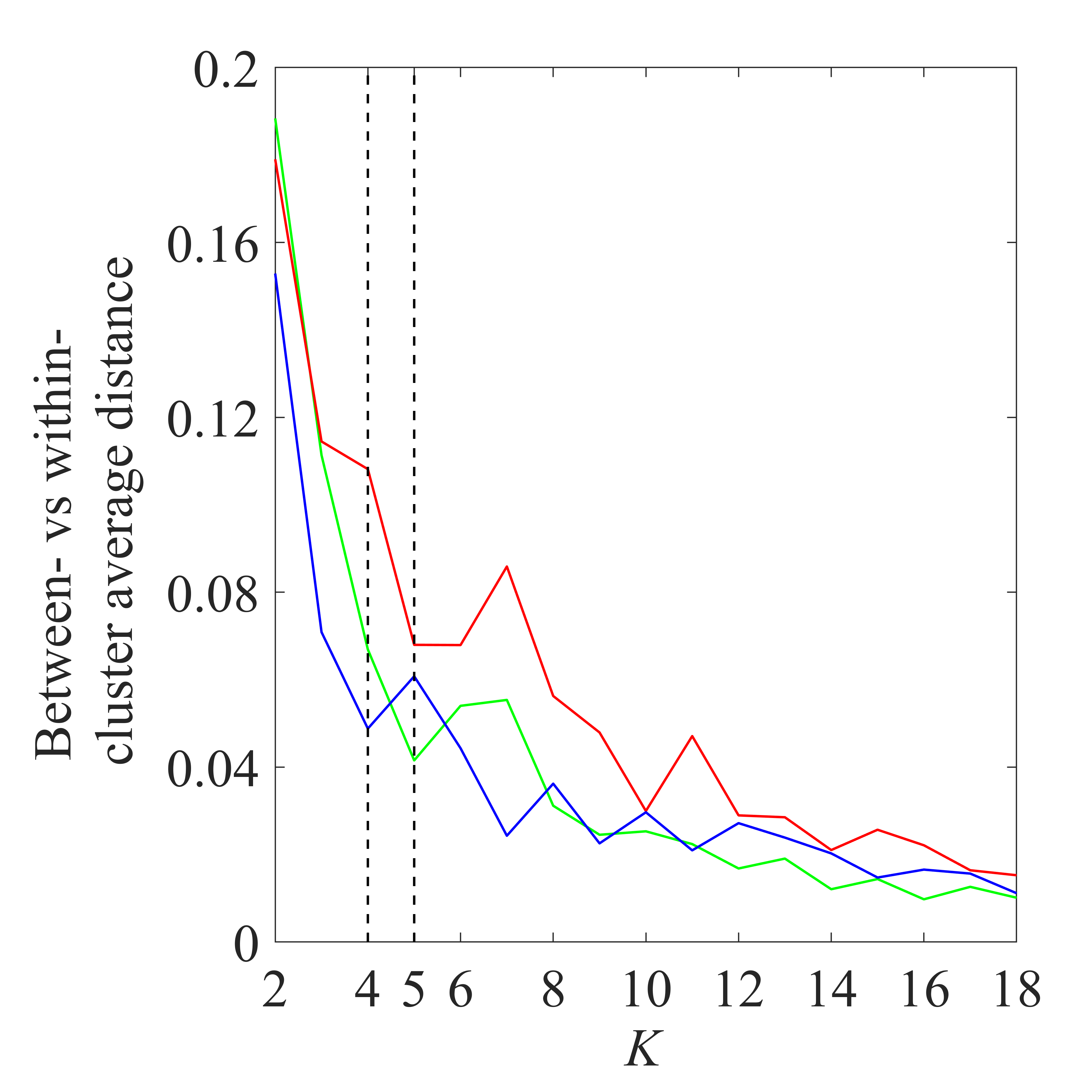}
    \includegraphics[width=0.24\textwidth]{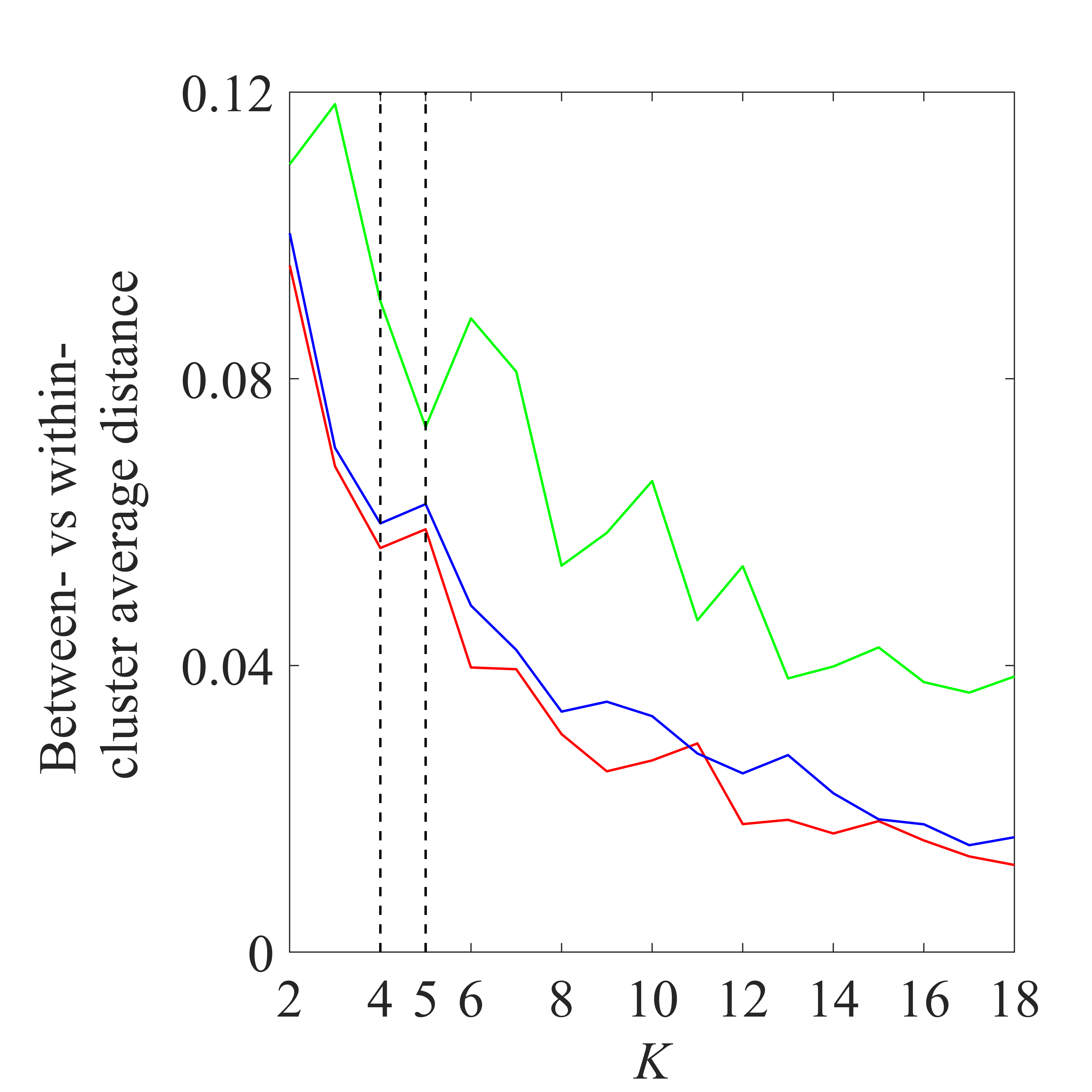}
    \caption{
    Green lines: $N=19;$
    Red lines: $N=38;$
    Blue lines: $N=57.$
    Selecting the numbers of clusters for $\alpha$- and $\beta$-clustering of each department in the email data set.  Our selection balances two considerations: (i). selecting a small $K$; (ii) and minimizing the between- vs within- cluster average distance.}
    \label{fig::data-example-tune-K}
\end{figure}
Next, in Figure \ref{fig::data-example}, we show the clustering result in a colored and marked scatter plot of all nodes when $N=19$ and the
clustering results when
$N=38$ and $N=57$
are in the Supplementary Material.
Each node's $X$- and $Y$-coordinates are its $\alpha$- and $\beta$-MDS coordinates, respectively.
Each node's color indicates its $\alpha$-clustering membership and marker shows its $\beta$-clustering membership.
\begin{figure}[h!]
    \centering
    \includegraphics[width=0.45\textwidth]{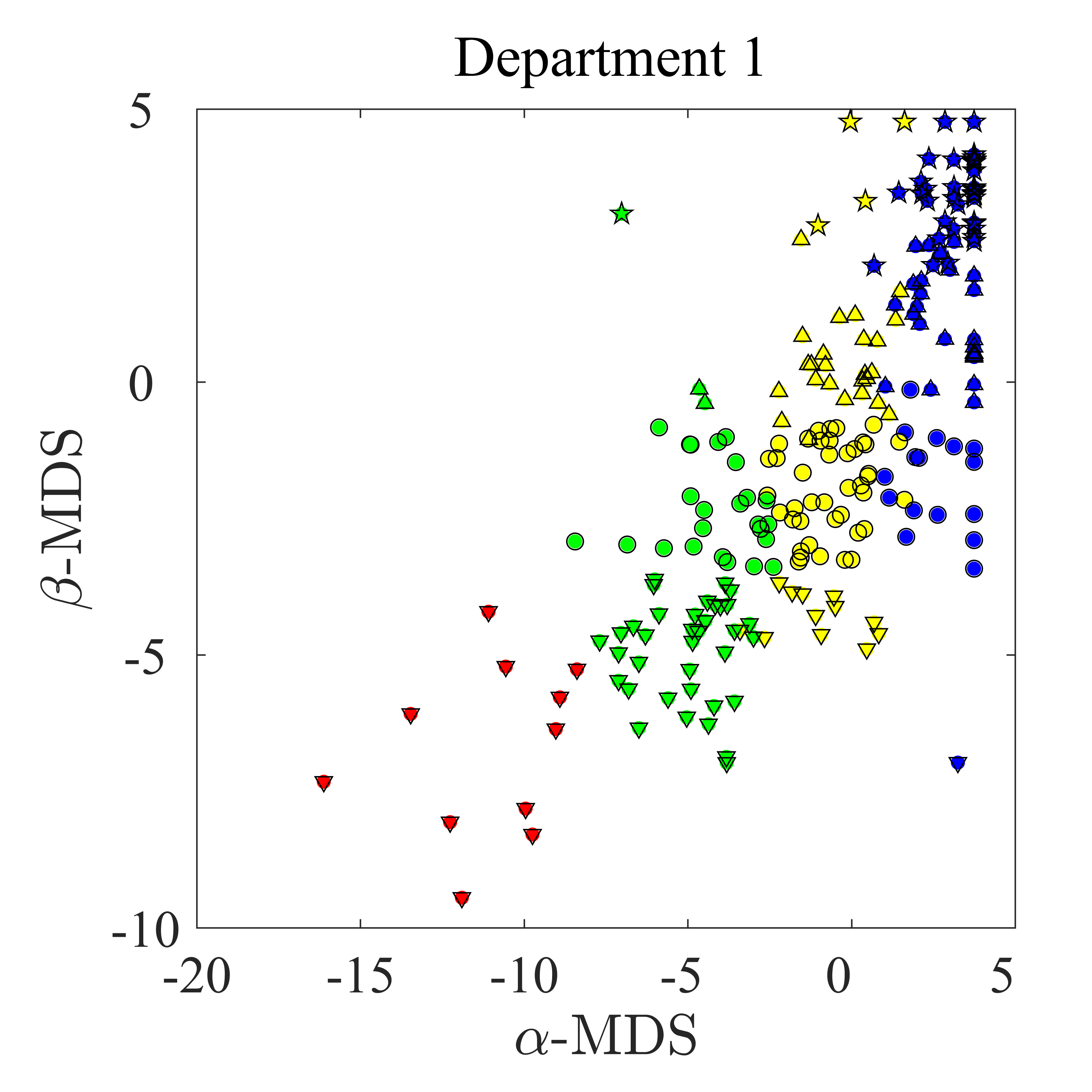}
    \includegraphics[width=0.45\textwidth]{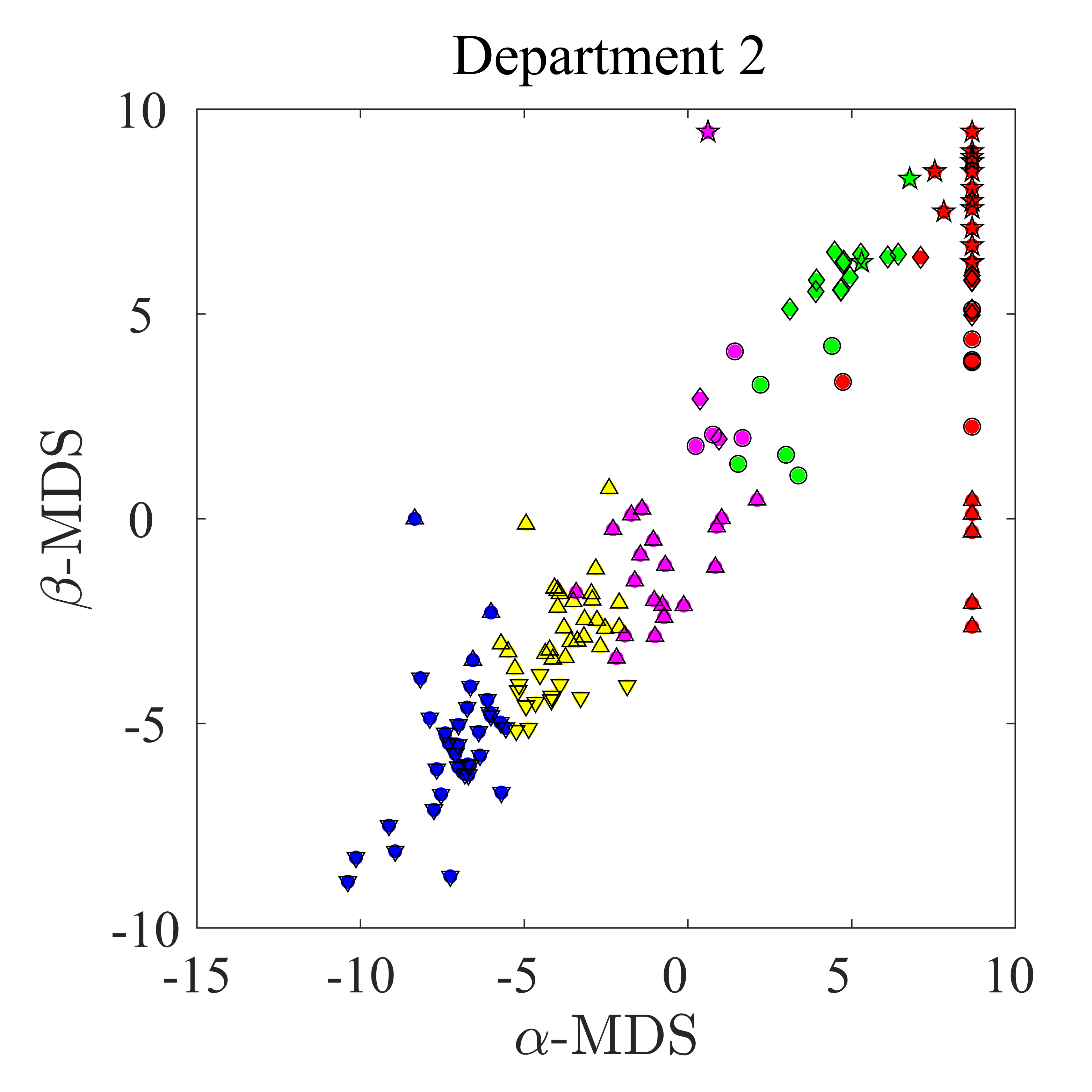}
    \includegraphics[width=0.45\textwidth]{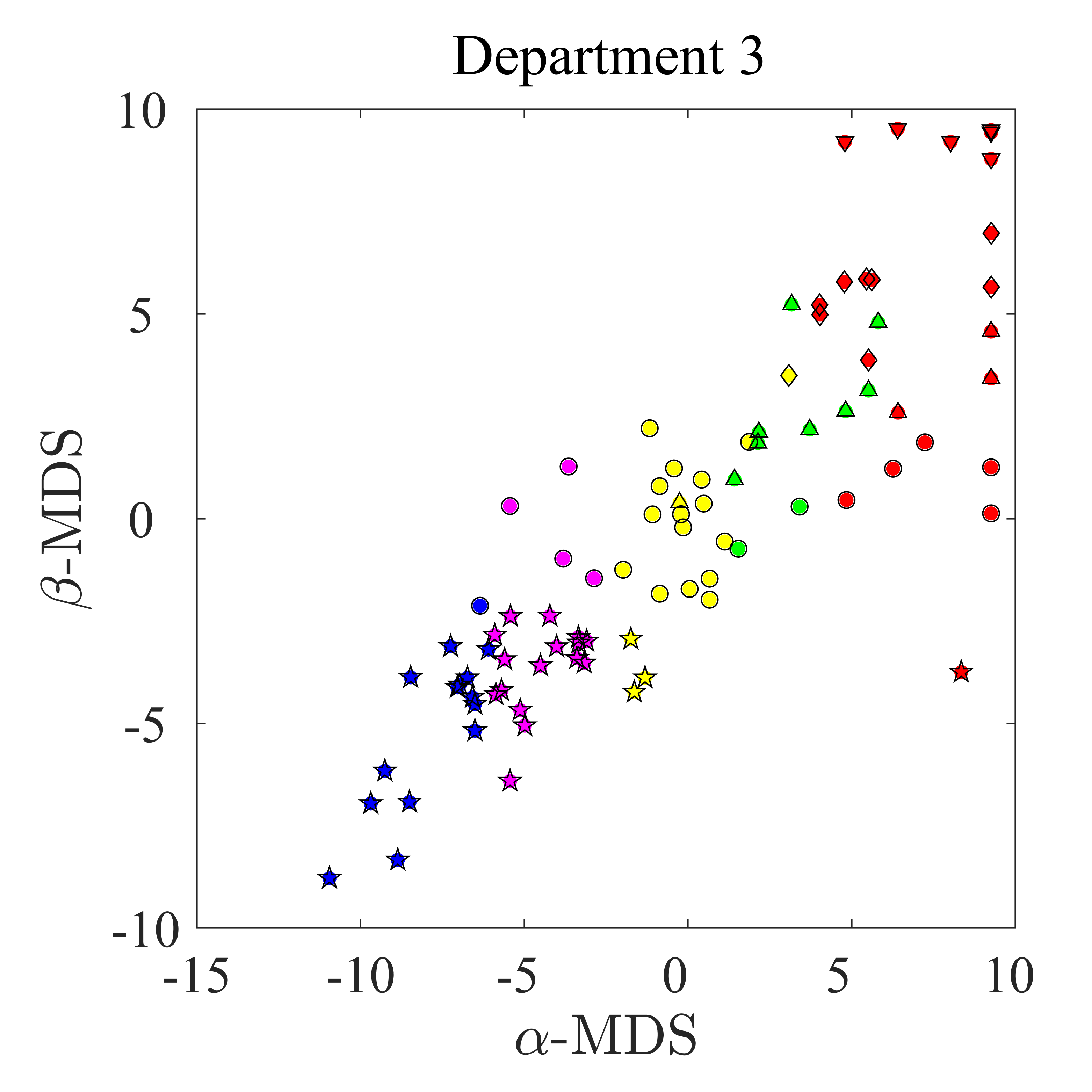}
    \includegraphics[width=0.45\textwidth]{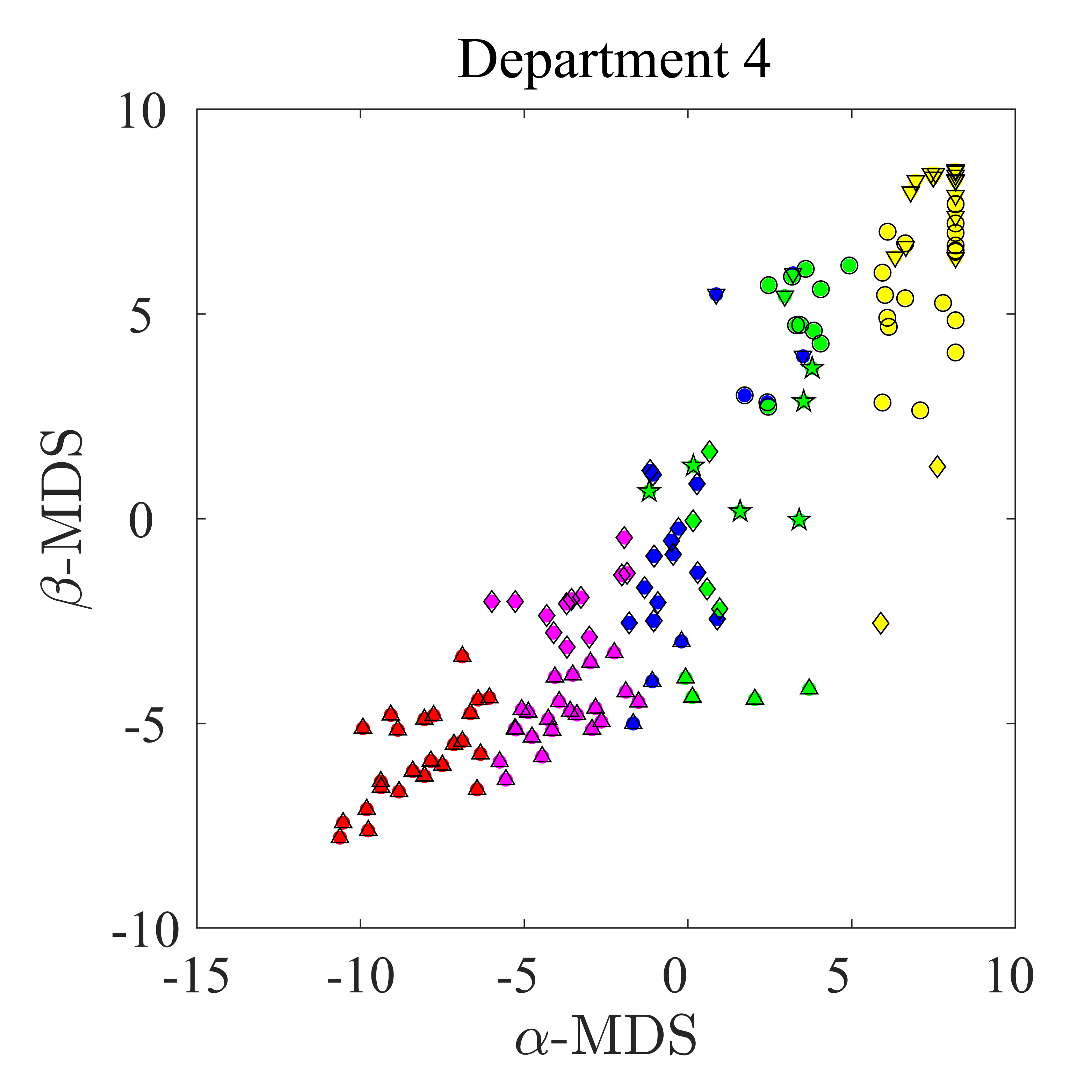}
    \caption{Illustration of $\hat\alpha_i(t)$'s and $\hat\beta_j(t)$'s.  Each point's $X$-coordinate comes from a multi-dimensional scaling of the $\ell_2$ distance matrix between $\hat\alpha_i(t)$'s; similarly obtain its $Y$-coordinate.  Color represents $\alpha$-clustering result, and marker represents $\beta$-clustering result.}
    \label{fig::data-example}
\end{figure}
Now we present some interpretations of the results in Figure \ref{fig::data-example}.
First, we notice that the $\alpha$- and $\beta$-clustering results mostly reflect nodes' $\alpha$- and $\beta$-MDS cooridnates, but they are not entirely aligned -- recall that $\alpha$- and $\beta$-clustering use their corresponding distance matrices, whereas their MDS coordinates can be viewed as the distance matrices nonlinearly projected into one-dimensional subspace.
All four plots in Figure \ref{fig::data-example} suggest that
the trends in which nodes' sender and receiver effects evolve over time are overall concordant with each other, since most points align along diagonal straight lines; but they do not fully align with each other, which is reflected by the difference between $\alpha$- and $\beta$- clustering results.

\section{Discussion}
\label{section::discussion}
In this paper, we extend the $\beta$-model for directed binary networks to the dynamic setting by letting the parameters evolve over time.  This work can be viewed as a first step in the exploration of this interesting direction.  Many intriguing questions remain.  Here, we select a few future directions for discussion.

First, our study concentrates on binary edges.  Retrospecting our approach, we see that the key ingredient is to kernel-smooth the partial differentiation formulae of the likelihood function at each time point, using the counterparts at adjacent time points.
For instance, we can extend the edge generation scheme from Bernoulli to Poisson with
$
P(a_{ij} = x) = e^{x(\alpha_i+\beta_j)}/x!\cdot  e^{-e^{\alpha_i+\beta_j}}, \quad x=0,1,2,\ldots.
$
The idea is very similar to how \citet{yan2016asymptoticsSIN} extended the ordinary $\beta$-model from Bernoulli to Poisson and many other general edge distributions.
We leave the detailed formulation and theoretical study to future work.

The second discussion topic is to relax our global smoothness assumption on all $\alpha_i(t)$'s and $\beta_j(t)$'s to accommodate change points \citep{wang2021optimal,Bhamidi2018change,yu2021optimal}.
To tweak our approach to this end, we consider a single change point setting: $\alpha_i(t)$'s and $\beta_j(t)$'s contain a common jump point at $t=t_0$, and are otherwise smooth conformal to Condition \ref{condition-1}.  Suppose $t_0\in [a_1,b_1]\subset [a+C_a, b-C_b]$ for some global constants $C_a
$ and $C_b>0$ (we assume $a_1$ and $b_1$ are known, but they can be set conservatively, close to $a$ and $b,$ respectively).
A simple adaptation of our method for change point detection in this scenario is
\begin{align}
    \hat t_0
    := &~
    \arg\max_{t:t\in[a_1,b_1]}
    \Big\|
        \Big(
            \hat\alpha_+(t) - \hat\alpha_-(t)
            ,
            \hat\beta_+(t) - \hat\beta_-(t)
        \Big)
    \Big\|_2^2,
    \notag\\
    \textrm{where $\alpha_i(t)=\big(\hat\alpha_+(t)\big)_i$ solves:}
    &~
    \sum_{\ell\in[N]: T_\ell\geq t}
    K_h(t-T_\ell)
    \Bigg\{
    d_i(T_\ell) - e^{\alpha_i(t)}\Big(\sum_{j:j\neq i}e^{\beta_j(t)}\Big)
    \Bigg\}
    =0,
    \notag\\
    \textrm{and $\alpha_i(t)=\big(\hat\alpha_-(t)\big)_i$ solves:}
    &~
    \sum_{\ell\in[N]: T_\ell< t}
    K_h(t-T_\ell)
    \Bigg\{
    d_i(T_\ell) - e^{\alpha_i(t)}\Big(\sum_{j:j\neq i}e^{\beta_j(t)}\Big)
    \Bigg\}
    =0,
    \label{discussion::change-point-method-1}
\end{align}
and similarly define $\hat\beta_+(t)$ and $\hat\beta_-(t)$.
Aside from \eqref{discussion::change-point-method-1}, another way to extend the model is to generalize the weight function, taking into account not only the closeness in time (reflected by $|t-T_\ell|$ for each $t$ and $T_\ell$), but also the closeness between their estimated parameters \citep{li2011multiscale}.
Both interesting extensions require further study.

\section*{Acknowledgments}
We thank the Editor Professor Jaakko Peltonen, Associate Editor and two referees for their insightful and constructive comments that led to significant improvements of this paper.
We also thank Professor Chenlei Leng and Professor Subhabrata Sen for helpful discussions.
Qu is supported by the National Natural Science Foundation of China (no: 11771171 and 12001219).
The supplementary material is available by sending emails to qulianq@ccnu.edu.cn.

\setlength{\itemsep}{-1.5pt}
\setlength{\bibsep}{0ex}
\normalem
\bibliography{reference}
\bibliographystyle{apa}

\section*{Appendix: Proofs of Theorem \ref{theorem-1}, Theorem \ref{theorem-2} and Proposition \ref{proposition-1}}

\begin{proof}[Proof of Theorem \ref{theorem-1}.]
We show the consistency of the estimators
by checking the conditions of Lemma 1 in the online Supplementary Material.
We first calculate
$$r=\| (F'(t,\theta^{(0)}(t))/(Nn))^{-1}F(t,\theta^{(0)}(t)/(Nn))\|_{\infty}.$$
Note that $-F'(t,\theta(t))/(Nn) \in \mathcal{L}_n(m,M).$
As shown in Lemma 6, we have
$$\frac{1}{Nn}F'_i(t,\theta(t)) \le \epsilon+\frac{1}{Nn}\mathbb{E}F'_i(t,\theta(t)),~i\in[2n-1],$$
where $\epsilon > 0$ is some constant.
In addition, by some arguments similar to the proof of Lemma 6,
we have
\begin{align*}
m&=\frac{1}{NnQ_{Nnh}}\sum_{\ell \in [N]}K_h(t-T_{\ell})\ge
(f(t)+\frac{1}{2}k_{21}f''(t)h^2+o(h^2))/(2nQ_{Nnh}),\\
M&=\frac{1}{4Nn}\sum_{\ell \in [N]}K_h(t-T_{\ell})\le
(f(t)+\frac{1}{2}k_{21}f''(t)h^2+o(h^2))/(2n).
\end{align*}
By Lemmas 2 and 3, we have
\begin{align*}
r=&\|(F'(t,\theta^{(0)}(t))/(Nn))^{-1}F(t,\theta^{(0)}(t)/(Nn)) \|_{\infty}\\
\le& \Bigg(\frac{2c_4(2n-1) M^2\|F(t,\theta^{(0)}(t)/(Nn))\|_{\infty}}{ m^3(n-1)^2}+\max_{i}\frac{|F_i(t,\theta^{(0)}(t))/(Nn)|}{\{ V(t,\theta^{(0)}(t)) \big\}_{i,i}}
+\frac{|F_{2n}(t,\theta^{(0)}(t))/(Nn)|}{\{ V(t,\theta^{(0)}(t)) \big\}_{2n,2n}}\Bigg)\\
\le&\Bigg(\frac{2c_4(2n-1)M^2}{m^3(n-1)^2}+\frac{c_5}{(2n-1) mN}\Bigg)\|F(t,\theta^{(0)}(t))/(Nn)\|_{\infty}\\
\le& O\Bigg(\frac{Q_{Nnh}^3+Q_{Nnh}}{f(t)+k_{21}f''(t)h^2/2+o(h^2)}\Bigg)\Big(\sqrt{C_1\frac{\log (Nnh)}{Nnh}}+C_2C_{Nn}h^2\Big)
\\
\le& O\Bigg(Q_{Nnh}^3\Big(\sqrt{C_1\frac{\log (Nnh)}{Nnh}}+C_2C_{Nn}h^2\Big)\Bigg),
\end{align*}
where $c_4$ and $c_5$ are some positive constants,
and
$$\{V(t,\theta(t))\}_{2n,2n}
=2\sum_{i\in [n]}\{V(t,\theta(t))\}_{i,i}-\sum_{i\in [n]}\sum_{j \in [2n-1]}\{V(t,\theta(t))\}_{i,j}.$$
The last equation holds under Condition \ref{condition-2}.
\noindent By Lemma 1, note that
\begin{align*}
\rho = \frac{c_1(2n-1)M^2H_1}{2m^3n^2}+\frac{H_2}{(n-1)m}\le O\Big(Q_{Nnh}^3+Q_{Nnh}\Big)
\end{align*}
with probability tending to one, where $c_1$ is some positive constant.
Therefore,
\begin{align*}
\rho r & = O\Big(Q_{Nnh}^3+Q_{Nnh}\Big) \times O\Big(Q_{Nnh}^3\big(\sqrt{C_1\frac{\log (Nnh)}{Nnh}}+C_2C_{Nn}h^2\big)\Big)\\
& = O\Big(Q_{Nnh}^6\sqrt{C_1\frac{\log (Nnh)}{Nnh}}+Q_{Nnh}^4\sqrt{C_1\frac{\log (Nnh)}{Nnh}}
+C_2C_{Nn}Q_{Nnh}^6h^2+C_2C_{Nn}Q_{Nnh}^4h^2\Big)\\
&=o(1).
\end{align*}
If $Q_{Nnh}=o\Big(1/(\sqrt{C_1\log (Nnh)/(Nnh)}+C_2C_{Nn}h^2)^{1/6}\Big),$
which is less than $1/2$ as $Nnh\rightarrow \infty.$
Consequently,
$$\sup_{t \in [a,b]}\| \hat \theta(t)-\theta^*(t) \|_{\infty}=O_p\Bigg(Q_{Nnh}^3\Big(\sqrt{C_1\frac{\log (Nnh)}{Nnh}}+C_2C_{Nn}h^2\Big)\Bigg)=o_p(1).$$
It completes the proof.	
\end{proof}
\begin{proof}[Proof of Theorem \ref{theorem-2}.]
By Theorem \ref{theorem-1}, we have
$$\hat b_{Nnh}:=\sup_{t \in [a,b]}\max_{i \in [2n-1]}\Big|\{\hat\theta(t)-\theta^*(t)\}_i\Big|=O_p\Big\{Q_{Nnh}^3(\sqrt{C_1\frac{\log (Nnh)}{Nnh}}+C_2C_{Nn}h^2)\Big\}.$$
Let $\hat \omega_{ij}(t)=\hat \alpha_i(t)+\hat \beta_j(t)-\alpha^*_i(t)-\beta^*_j(t)$.
By Taylor's expansion for $F(t,\hat\theta(t)),$ we have
\begin{align*}
F(t,\hat\theta(t))
=&F(t,\theta^*(t))+F'(t,\theta^*(t))(\hat {\mathbf\theta}(t)-\mathbf \theta^*(t))+\hat{ \gamma}(t,\theta^*(t))\\
=&F(t,\theta^*(t))+\frac{1}{Nn}F'(t,\theta^*(t))(Nn)(\hat {\mathbf\theta}(t)-\mathbf \theta^*(t))+\hat{ \gamma}(t,\theta^*(t))\\
=&F(t,\theta^*(t))-V(t,\theta^*(t))(Nn)(\hat {\mathbf\theta}(t)-\mathbf \theta^*(t))+\hat{ \gamma}(t,\theta^*(t)),
\end{align*}
which implies that
\begin{align*}
&\hat {\mathbf\theta}(t)-\mathbf \theta^*(t)-\frac{1}{Nn}V^{-1}(t,\theta^*(t))\mathbb{E}F(t,\theta^*(t))
\\
=&\frac{1}{Nn}V^{-1}(t,\theta^*(t))\big\{F(t,\theta^*(t))-\mathbb{E}F(t,\theta^*(t))\big\}
+\frac{1}{Nn}V^{-1}(t,\theta^*(t))\hat{\gamma}(t,\theta^*(t)).
\end{align*}
Here
$\hat{\gamma}(t,\theta^*(t))=\Big(\hat{\gamma}_1(t,\theta^*(t)),\hat{\gamma}_2(t,\theta^*(t)),\dots,\hat{\gamma}_{2n-1}(t,\theta^*(t))\Big)^\top,$
where
\begin{align*}
\hat{\gamma}_i(t,\theta^*(t))&=\sum_{\substack{j:1\leq j\leq n\\j\neq i}}\sum_{\ell \in [N]}K_h(t-T_{\ell})\hat{\gamma}_{ij}(t,\theta^*(t)),\quad i \in [n],\\
\hat{\gamma}_{n+j}(t,\theta^*(t))&=\sum_{\substack{i:1\leq i\leq n\\i\neq j}}\sum_{\ell \in [N]}K_h(t-T_{\ell})\hat{\gamma}_{ij}(t,\theta^*(t)),\quad j\in [n-1],
\end{align*}
with
$$\hat\gamma_{ij}(t,\theta^*(t))=-\frac{e^{\alpha^*_i(t)+\beta^*_j(t)+\hat\omega_{ij}(t)\phi_{ij}}(1-e^{\alpha^*_i(t)+\beta^*_j(t)+\phi_{ij}\hat\omega_{ij}(t)})}{2(1+e^{\alpha^*_i(t)+\beta^*_j(t)+\phi_{ij}\hat\omega_{ij}(t)})^3}\hat\omega^2_{ij}(t),$$
and $0 \le \phi_{ij} \le 1.$
Since $\big|e^x(1-e^x)/(1+e^x)^3\big| \le 1$, we have
\begin{align*}
|\hat{\gamma}_{ij}(t,\theta^*(t))| &\le |\hat\omega^2_{ij}(t)/2| \le 2 \hat b_{Nnh}^2,
\\
|\hat{\gamma}_i(t,\theta^*(t))|&=\sum_{\substack{j:1\leq j\leq n\\j\neq i}}\sum_{\ell \in [N]}K_h(t-T_{\ell})|\hat{\gamma}_{ij}(t,\theta^*(t))| \le 2N(n-1)\hat b_{Nnh}^2/h,
\quad i \in [n],
\\
|\hat{\gamma}_{n+j}(t,\theta^*(t))|&=\sum_{\substack{i:1\leq i\leq n\\i\neq j}}\sum_{\ell \in [N]}K_h(t-T_{\ell})|\hat{\gamma}_{ij}(t,\theta^*(t))| \le 2N(n-1)\hat b_{Nnh}^2/h,\quad j \in [n-1].
\end{align*}
By Lemmas 8-10 given below, we have
\begin{align*}
\Big\|\frac{1}{Nn}V^{-1}(t,\theta^*(t))\hat{\gamma}(t,\theta^*(t)\Big\|_{\max}=&o_p\big((Nnh)^{-1/2}\big),\\
\Big\|\frac{1}{Nn}V^{-1}(t,\theta^*(t))\mathbb{E}F(t,\theta^*(t))-k_{21}h^2\mu(t)
\Big\|_{\max}=&o_p\big((Nnh)^{-1/2}\big)
\end{align*}
and
\begin{align*}
&\frac{1}{Nn}V^{-1}(t,\theta^*(t))\big\{F(t,\theta^*(t))-\mathbb{E}F(t,\theta^*(t))\big\}\\
=&\frac{1}{Nn}\bar S(t,\theta^*(t))\big\{F(t,\theta^*(t))-\mathbb{E}F(t,\theta^*(t))\big\}
+o_p\big((Nnh)^{-1/2}\big),
\end{align*}
where $\mu(t)$ is defined in Lemma 9.
That is,
\begin{align*}
&\sqrt{Nnh}\Big[\{\hat {\mathbf\theta}(t)
-\mathbf \theta^*(t)\}_i-k_{21}h^2\mu_i(t)\Big]
\\
=&\sqrt{Nnh}\Big[\big\{\frac{1}{Nn}\bar S(t,\theta^*(t))(F(t,\theta^*(t))-\mathbb{E}F(t,\theta^*(t)))\big\}_i\Big]+o_p(1),~~i \in [2n-1].
\end{align*}
For any $i \in [n],$ it can be shown
\begin{align*}
&\sum_{\substack{j:1\leq j\leq n\\j\neq i}}\sum_{\ell \in [N]}\mathbb{E}\Big|K_h(t-T_{\ell})\big(a_{ij}(T_{\ell})-u_{ij}(t)\big)-\mathbb{E}\big\{K_h(t-T_{\ell})\big(a_{ij}(T_{\ell})-u_{ij}(t)\big)\big\}\Big|^3\\
\le &\sum_{\substack{j:1\leq j\leq n\\j\neq i}}\sum_{\ell \in [N]}
    \mathbb{E}\Big|K_h(t-T_{\ell})\big(a_{ij}(T_{\ell})-u_{ij}(t)\big)-\mathbb{E}\big\{K_h(t-T_{\ell})\big(a_{ij}(T_{\ell})-u_{ij}(t)\big)\big\}\Big|^2/h\\
    =&O\big(Nn/h^2\big),
\end{align*}
and by Lemma 5, we have
$$
\text{Var}\Bigg\{\sum_{\substack{j:1\leq j\leq n\\j\neq i}}\sum_{\ell \in [N]}K_h(t-T_{\ell})\big(a_{ij}(T_{\ell})-u_{ij}(t)\big)\Bigg\}=O(Nn/h).
$$
Therefore,
{\small
\begin{align*}
&\frac
        {\sum_{j\in [n],j\neq i}\sum_{\ell \in [N]}\mathbb{E}\Big|K_h(t-T_{\ell})\big(a_{ij}(T_{\ell})-u_{ij}(t)\big)-\mathbb{E}\big\{K_h(t-T_{\ell})\big(a_{ij}(T_{\ell})-u_{ij}(t)\big)\big\}\Big|^3}
        {\Big(\text{Var}\Big\{\sum_{j\in[n],j\neq i}\sum_{\ell \in [N]}K_h(t-T_{\ell})\big(a_{ij}(T_{\ell})-u_{ij}(t)\big)\Big\}\Big)^{3/2}}\\
    \le& O\Big(\frac{1}{\sqrt{Nnh}}\Big).
\end{align*}}
Similarly, for any $j \in [n-1],$
{\small
\begin{align*}
&\frac{\sum_{i\in[n],i\neq j}\sum_{\ell \in [N]}\mathbb{E}\Big|K_h(t-T_{\ell})\big(a_{ij}(T_{\ell})-u_{ij}(t)\big)-\mathbb{E}\big\{K_h(t-T_{\ell})\big(a_{ij}(T_{\ell})-u_{ij}(t)\big)\big\}\Big|^3}
{\Big(\text{Var}\Big\{\sum_{i\in[n],i\neq j}\sum_{\ell \in [N]}K_h(t-T_{\ell})\big(a_{ij}(T_{\ell})-u_{ij}(t)\big)\Big\}\Big)^{3/2}}\\
\le &O\Big(\frac{1}{\sqrt{Nnh}}\Big).
\end{align*}}
This implies that the condition for the Lyapunov's central limit theorem holds under Conditions \ref{condition-1}-\ref{condition-2},
$C^2_{Nn}Nnh^5Q^2_{Nnh}=o(1),$ $Nh \to \infty$ and
$$Q_{Nnh}=o\Big[1/\big\{(Nnh)^{1/18}(\sqrt{C_1\log (Nnh)/(Nnh)}+C_2C_{Nn}h^2)^2\big\}\Big].$$
\begin{sloppypar}
Thus for any fixed $1 \le p \le 2n-1$ and $t\in[a,b],$
$\sqrt{Nnh}\Big[\{\hat{\theta}_1(t) - \theta_1^*(t),
\ldots,\hat{\theta}_{p}(t) - \theta_{p}^*(t)\}^\top-k_{21}h^2\{\mu_1(t),\dots,\mu_p(t)\}^\top\Big]$
converges in distribution to a $p$-dimensional multivariate normal random vector with mean zero and covariance $k_{02}\Sigma(t,\theta^*(t)).$
\end{sloppypar}
A direct calculation yields
\begin{align*}
\text{Cov}\Bigg(\frac{\sqrt{Nnh}}{Nn}\bar S(t,\theta^*(t))\big\{F(t,\theta^*(t))-\mathbb{E}F(t,\theta^*(t))\big\}\Bigg)
=\bar S(t,\theta^*(t))\Omega(t,\theta^*(t))\bar S(t,\theta^*(t))^\top,
\end{align*}
where $\Omega(t,\theta^*(t))=\text{Cov}\big\{\sqrt{h /(Nn)}F(t,\theta^*(t))\big\}.$\\
We next calculate each element of $\Omega(t,\theta^*(t)).$
For $i \in [n],$ by Lemma 5,
the $i$th diagonal element of $\Omega(t,\theta^*(t))$ is
\begin{align*}
\big\{\Omega(t,\theta^*(t))\big\}_{i,i}=&\frac{h}{Nn}\text{Var}\big\{F_i(t,\theta^*(t))\big\}\\
=&\frac{1}{n}\sum_{\substack{j:1\leq j\leq n\\j\neq i}}\Big[k_{02}f(t)u_{ij}(t)(1-u_{ij}(t))
+k_{12}h\tilde{\psi}_{ij}(t)+o(C_{Nn}h)\Big]
\\
=&\frac{1}{n}\sum_{\substack{j:1\leq j\leq n\\j\neq i}}k_{02}f(t)u_{ij}(t)(1-u_{ij}(t))
+\frac{1}{n}\sum_{\substack{j:1\leq j\leq n\\j\neq i}}k_{12}h\tilde{\psi}_{ij}(t)
+o(C_{Nn}h)
\\
=&k_{02}f(t)\big\{V_0(t,\theta^*(t))\big\}_{i,i}
+k_{12}h\big\{\tilde{V}(t,\theta^*(t))\big\}_{i,i}
+o(C_{Nn}h)
\\
=&k_{02}\big\{\bar V(t,\theta^*(t))\big\}_{i,i}
+k_{12}h\big\{\tilde{V}(t,\theta^*(t))\big\}_{i,i}
+o(C_{Nn}h)+O(Nnh^5C_{Nn}^2),
\end{align*}
where $k_{12}:=\int vK^2(v)dv,$
$\tilde{\psi}_{ij}(t):=
u_{ij}(t)(1-u_{ij}(t))f'(t)+f(t)u'_{ij}(t)(1-2u_{ij}(t)),$ and
$$\big\{\tilde{V}(t,\theta^*(t))\big\}_{i,i}
=\frac{1}{n}\sum_{\substack{j:1\leq j\leq n\\j\neq i}}\tilde{\psi}_{ij}(t).
$$
For any $j\in[n-1],$
\begin{align*}
\big\{\Omega(t,\theta^*(t))\big\}_{n+j,n+j}=&\frac{h}{Nn}\text{Var}\big\{F_{n+j}(t,\theta^*(t))\big\}
\\
=&k_{02}\big\{\bar V(t,\theta^*(t))\big\}_{n+j,n+j}
+k_{12}h\big\{\tilde{V}(t,\theta^*(t))\big\}_{n+j,n+j}\\
&+o\big(C_{Nn}h\big)+O\big(Nnh^5C^2_{Nn}\big),
\end{align*}
where
$\big\{\tilde{V}(t,\theta^*(t))\big\}_{n+j,n+j}
=\frac{1}{n}\sum_{\substack{i:1\leq i\leq n\\i\neq j}}\tilde{\psi}_{ij}(t).
$
In addition,
\begin{align*}
\big\{\Omega(t,\theta^*(t))\big\}_{i,j}=&\frac{h}{Nn}\text{Cov}\Big\{F_{i}(t,\theta^*(t)),F_{j}(t,\theta^*(t))\Big\}
=0,~i \in [n],~j \in [n],~i\neq j,\\
\big\{\Omega(t,\theta^*(t))\big\}_{n+i,n+j}=&\frac{h}{Nn}\text{Cov}\Big\{F_{n+i}(t,\theta^*(t)),
F_{n+j}(t,\theta^*(t))\Big\}
=0,~i,j \in[n-1],~i\neq j,\\
\big\{\Omega(t,\theta^*(t))\big\}_{i,n+i}=&\frac{h}{Nn}\text{Cov}\Big\{F_{i}(t,\theta^*(t)),
F_{n+i}(t,\theta^*(t))\Big\}
=0,~i\in[n-1],\\
\big\{\Omega(t,\theta^*(t))\big\}_{n+i,i}=&\frac{h}{Nn}\text{Cov}\Big\{F_{n+i}(t,\theta^*(t)),F_{i}(t,\theta^*(t))\Big\}
=0,~i\in[n-1].
\end{align*}
For any $i\in[n],j\in[n-1],i\neq j,$
\begin{align*}
\big\{\Omega(t,\theta^*(t))\big\}_{i,n+j}=&\frac{h}{Nn}\text{Cov}\Big\{F_{i}(t,\theta^*(t)),F_{n+j}(t,\theta^*(t))\Big\}
\\
=&k_{02}\big\{\bar V(t,\theta^*(t))\big\}_{i,n+j}
+k_{12}h\big\{\tilde{V}(t,\theta^*(t))\big\}_{i,n+j}
+o\big(C_{Nn}h/n\big)+O\big(Nh^5C^2_{Nn}\big),
\\
\big\{\Omega(t,\theta^*(t))\big\}_{n+j,i}=&\frac{h}{Nn}\text{Cov}\Big\{F_{n+j}(t,\theta^*(t)),
F_{i}(t,\theta^*(t))\Big\}
\\
=&k_{02}\big\{\bar V(t,\theta^*(t))\big\}_{n+j,i}
+k_{12}h\big\{\tilde{V}(t,\theta^*(t))\big\}_{n+j,i}
+o\big(C_{Nn}h/n\big)+O\big(Nh^5C^2_{Nn}\big),
\end{align*}
where
$
\big\{\tilde{V}(t,\theta^*(t))\big\}_{n+j,i}
=\big\{\tilde{V}(t,\theta^*(t))\big\}_{i,n+j}
=\tilde\psi_{ij}(t).
$
Now we establish
\begin{align*}
&\Big\|\bar S(t,\theta^*(t))\Omega(t,\theta^*(t))\bar S(t,\theta^*(t))-k_{02}\bar S(t,\theta^*(t))\bar{V}(t,\theta^*(t))\bar S(t,\theta^*(t))\Big\|_{\max}=o(1).
\end{align*}
For this, define $\tilde{V}(t,\theta^*(t))\in \mathbb{R}^{(2n-1)\times (2n-1)}$
and its $(i,j)$th element is $\big\{\tilde{V}(t,\theta^*(t))\big\}_{i,j}.$
Then, we have
$$
\Omega(t,\theta^*(t))=k_{02}\bar V(t,\theta^*(t))
+k_{12}h\tilde{V}(t,\theta^*(t))
+\Big[o\big(C_{Nn}h\big)+O\big(Nnh^5C^2_{Nn}\big)\Big]\mathbb{D}.
$$
Here $\mathbb{D}=(d_{ij})\in\mathbb{R}^{(2n-1)\times (2n-1)},$
and $d_{ii}=1~(i\in[n]),$
$d_{n+j,n+j}=1~(j\in[n-1]),$
$d_{i,n+j}=d_{n+i,j}=1/n~(i\in [n],~j\in [n-1],~i\neq j),$
and $d_{ij}=0~\text{otherwise}.$
We first show
$$
\Big\|\bar S(t,\theta^*(t))\tilde{V}(t,\theta^*(t))\bar S(t,\theta^*(t))\Big\|_{\max}=O(C_{Nn}Q_{Nnh}^2).
$$
Write
\begin{align*}
\bar S(t,\theta^*(t))\tilde{V}(t,\theta^*(t))=\Big(\sum_{k=1}^{2n-1}\big\{\bar S(t,\theta^*(t))\big\}_{ik}\tilde{\psi}_{kj}(t,\theta^*(t))\Big).
\end{align*}
Specifically, for $i=j \in [n],$ we have
\begin{align*}
\Big|\sum_{k=1}^{2n-1}\big\{\bar S(t,\theta^*(t))\big\}_{i,k}\tilde{\psi}_{kj}(t,\theta^*(t))\Big|
\le& \frac{|\tilde{\psi}_{ii}(t,\theta^*(t))|}{\{\bar V(t,\theta^*(t))\}_{i,i}}
+\frac{|\tilde{\psi}_{ii}(t,\theta^*(t))-\sum_{k=1,k\neq j}^{2n-1}\tilde{\psi}_{kj}(t,\theta^*(t))|}{\{\bar V(t,\theta^*(t))\}_{2n,2n}}
\\
=&\frac{|\tilde{\psi}_{ii}(t,\theta^*(t))|}{\{\bar V(t,\theta^*(t))\}_{i,i}}+\frac{|\tilde{\psi}_{2n,j}(t,\theta^*(t))|}{\{\bar V(t,\theta^*(t))\}_{2n,2n}}
\\
\le &\frac{C_2C_{Nn}Q_{Nnh}}{f(t)}\Big(1+\frac{1}{n}\Big),
\end{align*}
where $\{\bar V(t,\theta^*(t))\}_{2n,2n}
=2\sum_{i=1}^{n}\{\bar V(t,\theta^*(t))\}_{i,i}-\sum_{i=1}^{n}\sum_{j=1}^{2n-1}\{\bar V(t,\theta^*(t))\}_{i,j}$
and \begin{align*}
\tilde{\psi}_{2n,i}(t) =& \tilde{\psi}_{i,2n}(t) := \tilde{\psi}_{ii}(t) - \sum_{j=1,j \neq i}^{2n-1} \tilde{\psi}_{ij}(t),~~i\in[2n-1].
\end{align*}
For $i=j=n+1,\dots,2n-1,$ we have
\begin{align*}
\Big|\sum_{k=1}^{2n-1}\big\{\bar S(t,\theta^*(t))\big\}_{i,k}\tilde{\psi}_{kj}(t,\theta^*(t))\Big|
\le& \frac{|\tilde{\psi}_{ii}(t,\theta^*(t))|}{\{\bar V(t,\theta^*(t))\}_{i,i}}
+\frac{|\tilde{\psi}_{ii}(t,\theta^*(t))-\sum_{k=1,k\neq j}^{2n-1}\tilde{\psi}_{kj}(t,\theta^*(t))|}{\{\bar V(t,\theta^*(t))\}_{2n,2n}}
\\
=&\frac{|\tilde{\psi}_{ii}(t,\theta^*(t))|}{\{\bar V(t,\theta^*(t))\}_{i,i}}
\le  C_2C_{Nn}Q_{Nnh}/f(t).
\end{align*}
For $i\in[n],~j \in[n-1],~i\neq j,$ we have
\begin{align*}
\Big|\sum_{k=1}^{2n-1}\big\{\bar S(t,\theta^*(t))\big\}_{i,k}\tilde{\psi}_{kj}(t,\theta^*(t))\Big|
=& \frac{|\tilde{\psi}_{jj}(t,\theta^*(t))-\sum_{k=1, k\neq j}^{2n-1}\tilde{\psi}_{kj}(t,\theta^*(t))|}{\{\bar V(t,\theta^*(t))\}_{2n,2n}}\\
\le &  C_2C_{Nn}Q_{Nnh}/(nf(t)).
\end{align*}
For $i\in [n],~j=n+1,\dots,2n-1,~j\neq n+i,$ we have
\begin{align*}
\Big|\sum_{k=1}^{2n-1}\big\{\bar S(t,\theta^*(t))\big\}_{i,k}\tilde{\psi}_{kj}(t,\theta^*(t))\Big|
= \frac{|\tilde{\psi}_{ij}(t,\theta^*(t))|}{\{\bar V(t,\theta^*(t))\}_{i,i}}
\le C_2C_{Nn}Q_{Nnh}/(nf(t)).
\end{align*}
For $i=n+1,\dots,2n-1,j\in[n],$ we have
\begin{align*}
\Big|\sum_{k=1}^{2n-1}\big\{\bar S(t,\theta^*(t))\big\}_{i,k}\tilde{\psi}_{kj}(t,\theta^*(t))\Big|
\le & C_2C_{Nn}Q_{Nnh}/(nf(t)).
\end{align*}
Otherwise, we have
\begin{align*}
\Big|\sum_{k=1}^{2n-1}\big\{\bar S(t,\theta^*(t))\big\}_{i,k}\tilde{\psi}_{kj}(t,\theta^*(t))\Big| =0.
\end{align*}
This implies that
\begin{align*}
\Big\|\bar S(t,\theta^*(t))\tilde{V}(t,\theta^*(t))-\frac{C_2C_{Nn}Q_{Nnh}}{f(t)}\mathbbm{1}\Big\|_{\max}\le 2C_2C_{Nn}Q_{Nnh}/(nf(t)),
\end{align*}
where $\mathbb{1}\in\mathbb{R}^{(2n-1)\times(2n-1)}$ is the identity matrix.
Thus, we have
\begin{align*}
&\Big\|\bar S(t,\theta^*(t))\tilde{V}(t,\theta^*(t))\bar S(t,\theta^*(t))\Big\|_{\max}\\
=&\Big\|\Big(\bar S(t,\theta^*(t))\tilde{V}(t,\theta^*(t))-\frac{C_2C_{Nn}Q_{Nnh}}{f(t)}\mathbbm{1}\Big)\bar S(t,\theta^*(t))
+\frac{C_2C_{Nn}Q_{Nnh}}{f(t)}\bar S(t,\theta^*(t))\Big\|_{\max}\\
\le& \Big\|\Big(\bar S(t,\theta^*(t))\tilde{V}(t,\theta^*(t))
-\frac{C_2C_{Nn}Q_{Nnh}}{f(t)}\mathbbm{1}\Big)\bar S(t,\theta^*(t))\Big\|_{\max}+\frac{C_2C_{Nn}Q_{Nnh}}{f(t)}\|\bar S(t,\theta^*(t))\Big\|_{\max}\\
\le &\Big\|\bar S(t,\theta^*(t))\tilde{V}(t,\theta^*(t))
-\frac{C_2C_{Nn}Q_{Nnh}}{f(t)}\mathbbm{1}\Big\|_{\max} \times \max_{1 \le j \le 2n-1}\sum_{k=1}^{2n-1}\Big|\big\{\bar S(t,\theta^*(t)) \big\}_{k,j}\Big|\\
&+\frac{C_2C_{Nn}Q_{Nnh}}{f(t)}\Big\|\bar S(t,\theta^*(t))\Big\|_{\max}\\
=&O(C_{Nn}Q^2_{Nnh}).
\end{align*}
That is,
\begin{align*}
\Big\|\bar S(t,\theta^*(t))\tilde{V}(t,\theta^*(t))\bar S(t,\theta^*(t))\Big\|_{\max}=O(C_{Nn}Q^2_{Nnh}).
\end{align*}
Similarly, we can show that
$$
\Big[o\big(C_{Nn}h\big)+O\big(Nnh^5C^2_{Nn}\big)\Big]
\Big\|\bar S(t,\theta^*(t))\mathbb{D}\bar S(t,\theta^*(t))\Big\|_{\max}
=O(hC_{Nn}Q^2_{Nnh}).
$$
These facts imply that
\begin{align*}
&\Big\|\bar S(t,\theta^*(t))\Omega(t,\theta^*(t))\bar S(t,\theta^*(t))
-k_{02}\bar S(t,\theta^*(t))\bar{V}(t,\theta^*(t))\bar S(t,\theta^*(t))\Big\|_{\max}\\
\le & \Big\|\bar S(t,\theta^*(t))\tilde{V}(t,\theta^*(t))\bar S(t,\theta^*(t))\Big\|_{\max}
+\Big[o\big(C_{Nn}h\big)+O\big(Nnh^5C^2_{Nn}\big)\Big]\Big\|\bar S(t,\theta^*(t))\mathbb{D}\bar S(t,\theta^*(t))\Big\|_{\max}\\
=&O(hC_{Nn}Q^2_{Nnh}).
\end{align*}
Thus, if $hC_{Nn}Q^2_{Nnh}=o(1),$ then we have
\begin{align*}
&\Big\|\bar S(t,\theta^*(t))\Omega(t,\theta^*(t))\bar S(t,\theta^*(t))
-k_{02}\bar S(t,\theta^*(t))\bar{V}(t,\theta^*(t))\bar S(t,\theta^*(t))\Big\|_{\max}=o(1).
\end{align*}
This completes the proof.
\end{proof}

\begin{proof}[Proof of Proposition \ref{proposition-1}]
As in the proof of Lemma 10, we have
\begin{align*}
S(t,\theta(t))=\frac{1}{\hat f(t)}S_0(t,\theta(t))~~\text{and}~~
\bar S(t,\theta(t))=\frac{1}{f(t)}S_0(t,\theta(t)).
\end{align*}
Then, a direct calculation yields
\begin{align*}
&\Big\|\hat\Sigma(t,\hat\theta(t))-\Sigma(t,\theta^*(t))\Big\|_{\max}
\\
=&\Big\|S(t,\hat\theta(t))\hat V(t,\hat\theta(t)) S(t,\hat\theta(t))^\top-k_{02}\bar S(t,\theta^*(t))\bar V(t,\theta^*(t))\bar S(t,\theta^*(t))^\top\Big\|_{\max}
\\
=&\Big\|\frac{1}{\hat{f}^2(t)}S_0(t,\hat\theta(t))\hat V(t,\hat\theta(t)) S_0(t,\hat\theta(t))^\top-\frac{1}{f^2(t)}k_{02}S_0(t,\hat\theta(t))\bar V(t,\theta^*(t)) S_0(t,\hat\theta(t))^\top\Big\|_{\max}
\\
\le&\Big\|\Big(\frac{1}{\hat{f}^2(t)}-\frac{1}{f^2(t)}\Big)S_0(t,\hat\theta(t))\hat V(t,\hat\theta(t)) S_0(t,\hat\theta(t))^\top\Big\|_{\max}
\\
&+\Big\|\frac{1}{f^2(t)}S_0(t,\hat\theta(t))\big(\hat V(t,\hat\theta(t))-k_{02}\bar V(t,\theta^*(t))\big) S_0(t,\hat\theta(t))^\top\Big\|_{\max}.
\end{align*}
We next show that
\begin{align*}
\Big\|\Big(\frac{1}{\hat{f}^2(t)}-\frac{1}{f^2(t)}\Big)S_0(t,\hat\theta(t))\hat V(t,\hat\theta(t))S_0(t,\hat\theta(t))^\top\Big\|_{\max}
=&O_p\Bigg(Q_{Nnh}\Big\{\sqrt{\frac{\log(1/h)}{Nh}}+h^2\Big\}\Bigg),\\
f^{-2}(t)\Big\|S_0(t,\hat\theta(t))\big(\hat V(t,\hat\theta(t))-k_{02}\bar V(t,\theta^*(t))\big) S_0(t,\hat\theta(t))^\top\Big\|_{\max}
=&O_p\Bigg(Q_{Nnh}\Big\{\sqrt{\frac{\log(1/h)}{Nh}}+h^2\Big\}\Bigg).
\end{align*}
Write
\begin{align*}
S_0(t,\hat\theta(t))\hat V(t,\hat\theta(t))
=\Big(\sum_{k\in[2n-1]}\big\{ S_0(t,\hat\theta(t))\big\}_{i,k}\big\{ \hat V(t,\hat\theta(t))\big\}_{k,j}\Big).
\end{align*}
Then, if $i=j \in [n],$ we have
\begin{align*}
&\Big|\sum_{k\in[2n-1]}\big\{ S_0(t,\hat\theta(t))\big\}_{i,k}\big\{ \hat V(t,\hat\theta(t))\big\}_{k,j}\Big|
\\
\le& \frac{|\big\{ \hat V(t,\hat\theta(t))\big\}_{i,i}|}{ \big\{ V_0(t,\hat\theta(t))\big\}_{i,i}}
+\frac{|\big\{ \hat V(t,\hat\theta(t))\big\}_{i,i}-\sum_{k=1,k\neq j}^{2n-1}\big\{ \hat V(t,\hat\theta(t))\big\}_{k,j}|}{ \big\{ V_0(t,\hat\theta(t))\big\}_{2n,2n}}\\
=&\frac{|\big\{ \hat V(t,\hat\theta(t))\big\}_{i,i}|}{ \big\{ V_0(t,\hat\theta(t))\big\}_{i,i}}
+\frac{|\big\{ \hat V(t,\hat\theta(t))\big\}_{2n,j}|}{ \big\{ V_0(t,\hat\theta(t))\big\}_{2n,2n}}.
\end{align*}
It can be shown that with probability tending to 1,
$$|\big\{ \hat V(t,\hat\theta(t))\big\}_{i,i}|
\le 2|\mathbb{E}\big\{ \hat V(t,\theta^*(t))\big\}_{i,i}|.
$$
Then, we have
\begin{align*}
\Big|\sum_{k=1}^{2n-1}\big\{ S_0(t,\hat\theta(t))\big\}_{i,k}\big\{ \hat V(t,\hat\theta(t))\big\}_{k,j}\Big|=O_p(1+1/n).
\end{align*}
Similarly, we can show
\begin{align*}
|\sum_{k=1}^{2n-1}\big\{ S_0(t,\hat\theta(t))\big\}_{i,k}\big\{ \hat V(t,\hat\theta(t))\big\}_{k,j}|
=& O_p(1),~~i=j=n+1,\dots,2n-1,\\
|\sum_{k=1}^{2n-1}\big\{ S_0(t,\hat\theta(t))\big\}_{i,k}\big\{ \hat V(t,\hat\theta(t))\big\}_{k,j}|
= &O_p(1/n),~~i\in[n],j \in[n-1],i\neq j,\\
|\sum_{k=1}^{2n-1}\big\{ S_0(t,\hat\theta(t))\big\}_{i,k}\big\{ \hat V(t,\hat\theta(t))\big\}_{k,j}|
= &O_p(1/n),~~i\in [n],j=n+1,\dots,2n-1,j\neq n+i,\\
|\sum_{k=1}^{2n-1}\big\{ S_0(t,\hat\theta(t))\big\}_{i,k}\big\{ \hat V(t,\hat\theta(t))\big\}_{k,j}|
= & O(1/n),~~i=n+1,\dots,2n-1,j\in[n],\\
|\sum_{k=1}^{2n-1}\big\{ S_0(t,\hat\theta(t))\big\}_{i,k}\big\{ \hat V(t,\hat\theta(t))\big\}_{k,j}|=&0,~~\text{otherwise}.
\end{align*}
These facts imply that
\begin{align*}
\big\| S_0(t,\hat\theta(t))\hat V(t,\hat\theta(t))-C_4\mathbbm{1}\big\|_{\max}=O(1/n).
\end{align*}
with probability tending to 1 for some constant $C_4.$
By Theorem 6 in \cite{masry1996multivariate}, we have
$$\sup_{t \in[a,b]}|\hat f(t) -f(t)|=O\Big(\Big\{\frac{\log(1/h)}{Nh}\Big\}^{1/2}+h^2\Big).$$
This implies that
\begin{align*}
&\Big\|\frac{1}{\hat{f}^2(t)}-\frac{1}{f^2(t)}\Big|\times\Big\|S_0(t,\hat\theta(t))\hat V(t,\hat\theta(t)) S_0(t,\hat\theta(t))^\top\Big\|_{\max}
\\
\le&\Big|\frac{1}{\hat{f}^2(t)}-\frac{1}{f^2(t)}\Big| \times
\Big\|\big( S_0(t,\hat\theta(t))\hat V(t,\hat\theta(t))-C_4\mathbbm{1}\big)S_0(t,\hat\theta(t))
+C_4\mathbbm{1} S_0(t,\hat\theta(t))\Big\|_{\max}\\
\le& \Big|\frac{1}{\hat{f}^2(t)}-\frac{1}{f^2(t)}\Big| \times
\Big\|\big( S_0(t,\hat\theta(t))\hat V(t,\hat\theta(t))
-C_4\mathbbm{1}\big)S_0(t,\hat\theta(t))\Big\|_{\max}\\
&+\Big|\frac{1}{\hat{f}^2(t)}-\frac{1}{f^2(t)}\Big| \times C_4\Big\|S_0(t,\hat\theta(t))\Big\|_{\max}
\\
\le &\Big|\frac{1}{\hat{f}^2(t)}-\frac{1}{f^2(t)}\Big| \times
\Big\| S_0(t,\hat\theta(t))\hat V(t,\hat\theta(t))-C_4\mathbbm{1}\Big\|_{\max} \times \max_{j \in [2n-1]}\sum_{k=1}^{2n-1}\Big|\big\{S_0(t,\hat\theta(t)) \big\}_{k,j}\Big|
\\
&+\Big|\frac{1}{\hat{f}^2(t)}-\frac{1}{f^2(t)}\Big| \times
C_4\|S_0(t,\hat\theta(t))\Big\|_{\max}
\\
=&O_p\Big(Q_{Nnh}\big\{\sqrt{\frac{\log(1/h)}{Nh}}+h^2\big\}\Big).
\end{align*}
By some arguments similar to the above, we can show
\begin{align*}
\Big\|\frac{1}{f^2(t)}S_0(t,\hat\theta(t))\big(\hat V(t,\hat\theta(t))-k_{02}\bar V(t,\theta^*(t))\big) S_0(t,\hat\theta(t))^\top\Big\|_{\max}
=O_p\Big(Q_{Nnh}\bigg\{\sqrt{\frac{\log(1/h)}{Nh}}+h^2\bigg\}\Big).
\end{align*}
Thus, if $Nh \to \infty,$ then
$$\|\hat\Sigma(t,\hat\theta(t))-\Sigma(t,\theta^*(t))\|_{\max}=o_p(1)$$
under the conditions of Theorems \ref{theorem-1} and \ref{theorem-2}.
It completes the proof.
\end{proof}

\end{document}